\newcommand{\comment}[1]{}
\newtheorem{assumption}{Assumption}[section]
\newtheorem{definition}[assumption]{Definition}
\newtheorem{lemma}[assumption]{Lemma}
\newtheorem{proposition}[assumption]{Proposition}
\newtheorem{theorem}[assumption]{Theorem}
\newtheorem{corollary}[assumption]{Corollary}
\newtheorem{remark}[assumption]{Remark}
\newtheorem{example}[assumption]{Example}
\begin{document}

\title{A Law of Large Numbers for Limit Order Books\thanks{Most of this work was done while the first author was visiting CEREMADE; grateful acknowledgment is made for hospitality and financial support. Both authors acknowledge financial support through the SFB 649 ``Economic Risk'' and Deutsche Bank AG. We thank seminar participants at various institutions for valuable comments and suggestions.}}
\author{Ulrich Horst\thanks{Corresponding author: horst@math.hu-berlin.de}, Michael Paulsen \\Institut f\"{u}r Mathematik\\
Humboldt-Universit\"{a}t zu Berlin\\
Unter den Linden 6\\
10099 Berlin
}
\date{First Version: January 22nd, 2013 \\ This Version: November 3rd, 2014}
\maketitle
\begin{abstract}
\noindent We define a stochastic model of a two-sided limit order book in terms of its key quantities \textit{best bid [ask] price} and the \textit{standing buy [sell] volume density}. For a simple scaling of the discreteness parameters, that keeps the expected volume rate over the considered price interval invariant, we prove a limit theorem. The limit theorem states that, given regularity conditions on the random order flow, the key quantities converge in probability to a tractable continuous limiting model. In the limit model the buy and sell volume densities are given as the unique solution to first-order linear hyperbolic PDEs, specified by the expected order flow parameters.
\end{abstract}

\vspace{4mm}

\noindent {\bf Key words:} limit order book, scaling limit, averaging principle, queueing theory

\vspace{2mm}

\noindent {\bf AMS Subject Classification:} 60B11, 90B22, 91B70

\vspace{6mm}


\thispagestyle{empty}

\renewcommand{\baselinestretch}{1.15}\normalsize

\section*{Introduction}
A significant part of financial transactions is nowadays carried out over an electronic limit order book (LOB) where unexecuted orders are stored and displayed while awaiting execution. From a mathematical perspective, LOBs are high-dimensional complex priority queueing systems. Incoming limit orders can be placed at many different price levels while incoming market orders are matched against standing limit orders according to a set of priority rules. Almost all exchanges give priority to orders submitted at more competitive prices (``price priority'') and displayed orders have priority over hidden orders at the same level (``display priority''). Orders with the same display status and submission price are usually served on a first-come-first-serve basis. The inherent complexity of limit order books renders their mathematical analysis challenging. In this paper, we propose a simple and transparent queueing theoretic LOB model whose dynamics converges to a coupled ODE:PDE system after suitable scaling. The limiting system can be solved in closed form.

There is a significant economic and econometric literature on LOBs including Biais at al. \cite{Biais}, Easley and O'Hara \cite{EasleyOHara}, Foucault et al. \cite{Foucault}, Gloston and Milgrom \cite{GlostenMilgrom}, Parlour \cite{Parlour98}, Rosu \cite{Rosu} and many others that puts a lot of emphasis on the realistic modeling of the working of the LOB, and on its interaction with traders' order submission strategies. There are, however, only few papers that analyze order flows and the resulting LOB dynamics in a mathematically rigorous manner. Among the first was the one by Kruk \cite{Kruk2003}. He studied a queueing theoretic model of a transparent double auction in continuous time. The microstructure may be interpreted as that of a simple limit order book, if one considers a buyer as a buy limit order and a seller as a sell limit order. At the auction, there are $N\in\mathbb{N}$ possible prices of the security and thus 2N different classes of customers (N classes of buyers and N classes of sellers). Kruk established diffusion and fluid limits. The diffusion limit states that for N=2 possible prices, the scaled number of outstanding buy orders at the lower price and the scaled number of outstanding sell orders at the higher price converge weakly to a semimartingale reflected two-dimensional Brownian motion in the first quadrant. The fluid limit is such that various LOB quantities converge weakly to affine functions of time. 

Since information on the best bid and ask price and volume at different price levels is available to all market participants, it is natural to assume that the order arrival dynamics depends on the current state of the order book. The feature of conditional state-dependence was considered by Cont et al. \cite{ContStoikovTalreja}, who proposed a continuous time stochastic model with a finite number of possible securities prices where events (buy/sell market order arrival, buy/sell limit order placement and cancelation) are modeled using independent Poisson processes. The arrival rates of limit orders depend on the distance to the best bid and ask price in a power-law fashion. The authors were able to show that the state of the order book, defined as a vector containing all volumes in the order book at different prices, is an ergodic Markov process. Using this fact, several key quantities such as  the probabilities of a mid price move, a move in the bid price before a move in the ask price, or the probability of volume execution before a price move could be computed and benchmarked against real data without taking scaling limits.

Cont and de Larrard \cite{ContDeLarrard2012a} considered a scaling limit in the diffusion sense for a Markovian limit order market in which the state is represented by the best bid and ask price and the queue length, i.e. the number of orders at the best bid and ask price, respectively. With this reduction of the state space, under symmetry conditions on the spread and stationarity assumptions on the queue lengths, it was shown that the price converges to a Brownian motion with volatility specified by the model parameters in the diffusion limit. Very recently, Cont and de Larrard \cite{ContDeLarrard2012b} studied the reduced state space under weaker conditions and proved a refined diffusion limit by showing that under heavy traffic conditions the bid and ask queue lengths are given by a two-dimensional Brownian motion in the first quadrant with reflection to the interior at the boundaries, similar to the diffusion limit result for $N=2$ prices in Kruk \cite{Kruk2003}.

In the framework analyzed by Abergel and Jedidi \cite{AbergelJedidi2011}, the volumes of the order book at different distances to the best bid and ask were modeled as a finite dimensional continuous time Markov chain and the order flow as independent Poisson processes. Under the assumption that the width of the spread is constant in time, using Foster-Lyapunov stability criteria for the Markov chain, the authors proved ergodicity of the order book and a diffusion limit for the mid price. In the diffusion limit, the mid price is a Brownian motion with constant volatility given by the averaged price impact of the model events on the order book.

In this paper, we prove a law of large numbers result for the whole book (prices and volumes). Specifically, we propose a continuous-time model of a two-sided state-dependent order book with random order flow and cancelation and countably many submission price levels. The buy and the sell side volumes are coupled through the best bid and ask price dynamics.\footnote{The coupling of the buy and sell sides through prices is essential for the limiting volume dynamics to follow a PDE. It is \textit{not} essential for obtaining a scaling limit per se. Our mathematical framework is flexible enough to allow for dependencies of order flows on standing volumes, but that one would lead to a function-valued ODE as the scaling limit, rather than a PDE. The PDE-scaling is so much more transparent that it justifies, in our view, the restriction of order flow dependencies on prices.}
We model the buy and sell side volumes as \textit{density functions in relative price coordinates}, i.e. relative to their distance to the best bid/ask prices.
Volumes at positive distances are defined as standing limit volume, observable in the order book (``visible book''), and volume at negative price distances corresponds to volume that \textit{would} be placed in the spread if the next order would be a spread limit order (``shadow book''). The state of the model at any point in time is thus a quadruple comprising the best bid price, the best ask price, the relative buy volume density function and the relative sell volume density function. The state dynamics is then defined in terms of a recursive stochastic process taking values in a function space.

In order to establish our scaling limit we use a mathematical framework similar to the recurrently defined semi-Markov processes in Anisimov \cite{Anisimov95,Anisimov2002} and Gikhman and Skorokhod \cite{GikhmanSkorokhodTTSPIII}, but for stochastic processes taking values in Banach spaces. When the analysis of the market is limited to prices as in e.g. Garman \cite{Garman}, Bayraktar et al. \cite{BayraktarHorstSircar} or Horst and Rothe \cite{HorstRothe} or to the joint dynamics of prices and \textit{aggregate volumes} (e.g. at the top of the book) as in Cont and de Larrard  \cite{ContDeLarrard2012b,ContDeLarrard2012a}, then the limiting dynamics can naturally be described by ordinary differential equations or real-valued diffusion processes, depending on the choice of scaling\footnote{We refer to Mandelbaum et al. \cite{MandMasseyPatts1998} for general approximation results for queueing systems.}.

The analysis of the whole book including the \textit{distribution} of standing volume across many price levels is much more complex. Osterrieder \cite{Osterrieder} modeled LOBs using measure-valued diffusions. Our approach is based on an averaging principle for Banach space-valued processes. The key is a uniform law of large numbers for Banach space-valued triangular martingale difference arrays. It allows us to show that the volume densities take values in $L^2$ and that  the noise in the order book models vanishes in the limit with our choice of scaling.  

Our scaling limit requires two time scales: a fast time scale for cancelations and limit order placements outside the spread (events that do not lead to price changes), and a comparably slow time scale for market order arrivals and limit order placements in the spread (events that lead to price changes). The choice of time scales captures the fact that in real world markets significant proportions of orders are never executed. Mathematically, the different time scales imply that aggregate cancelations and limit order placements outside the spread between consecutive price changes can be approximated by their expected values.

Our main result states that when limit buy [sell] orders are placed at random distances from the best ask [bid] price and the price tick tends to zero, order arrival rates tend infinity, and the impact of an individual placement/cancelation on the standing volume tends to zero, then the sequence of scaled order book models converges in probability uniformly over compact time intervals to a deterministic limit. The limiting model is such that the best bid and ask price dynamics can be described in terms of two coupled ODE:s, while the dynamics of the relative buy and sell volume density functions can be described in terms of two linear first-order hyperbolic PDE:s with variable coefficients. The PDE can be given in closed form.

The remainder of  this paper is organized as follows. In the first section we define a sequence of limit order book in terms of four scaling parameters: price tick, expected waiting time between two consecutive orders, volume placed/canceled and the proportion of order arrivals leading to price changes. In Section \ref{chapter-prices} we establish convergence of the bid/ask price dynamics to a 2-dimensional ODE; Section  \ref{chapter-volumes} is devoted to the analysis of the limiting volume dynamics. The uniform law of large numbers for triangular martingale difference arrays as well as useful auxiliary results are proved in an appendix.


\section{A sequence of discrete order book models} \label{chapter-model}
In electronic markets orders can be submitted for prices that are multiples of the \textit{price tick}, the smallest increment by which the price  can move.
In this section, we introduce a sequence of order book models for which we establish a scaling limit when the price tick and impact of a single order on the state of the book tend to zero, while the rate of order arrivals tends to infinity.

The sequence of models is indexed by $n \in {\mathbb N}$. We assume that the set of price levels at which orders can be submitted in the $n$:th models is $\{x^{(n)}_j\}_{i \in \mathbb{Z}}$ where ${\mathbb Z}$ denotes the one-dimensional integer lattice.\footnote{The assumption that there is no minimum price is made for analytical convenience and can easily be relaxed.} We put $x^{(n)}_j := j \cdot \Delta x^{(n)}$ for $j \in {\mathbb Z}$ where $\Delta x^{(n)}$ is the tick size in the $n$:th model.

The \textit{state} of the book changes due to incoming order flow and cancelations of standing volume. The state after $k \in {\mathbb N}$  such \textit{events} will be described by a random variable $S^{(n)}_{k}$ taking values in a suitable \textit{state space} $E$. In the $n$:th model, the $k$:th event occurs at a random point in time $\tau^{(n)}_k$. The time between two consecutive events will be tending to zero sufficiently fast as $n \to \infty$. The state and time dynamics will be defined, respectively, as
\begin{equation} \label{def:OrigStateRecurr}
S^{(n)}_{0}:=s^{(n)}_{0}, \quad S^{(n)}_{k+1}:=S^{(n)}_{k}+\mathcal{D}^{(n)}_{k}(S^{(n)}_{k})
\end{equation}
and
\begin{equation} \label{def:OrigStateRecurr2}
\tau^{(n)}_{0}:=0, \quad \tau^{(n)}_{k+1}:=\tau^{(n)}_{k}+\mathcal{C}^{(n)}_{k}(S^{(n)}_{k}).
\end{equation}
Here $s^{(n)}_0 \in E$ is a deterministic initial state, and $\mathcal{D}^{(n)}_{k}(S^{(n)}_k):E\rightarrow E$ and $\mathcal{C}^{(n)}_{k}(S^{(n)}_{k}): E \to [0, \infty)$ are random operators that will be introduced below. The conditional expected increment of the state sequence given by $S^{(n)}_k$ will be denoted $\mathbb{E} \left[ \mathcal{D}^{(n)}_{k}(S^{(n)}_k) \right]$; the unconditional increment $\mathbb{E} \left[ \mathcal{D}^{(n)}_{k} \right]$.

\subsection{The order book models}
In the sequel we specify the dynamics of our order book models. Throughout all random variables will be defined on a common probability space $\left(\Omega,\mathcal{F},\mathbb{P}\right)$.


\subsubsection{The initial state}

The initial state of the book in the $n$:th model is given by a pair $(B^{(n)}_0,A^{(n)}_0)$ of best bid and ask prices together with the standing buy and sell limit order volumes at the various price levels. It will be convenient to identify the standing volumes with step functions
\begin{equation*}
	v^{(n)}_{b,0}(x):=\sum_{j=0}^{\infty} v^{(n),j}_{b,0} \mathbf{1}_{[x^{(n)}_j, x^{(n)}_{j+1})}( x), \quad v^{(n)}_{s,0}(x):=\sum_{j= 0}^{\infty}v^{(n),j}_{s,0} \mathbf{1}_{[x^{(n)}_j, x^{(n)}_{j+1})}(x) \quad (x \geq 0)
\end{equation*}
that specify the liquidity available for buying and selling \textit{relative} to the best bid and ask price. The liquidity available for buying (sell side of the book) $j \in \mathbb{N}_0$ ticks \textit{above} the best ask price at the price level $x^{(n)}_{A^{(n)}_0 + j}$ is
\[
	\int_{x^{(n)}_j}^{x^{(n)}_{j+1}} v^{(n)}_{s,0}(x) \mathrm{d}x = v^{(n),j}_{s,0} \cdot \Delta x^{(n)}.
\]	
The volume available for selling (buy side of the book\footnote{Notice that the liquidity available for buying is captured by the sell side of the book and vice versa.}) $l \in \mathbb{N}_0$ ticks \textit{below} the best bid price at the price level $x^{(n)}_{B^{(n)}_0 - l}$ is
\[
	\int_{x^{(n)}_l}^{x^{(n)}_{l+1}} v^{(n)}_{b,0}(x) \mathrm{d}x = v^{(n),l}_{b,0} \cdot \Delta x^{(n)}.
\]	
\begin{figure}
\centering
\includegraphics[width=10cm]{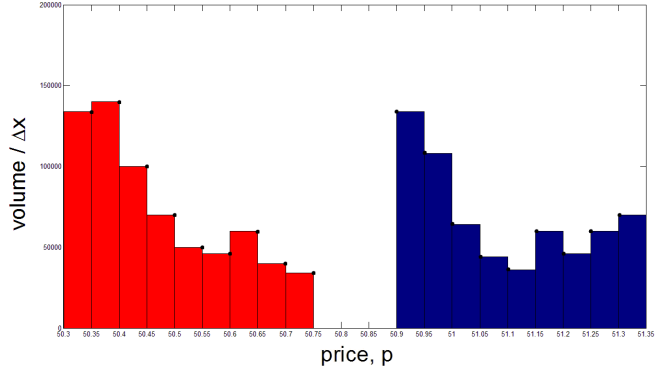}
\caption{Snapshot of the book.}
\end{figure}

In order to conveniently model placements of limit orders into the spread, we extend $v^{(n)}_{b,0}$ and $v^{(n)}_{s,0}$ to the negative half-line. The collection of volumes standing at negative distances from the best bid/ask price is referred to as the \textit{shadow book}. The shadow book will undergo the same dynamics as the standing volume (``visible book''). At any point in time it specifies the volumes that will be placed into the spread should such an event occur next\footnote{One has to specify the volumes placed into the spread somehow. Our choice of shadow books is one such way. The role of the shadow book will be further clarified in the following subsection when we define the impact of order arrivals on the state of the book. Its initial state is part of the model; future states will undergo the analogous dynamics as those of the visible book.}. The role of the shadow will be further illustrated in Section \ref{section-passive} below; see Figures 3 and 4.

\begin{definition}
In the $n$:th model the initial state of the book is given by a quadruple $$S^{(n)}_{0}(\cdot)=\left(B^{(n)}_{0},A^{(n)}_{0},v^{(n)}_{b,0}(\cdot),v^{(n)}_{s,0}(\cdot)\right) $$ where $B^{(n)}_0 \leq A^{(n)}_0$ are the best bid/ask price and the step functions $v^{(n)}_{b,0}, v^{(n)}_{s,0} : {\mathbb R} \to [0,\infty)$ are to be interpreted as follows:
\begin{equation}
v^{(n)}_{b,0}(x) \, \left[v^{(n)}_{s,0}(x)\right]:=
\begin{cases}
\text{standing buy [sell] limit order volume density at price distance $x$}\\
\vspace{3mm}
\text{below [above] the best bid [ask] price, for $x\geq 0$ (visible book)}\\
\text{potential buy [sell] limit order volume density at price distance $x$}\\
\text{above [below] the best bid [ask] price, for $x< 0$ (shadow book).}
\end{cases}
\label{eq:standVol}
\end{equation}
\end{definition}

 Throughout, we shall use the notation $f={\cal O}(g)$ and $f=o(g)$ to indicate that the function $f$ grows asymptotically no faster than $g$, respectively that $|f(x)/g(x)| \to 0$ as $x \to \infty$. With this, we are ready to state our conditions on the initial states. In particular, we assume that the initial volume density functions vanish outside a compact price interval.\footnote{This assumption, which may be generalized, considerably simplifies some of the analysis that follows.}

\begin{assumption}[Convergence of initial states]\label{Assumption-1}
	The initial volume density functions vanish outside a compact interval $[-M,M]$ for some $M > 0$ and and satisfies: 
\[
	| v^{(n)}_{r,0} (\cdot \pm \Delta x^{(n)}) - v^{(n)}_{r,0} (\cdot) |_{\infty} = {\cal O}(\Delta x^{(n)}).
\]
Moreover, 
\[
	\lim_{n \to \infty} || v^{(n)}_{r,0} - v_{r,0} ||_{L^2} = 0 \quad \mbox{and} \quad \lim_{n \to \infty}(B^{(n)}_0,A^{(n)}_0) = (B,A)
\]	
where $\| \cdot \|_{L^2}$ denotes the $L^2$-norm on $\mathbb{R}$ with respect to Lebesgue measure and $v_{r,0} \in L^{2}$ $(r \in \{b,s\})$ is  non-negative, bounded and continuously differentiable.
\end{assumption}


\begin{figure}
\begin{minipage}[H]{0.49\textwidth}
\centering
\includegraphics[width=\textwidth]{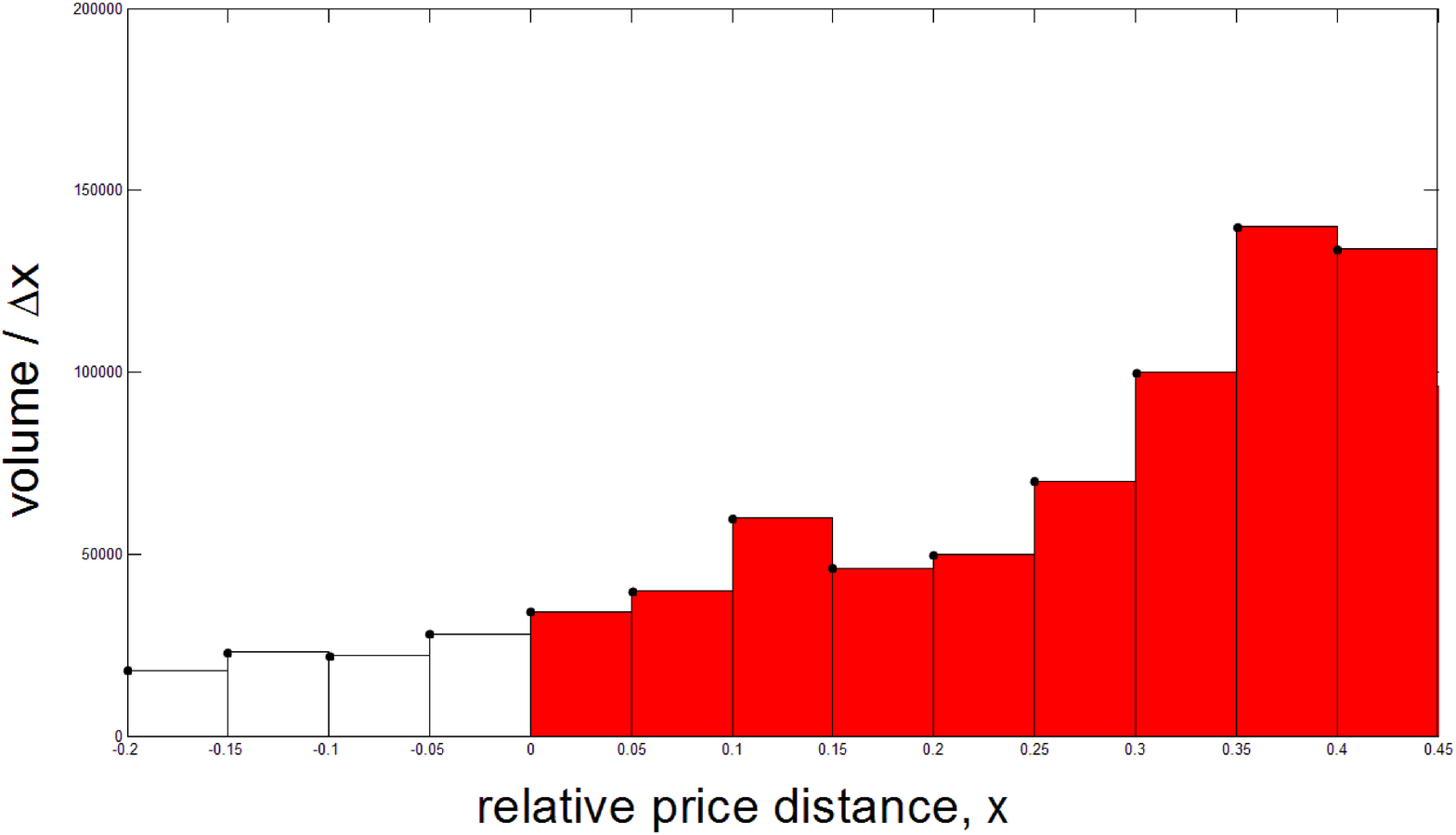}
\end{minipage}
\begin{minipage}[H]{0.49\textwidth}
\centering
	\includegraphics[width=\textwidth]{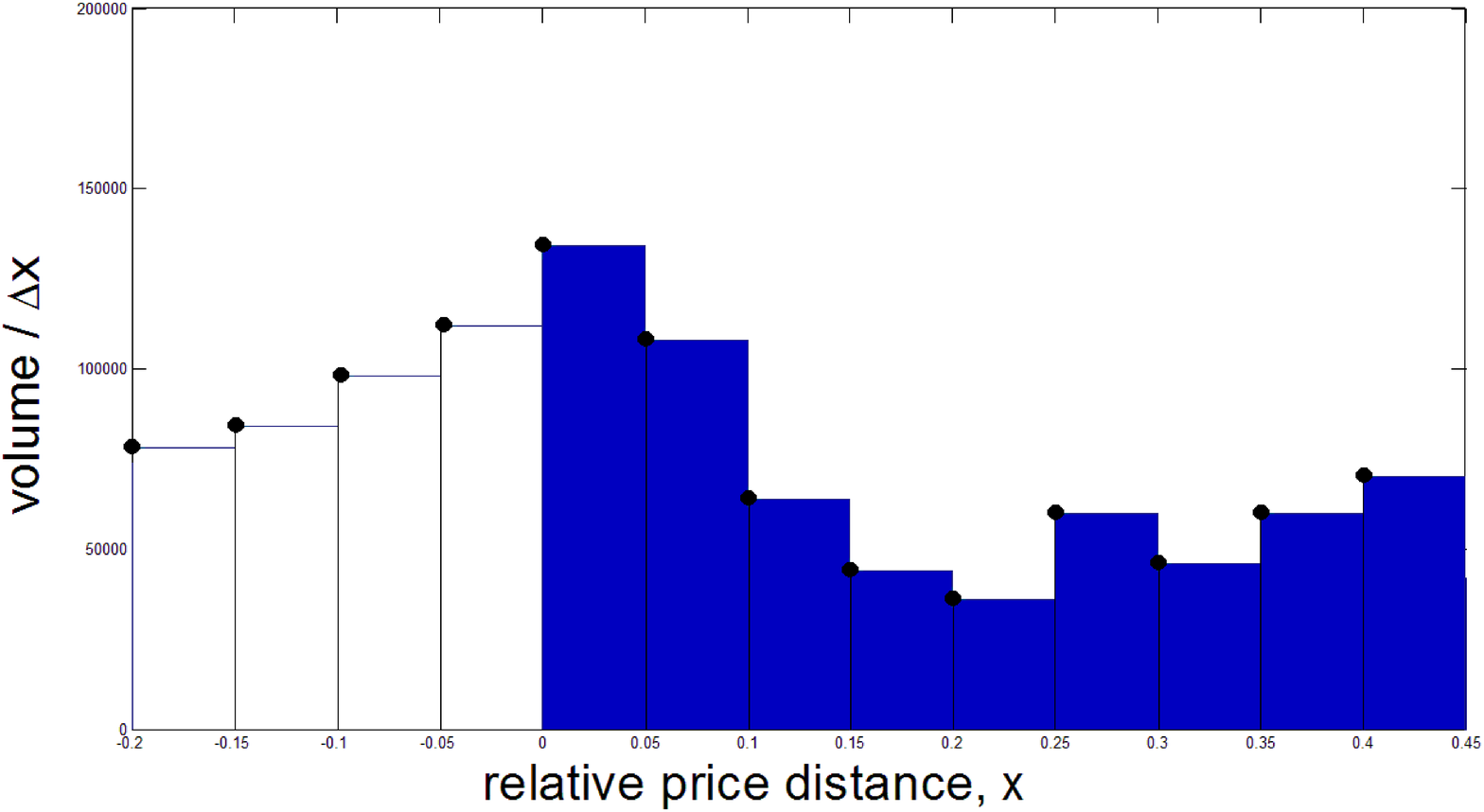}
\end{minipage}
\caption{Left [right]: the buy [sell] volume density function in relative price coordinates. The colored area under the densities is the standing volume; the white is the shadow book.}
\label{Abb:IllustrationVariablesBuySell}
\end{figure}

 The first condition on the volume density functions is intuitive. The second condition will become clear later; it will be used to bound the impact of market orders and limit orders placed into the spread on the state of the book. 


\subsubsection{Event types}

 There are eight events - labeled {\bf A, ..., H} - that change the state of the book. The events {\bf A, ..., D} affect the buy side of the book:
\begin{align*}
	\textbf{A}&:=\{\text{market sell order}\}& \textbf{B}& :=\{\text{buy limit order placed in the spread}\}\\
	\textbf{C}&:=\{\text{cancelation of buy volume}\}& \textbf{D}& :=\{\text{buy limit order not placed in spread}\}
\end{align*}
The remaining four events affect the sell side of the book:
\begin{align*}
	\textbf{E}&:=\{\text{market buy order}\}& \textbf{F}& :=\{\text{sell limit order placed in the spread}\}\\
	\textbf{G}&:=\{\text{cancelation of sell volume}\}& \textbf{H}& :=\{\text{sell limit order not placed  in the spread}\}. \\
\end{align*}
We will describe the state dynamics of the $n$:th model by a stochastic process $\{S^{(n)}_k\}_{k \in {\mathbb N}}$ that takes values in the Hilbert space
\[
	E:=\mathbb{R}\times\mathbb{R}\times L^{2} \times L^{2}.
\]
The first two components of the vector $S^{(n)}_k$ stand for the best bid and ask price after $k$ events; the third and fourth component refer to the buy and sell volume density functions relative to the best bid and ask price, respectively (visible and shadow book). We define a norm on
$E$ by
\begin{equation}
\|\alpha\|_{E}:=|\alpha_{1}|+|\alpha_{2}|+\|\alpha_{3} \|_{L^{2}}+\|\alpha_{4}\|_{L^{2}}, \quad\alpha=(\alpha_{1},\alpha_{2},\alpha_{3},\alpha_{4})\in E.
\label{def:prodNorm}
\end{equation}
In the sequel we specify how different events change the state of the book and how order arrival times and sizes scale with the parameter $n \in {\mathbb N}$.


\subsubsection {Active orders}  \label{section-active}

Market orders and placements of limit orders in the spread lead to price changes\footnote{A market order that does not lead to a price change can be viewed as a cancelation of standing volume at the best bid/ask price.}. With a slight abuse of terminology we refer to these order types as \textit{active orders}. 
For convenience, we assume that market orders match precisely against the standing volume at the best prices and that limit orders placed in the spread improve prices by one tick. The assumptions that market orders decrease (increase) the best bid (ask) price by one tick while limit orders placed in the spread decrease (increase) prices by the same amount have been made in the literature before and can be generalized without too much effort.\footnote{A market order whose size exceeds the standing volume at the top of the book and would hence move the price by more than one tick, is split by the exchange into a series of consecutively executed orders. The size of each such `suborder', except the last, equals the liquidity at the current best price. Thus, by definition, a single market order cannot move the price by more than on tick.} However, this would unnecessarily complicate the analysis that follows.

Thus, if the $k$:th event is a sell market order (Event \textbf{A}), then the relative buy volume density shifts one price tick to the left (the liquidity that stood $l$ tick into the book now stands $l-1$ ticks into the book), the best bid price decreases by one tick and the relative sell volume density and the best ask price remain unchanged. Since the relative volume density functions are defined on the whole real line, the transition operators
\begin{equation*} \label{def:OrigTranslOperator}
	T^{(n)}_{+}(v)(\cdot)=v(\cdot+\Delta x^{(n)}), \quad T^{(n)}_{-}(v)(\cdot)=v(\cdot-\Delta x^{(n)})
\end{equation*}
are well defined and one has that
\begin{equation*}
	v^{(n)}_{b,k+1}(\cdot)=T^{(n)}_{+}\left(v^{(n)}_{b,k}\right)(\cdot), \quad v^{(n)}_{s,k+1}(\cdot) = v^{(n)}_{s,k}(\cdot)
\end{equation*}
and
\begin{equation*}
B^{(n)}_{k+1}=B^{(n)}_{k}-\Delta x^{(n)}, \quad A^{(n)}_{k+1}=A^{(n)}_{k}.
\end{equation*}
The placement of orders into the spread will be modeled using the shadow book. If the $k$:th event is a buy limit order placement in the spread (Event \textbf{B}), the relative buy volume density shifts one price tick to the right (the liquidity that stood one tick above the best bid in the shadow book now stands at the top of the visible book), the best bid price increases by one tick and the relative buy volume density and the best ask price remain unchanged:
\begin{equation*}
	v^{(n)}_{b,k+1}(\cdot)=T^{(n)}_{-}\left(v^{(n)}_{b,k}\right)(\cdot), \quad v^{(n)}_{s,k+1}(\cdot) = v^{(n)}_{s,k}(\cdot)
\end{equation*}
and
\begin{equation*}
B^{(n)}_{k+1}=B^{(n)}_{k}+\Delta x^{(n)}, \quad A^{(n)}_{k+1}=A^{(n)}_{k}.
\end{equation*}

\begin{remark}
Notice that market order arrivals and limit order placements in the spread are ``inverse operations'': a market sell order arrival followed by a limit buy order placement in the spread (or vice versa) leaves the book unchanged. 
\end{remark}


\subsubsection{Passive orders} \label{section-passive}

Limit order placements outside the spread and cancelations of standing volume do not change prices. With the same minor abuse of terminology as before, we refer to these order types as \textit{passive orders}.
We assume that cancelations of buy volume (Event \textbf{C}) occur for random \textit{proportions} of the standing volume at random price levels while limit buy order placements outside the spread (Event \textbf{D}) occur for random \textit{volumes} at random price levels. The submission and cancelation price levels are chosen {\textsl relative to the best bid price}. To guarantee convergence of the dynamics of volumes to the solution of a PDE, volumes placed and proportions canceled are scaled in a particular way.

\begin{assumption} \label{Assumption-2}
For each $k \in \mathbb{N}$ there exist random variables $\omega^C_k, \omega^D_k$ taking values in $(0,1)$, respectively $[0,M]$ for some $M > 0$ and random variables  $\pi^C_k, \pi^D_k$ taking values in $[-M,M]$ such that, if the $k$:th event is a limit buy order cancelation/placement, then it occurs at the price level $x^{(n)}_{B^{(n)}_k - j}$ $(j \in \mathbb{Z})$ for which
	\[
		\pi^{C,D}_k \in [x^{(n)}_{j}, x^{(n)}_{j+1}). 
	\]
The volume canceled, respectively, placed is
\[
	\omega^C_k \cdot \Delta v^{(n)} \cdot v^{(n)}_{b,k}(\pi^C_{k}) ~~ \mbox{ respectively ~~ }
	\omega^D_k \cdot \Delta v^{(n)} .
\]
Here $v^{(n)}_{b,k}(\pi^C_{k})$ is the value of the volume density functions at the cancelation price level and $\Delta v^{(n)}$ is a scaling parameter that describes the impact of an individual limit order arrival (cancelation) on the state of the order book.
\end{assumption}
%
%
%

 Some comments are in order. First, by construction, buy order placements and cancelations take place at random distances (positive or negative) from the best bid price. Since $|\pi^{C,D}_k| \leq M$ no orders are placed/canceled more than $M/\Delta x^{(n)}$ ticks into the book. Volume changes take place in the visible or the shadow book, depending on the sign of of $\pi^I_k$. If $\pi^I_k \geq 0$, then the visible book changes; if $\pi^I_k < 0$ the placement/cancelation takes place in the shadow book and the impact of the event on the state of the visible book (i.e. at price levels \textit{above} the best bid) will be felt only after one or several price increases.


\begin{example}
Suppose that the liquidity standing one tick above the best bid price after $(k-1)$ events is $v_{b,k-1}^{(n),-1}$, that the $k$-th event is an order placement of size $\omega^D_k \Delta v^{(n)}$ in the (buy-side) shadow book one tick above the best bid, i.e. $\pi^{D}_k \in [-\Delta x^{(n)},0)$ and that the $(k+1)$st event is a buy limit order placement in the spread. Then $B^{(n)}_{k+1} = B^{(n)}_k + \Delta x^{(n)}$ and the liquidity standing at the top of the book at $B^{(n)}_{k+1}$ (size of the order placed) is $v_{b,k-1}^{(n),-1} + \omega^D_k \Delta v^{(n)}$. This is how the buy order previously placed in the shadow now becomes part of the visible book. The role of the shadow book is further illustrated by Figure \ref{fig:shadow-book2}.
\end{example}

Second, the liquidity available for buying $j \in \mathbb{N}$ ticks into the book after $k$ events is given by by the integral of the volume density function over the interval $[x^{(n)}_j,x^{(n)}_{j+1})$, i.e. equals $v^{(n)}_{b,k}(x^{(n)}_j) \cdot \Delta x^{(n)}$. When a cancelation occurs at this price level, then the new volume is $(\Delta x^{(n)} - \omega^C_{k} \cdot \Delta v^{(n)} ) \cdot v^{(n)}_{b,k}(x^{n}_j) = (1-\omega^C_k \frac{\Delta v^{(n)}}{\Delta x^{(n)}}) \cdot v^{(n)}_{b,k}(x^{n}_j) \cdot \Delta x^{(n)}$. Hence cancelations are proportional to standing volumes. On the level of the volume density functions, Assumption \ref{Assumption-2} implies that
%
\[
	v^{(n)}_{b,k+1}(\cdot) = v^{(n)}_{b,k}(\cdot) - \frac{\Delta v^{(n)}}{\Delta x^{(n)}}\cdot M_{k}^{(n),C}(\cdot) \cdot v^{(n)}_{b,k}(\cdot), \quad \mbox{where} \quad M_{k}^{(n),C}(x)=\omega^C_k \sum_{j=-\infty}^{\infty}\mathbf{1}_{\{\pi_{k}^{C}\in[x^{(n)}_j x^{(n)}_{j+1})\}}(x).
\]
Volume placements are additive. If the order a limit buy order, then the volume density function changes according to
\begin{equation*}
	v^{(n)}_{b,k+1}(\cdot)=v^{(n)}_{b,k}(\cdot)+\frac{\Delta v^{(n)}}{\Delta x^{(n)}} \cdot M_{k}^{(n),D}(\cdot), \quad \mbox{where} \quad
	M_{k}^{(n),D}(x):=\omega^D_k \sum_{j=-\infty}^{\infty}\mathbf{1}_{\{\pi_{k}^{D}\in[x^{(n)}_j, x^{(n)}_{j+1})\}}(x).
\end{equation*}
In either case, the bid/ask price and standing sell side volume of the book remain unchanged:
\[
	v^{(n)}_{s,k+1}(\cdot) = v^{(n)}_{s,k}(\cdot), \quad B^{(n)}_{k+1} = B^{(n)}_k, \quad A^{(n)}_{k+1} = A^{(n)}_k.
\]
Similar considerations apply to the sell side with respective random quantities $\omega^G_k, \omega^H_k$ and $\pi^G_k, \pi^H_k$.

\begin{figure}[h]
\begin{minipage}[h]{0.48\textwidth}
\centering
	\includegraphics[width=\textwidth]{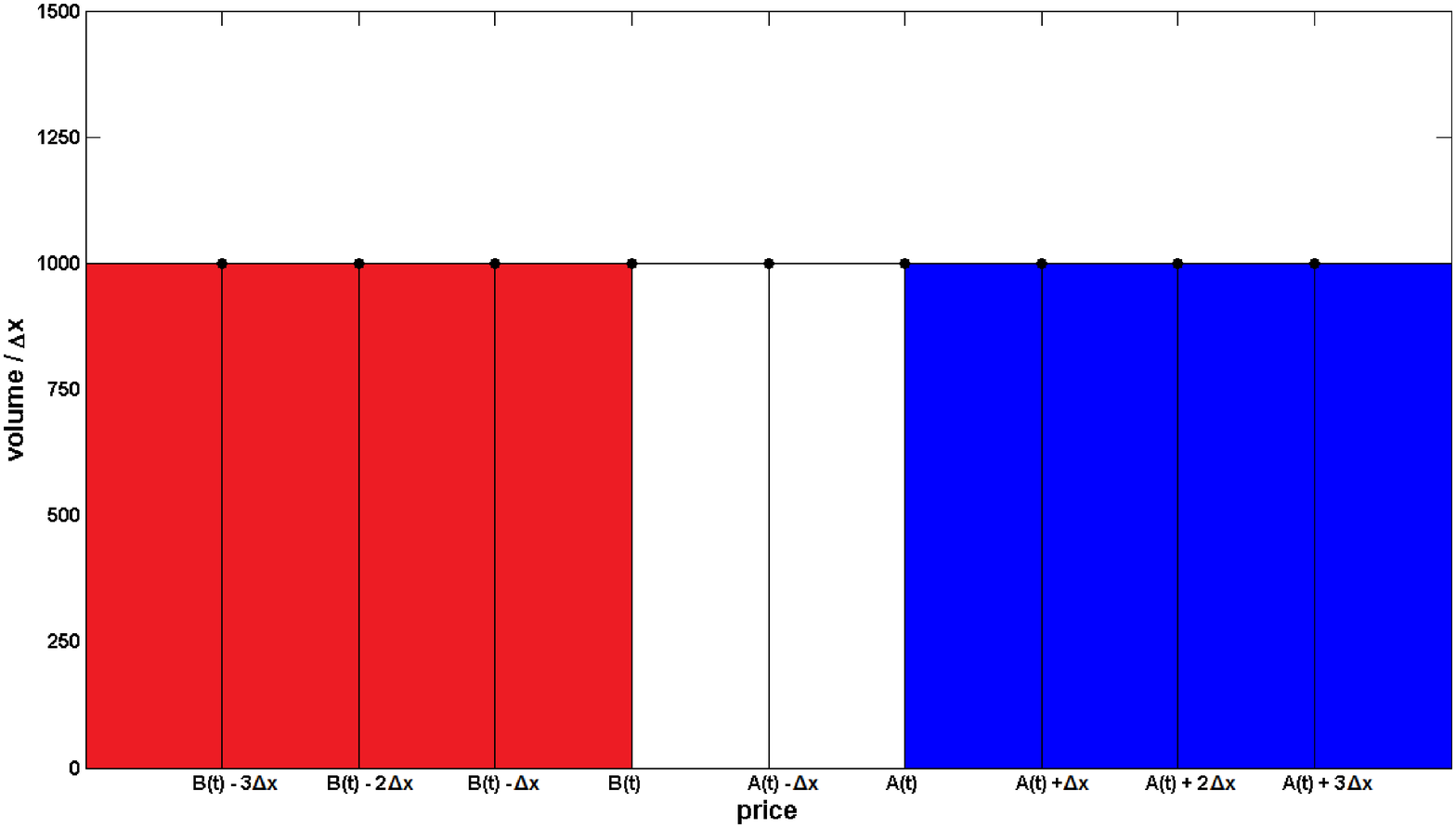}
\end{minipage}
\begin{minipage}[h]{0.48\textwidth}
\centering
	\includegraphics[width=\textwidth]{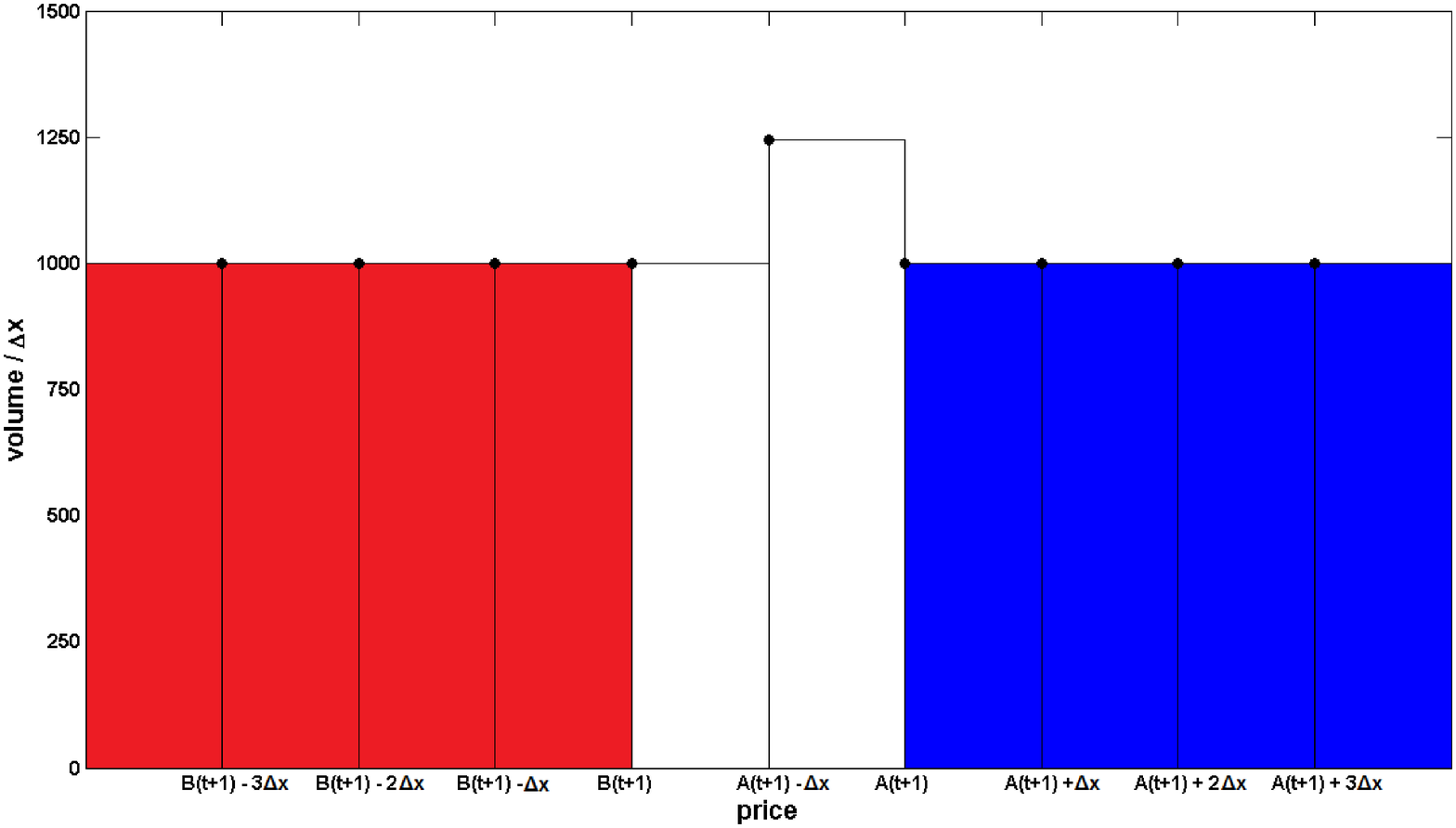}
\end{minipage}

\begin{minipage}[h]{0.48\textwidth}
\centering
	\includegraphics[width=\textwidth]{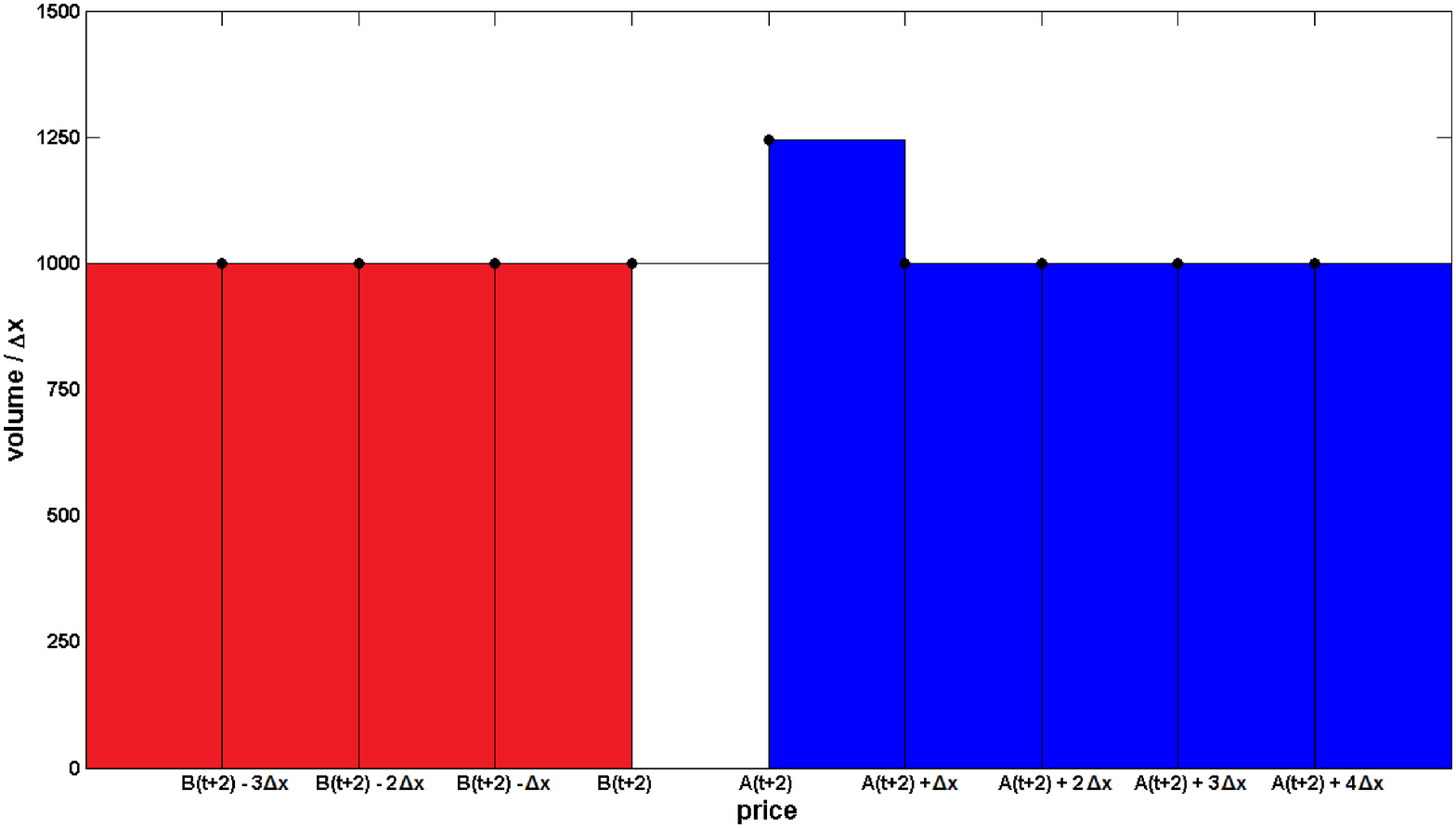}
\end{minipage}
\begin{minipage}[h]{0.48\textwidth}
\centering
	\includegraphics[width=\textwidth]{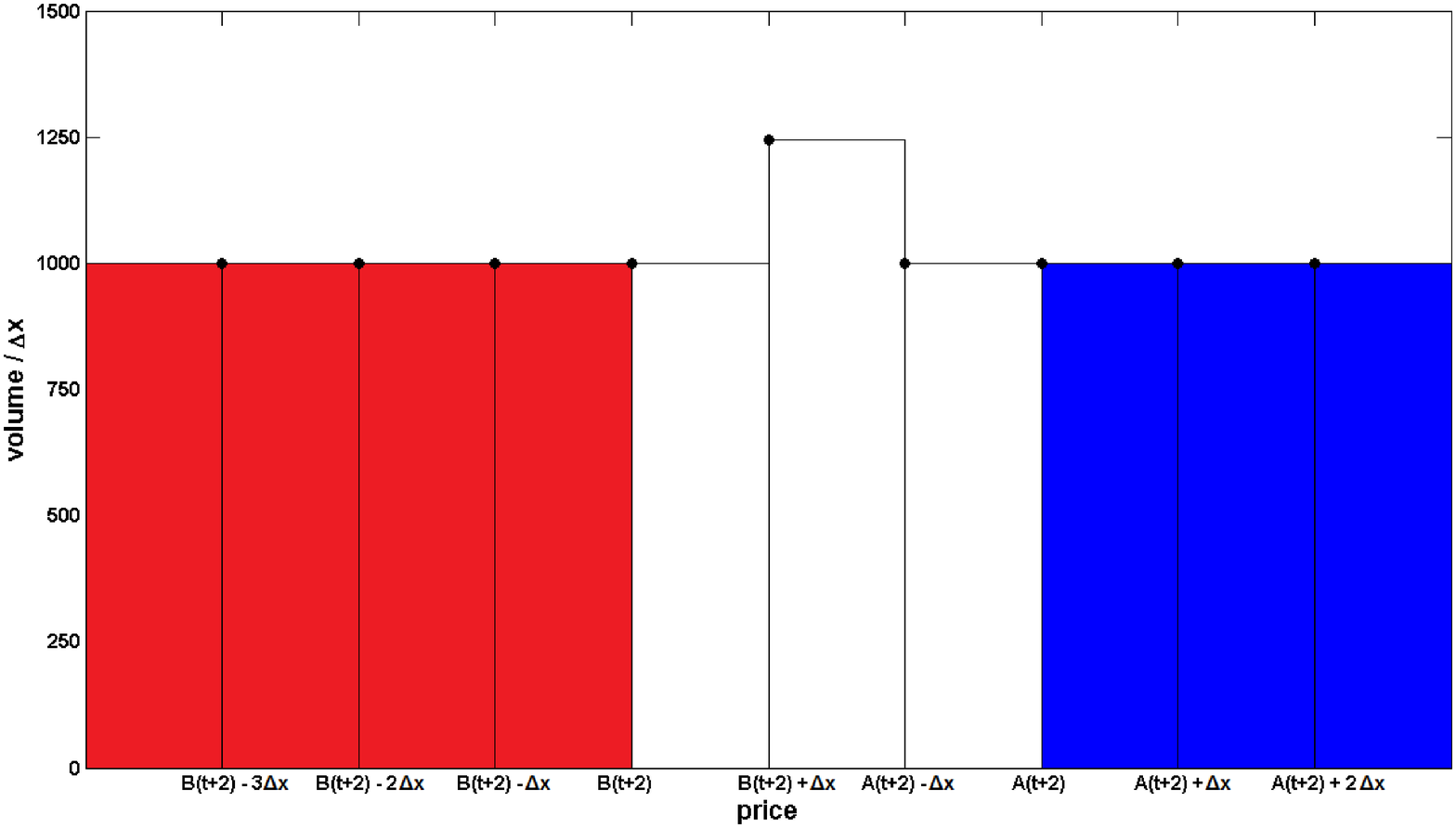}
\end{minipage}
\caption{Top left: Initial buy (red) and sell (blue) volume densities. The sell-side shadow book is uncolored. Top right: Second state after a sell-order has been placed in the shadow book at price level $A(t)-\Delta x^{(n)}$. Bottom left: Third state after a subsequent sell limit order has been placed in the spread. Bottom right: Third state if, instead of a subsequent sell limit order in the spread, a market buy order arrived.}
\label{fig:shadow-book2}
\end{figure}

{
\begin{assumption} \label{Assumption-3}
	For $I \in \{{\bf C},{\bf D}, {\bf G}, {\bf H}\}$ the sequences $\{\omega^I_k\}_{k \in {\mathbb N}}$ and $\{\pi^I_k\}_{k \in {\mathbb N}}$ are independent sequences of i.i.d. random variables. Moreover, the random variables $\pi^I_k$ have $C^2$-densities $f^I$ with compact support. 
\end{assumption}	

Lipschitz continuity of $f^I$ implies $\|f(\cdot \pm \Delta x^{(n)}) - f(\cdot)\|_\infty = \mathcal{O}(\Delta^{(n)} )$ as well as existence of a constant $K < \infty$ such that 
\begin{equation} \label{Lip-estimate}
\begin{split}
	\left| \mathbb{P}[\pi^I_k \in [x^{(n)}_j, x^{(n)}_{j+1})] -  \mathbb{P}[\pi^I_k \in [x^{(n)}_{j-1}, x^{(n)}_{j}) ] \right| 
	\leq & \, \int_{x^{(n)}_{j-1}}^{x^{(n)}_{j}} | f^I (y + \Delta x^{(n)}) - f^I (y) | dy \\
	\leq & \, K \left( \Delta x^{(n)}\right)^2. 
\end{split}
\end{equation}
Moreover, if we denote by  $f^{(n),I}$,  the  (scaled) expected changes in the heights of the density volume functions, due to cancelations and order placements, i.e.
\begin{eqnarray}
	f^{(n),I}(\cdot) & := & 
	\sum_{j=-\infty}^{\infty} f^{(n),I}_j \mathbf{1}_{[x^{(n)}_j, x^{(n)}_{j+1})}(\cdot) \quad \mbox{with} \quad 
	f^{(n),I}_j := \frac{\mathbb{E}[\omega^I_1]}{\Delta x^{(n)}} \int_{x^{(n)}_j}^{x^{(n)}_{j+1}} f^I(x) \, \mathrm{d}x,
\end{eqnarray}
then 
\begin{equation}
	\|f^{n,I} - f^I \|_{\infty} = o(1) \quad \mbox{and} \quad
	\|T^{(n)}_\pm \circ f^{n,I} - f^{(n),I} \|_{\infty} = {\cal O}(\Delta x^{(n)}).
\label{eq:RegInitPriceDensIV}
\end{equation}
}


\subsubsection{Event times}

The dynamics of event times is specified in terms of a sequence of inter-arrival times whose distributions may depend on prevailing best bid and ask prices.

\begin{assumption} \label{Assumption-4}
	Let $\{\varphi_k\}_{k \in {\mathbb N} }$ be a sequence of non-negative random variables that are  conditionally independent, given the current best bid and ask price:
\[
	\mathbb{P}[\varphi_k \leq t | S^{(n)}_k] = \mathbb{P}[\varphi_k \leq t | B^{(n)}_k, A^{(n)}_k].
\]
\end{assumption}

In the sequel we write $\varphi(A_k,B_k)$ for $\varphi_k$ to indicate the dependence of the distribution of $\varphi_k$ on the best bid and ask price. Similar notation will be applied to other random variables whenever convenient.

 In the $n$:th model, we scale the time by a factor $\Delta t^{(n)}$. More precisely, we assume that the dynamics of the event times in the $n$:th model is given by:
\begin{equation} \label{tau_n}
	\tau^{(n)}_{k+1} = \tau^{(n)}_k + {\cal C}^{(n)}_k(B^{(n)}_k,A^{(n)}_k), \quad \mbox{where} \quad {\cal C}^{(n)}_k(B^{(n)}_k,A^{(n)}_k) := \varphi(B^{(n)}_k, A^{(n)}_k) \cdot \Delta t^{(n)}.
\end{equation}


\subsubsection{Event types}

 Event types are described in terms of a sequence of random  \textit{event indicator variables} $\{\phi_k\}$ taking values in the set $\{{\bf A}, ..., {\bf H}\}$. We assume that the random variables
\[
	\phi_k = \phi(B^{(n)}_{k},A^{(n)}_{k}) \quad (k \in \mathbb{N}_{0})
\]
are conditionally independent, given the prevailing best bid and ask price and that their conditional probabilities
\[
	p^{(n)}_k=p^{(n),I}(B^{(n)}_{k}, A^{(n)}_{k}) := \mathbb{P}[ \phi_{k} = I |S^{(n)}_{k}]
\]
satisfy the following condition.


\begin{assumption} \label{Assumption-5}
There are bounded continuous functions with bounded gradients $p^I: \mathbb{R}^2 \to [0,1]$ and a scaling parameter $\Delta p^{(n)} \to 0$ such that
\begin{eqnarray*}
	p^{(n),I}(\cdot,\cdot) &=&  \Delta p^{(n)} \cdot p^{I}(\cdot,\cdot) \quad ~~~~~~~ \mbox{for} \quad I={\bf A}, {\bf B} , {\bf E}, {\bf F} \\
	p^{(n),I}(\cdot,\cdot) &=& (1-\Delta p^{(n)}) \cdot p^{I}(\cdot,\cdot) \quad \mbox{for} \quad I={\bf C}, {\bf D}, {\bf G}, {\bf H} \\
	p^{A} + p^{B}  + p^{E}  + p^{F} & = & 1 \\
	p^{C} + p^{D}  + p^{G}  + p^{H} & = & 1
\end{eqnarray*}
\end{assumption}


\begin{remark}
The preceding assumption implies that an event is an active order with probability $\Delta  p^{(n)}$ and a passive order with probability $1-\Delta  p^{(n)}$, independently of the state of the book. Conditioned on the order being active or passive, it is of type $I$ with a probability $p^I(\cdot,\cdot)$ that depends on the current best bid and ask prices. We allow the above probabilities to be zero in order to account for the fact that no price improvements can take place when $B^{(n)}_k = A^{(n)}_k$ and to avoid depletion of the order book.\footnote{For simplicity we assumed that the the initial volume density functions vanish outside a compact price interval. Hence there is a positive probability of depletion unless one assumes that no further buy/sell side price improvements take place if the distance of the current best bid/ask price from the initial state exceeds some threshold}.
\end{remark}

The expected impact of each active order on the state of the book will be of order $\Delta x^{(n)}$; that of a passive order of order $\Delta v^{(n)}$. Because active orders arrive at a rate that is $\Delta p^{(n)}$-times slower than that of passive orders, the relative average impact of active to passive orders on the state of the book will of the order $\frac{ \Delta  p^{(n)}  \Delta x^{(n)} }{\Delta v^{(n)} }.$ Our scaling limit requires to equilibrate the impact of active and passive orders. In order to guarantee that there will be no fluctuations in the standing volumes in the limit as $n \to \infty$ we also need a minimum relative frequency of passive order arrivals. This motivates the following assumption.

\begin{assumption} \label{assumption-scaling}
The scaling constants $\Delta p^{(n)}$, $\Delta x^{(n)}$, $\Delta v^{(n)}$ and $\Delta t^{(n)}$ are such that:
\[
	\lim_{n \to \infty} \frac{ \Delta x^{(n)} \cdot \Delta p^{(n)}}{ \Delta v^{(n)}} = c_0, \quad
	\lim_{n \to \infty} \frac{\Delta v^{(n)}}{\Delta t^{(n)}} = c_1, \quad \mbox{and} \quad
	\frac{\Delta p^{(n)}}{\Big( \Delta t^{(n)} \Big)^\alpha } = {\cal O}(1)
\]
for some $\alpha \in \left( \frac{1}{2}, 1\right)$ and constants $c_0,c_1 >0$.\footnote{For the results that follow, we will assume that $\frac{ \Delta x^{(n)} \cdot \Delta p^{(n)}}{ \Delta v^{(n)}} = 1$ and $\frac{\Delta v^{(n)}}{\Delta t^{(n)}} = 1$ as $n \to \infty$. Any other constant would require further constants in the limiting dynamics.}
\end{assumption}


\subsubsection{Active order times}

 The previous two assumptions introduce two different time scales for order arrivals: a fast time scale for passive order arrivals, and a comparably slow time scale for active order arrivals. Inter-arrival times between passive orders are of the order $\Delta t^{(n)}$ while inter-arrival times between active orders are of the order $\Delta x^{(n)}$. In order to see this, let us denote by $\sigma^{(n)}_k$ the arrival time of the $k$:th active order. The \textit{number} $r^{(n)}_{k+1}$ of events one needs to wait until the $(k+1)$-st active order arrival can be viewed as the first success time in a series of Bernoulli experiments with success probability $\Delta p^{(n)}$ and expected value $\frac{1}{\Delta p^{(n)}}$.
The $(k+1)$-st active order arrives at time
\[
	\sigma^{(n)}_{k+1} = \sigma^{(n)}_{k} + \zeta^{(n)}_k \cdot \Delta x^{(n)}
\]
where
\[
	\zeta^{(n)}_k := \sum_{l=\sigma^{(n)}_k}^{r^{(n)}_{k+1}-1} \varphi_l \cdot \Delta p^{(n)}.
\]
Since the random variables $\varphi_{\sigma^{(n)}_k+1}, ..., \varphi_{r^{(n)}_k-1}$ are conditionally independent and identically distributed, $\{r^{(n)}_k\}$ and $\{\varphi_k\}$ are independent sequences. Because $\mathbb{E}[r^{(n)}_k] = \frac{1}{\Delta p^{(n)}}$, the conditional expected value $m(B^{(n)}, A^{(n)})$ of $\zeta^{(n)}_k$, given the prevailing bid and ask prices is independent of $n \in \mathbb{N}$. We assume that the mapping $m(\cdot,\cdot)$ is Lipschitz continuous.

\begin{assumption} \label{Assumption-7}
The conditional expected value $m(B,A)$ of $\zeta^{(n)}_k$ depends in a Lipschitz continuous manner on the prevailing pair of bid and ask prices $(B,A)$.
\end{assumption}

\subsubsection{State dynamics}

 We are now ready to describe the full dynamics of the state sequence. To this end, we put
\[
	S^{(n)}_k = \left( B^{(n)}_k, A^{(n)}_k, v^{(n)}_{b,k}, v^{(n)}_{s,k} \right)
\]
In terms of the indicator function $\mathbf{1}_{k}\left(S^{(n)}_{k}\right) := \left( \mathbf{1}_{{\bf A}}\left(\phi_{k}(B_{k}^{(n)},A_{k}^{(n)})\right), \ldots ,\mathbf{1}_{{\bf H}}\left(\phi_{k}(B_{k}^{(n)},A_{k}^{(n)}) \right)  \right)'$ the dynamics of the state sequence $\{S^{(n)}_k\}$ is of the form $$S^{(n)}_{k+1} = S^{(n)}_k + {\cal D}^{(n)}_k(S^{(n)}_k) $$
if we define the random operator ${\cal D}^{(n)}_k: E \rightarrow E$ by
\begin{equation}\label{D}	
	\quad {\cal D}^{(n)}_k(S^{(n)}_k) := \mathbb{M}^{(n)}_{k}(S^{(n)}_k)\cdot\mathbf{1}_{k}(S^{(n)}_{k})
\end{equation}
where the matrix $\mathbb{M}^{(n)}_{k}(S^{(n)}_k)$ equals
\begin{scriptsize}
\begin{alignat*}{2}
&
\left(
  \begin{array}{cccccccc}
    -\Delta x^{(n)} & \Delta x^{(n)} & 0 & 0 & 0 & 0 & 0 & 0 \\
    &&&&&&&\\
    0 & 0 & 0 & 0 & \Delta x^{(n)} & -\Delta x^{(n)} & 0 & 0 \\
    &&&&&&&\\
    M^{(n),A}_{k} & M^{(n),B}_{k} & - \frac{\Delta v^{(n)}}{{\Delta x^{(n)}}} M^{(n),C}_{k} \cdot v^{(n)}_{b,k} & \frac{\Delta v^{(n)}}{{\Delta x^{(n)}}} M_{k}^{(n),D} & 0 & 0 & 0 & 0 \\
    &&&&&&&\\
    0 & 0 & 0 & 0 & M^{(n),E}_{k} &  M^{(n),F}_{k} & - \frac{\Delta v^{(n)}}{{\Delta x^{(n)}}} M^{(n),G}_{k} \cdot v^{(n)}_{s,k}  & \frac{\Delta v^{(n)}}{{\Delta x^{(n)}}} M_{k}^{(n),H}
  \end{array}
\right).
\end{alignat*}
\end{scriptsize}
Here, the entries referring to shifts in the volume density functions, due to best bid and ask price changes, are given by
\begin{alignat}{2}
M_{k}^{(n),A} :=T^{(n)}_{+}\left(v^{(n)}_{b,k}\right)-v^{(n)}_{b,k},&\qquad M_{k}^{(n),E}:=T^{(n)}_{+}\left(v^{(n)}_{s,k}\right)-v^{(n)}_{s,k}\nonumber\\
M_{k}^{(n), B} :=T^{(n)}_{-}\left(v^{(n)}_{b,k}\right)-v^{(n)}_{b,k},&\qquad M_{k}^{(n),F} :=T^{(n)}_{-}\left(v^{(n)}_{s,k}\right)-v^{(n)}_{s,k}
\end{alignat}
and the entries referring the volume changes, due to placement and cancelation of volume, are given by
\begin{alignat}{2}
M^{(n),I}_{v,k}(x)&:=\omega_{k}^{I}\sum_{j=-\infty}^{\infty}   \mathbf{1}_{\{\pi_{k}^{I}\in[x^{(n)}_j,x^{(n)}_{j+1})\}}(x) & \text{ for events
I={\bf C}, {\bf D}, {\bf G}, {\bf H}}.
\label{def:OrigMnI}
\end{alignat}
Observing the dynamics in continuous time, we define
\begin{equation}
S^{(n)}(t):=S^{(n)}_{k}  \quad \mbox{and} \quad \tau^{(n)}(t):=\tau^{(n)}_{k}
\quad \text{for}\quad t\in[\tau^{(n)}_{k},\tau^{(n)}_{k+1}).
\label{def:OrigContState}
\end{equation}


\vspace{5mm}

\begin{remark} Overall, the state and time dynamics of our models are driven by the random sequences $\{\varphi_k\}$ (event times), $\{\phi_k\}$ (event types), $\{\pi^I_k\}$ (placement/cancelation price levels) and $\omega^I_k$ (placed/canceled orders). The joint dynamics of all models can be defined in terms of suitable independent families
\[
	\kappa_k := \left\{ (\varphi_k(B,A), \phi_k(B,A))_{(A,B) \in \mathbb{R}^2}, (\pi^I_k, \omega^I_k)_{I=A, ..., H} \right \}  \quad (k\in\mathbb{N}_{0})
\]
of independent random variables. In particular, the process $\{(S^{(n)}(t), \tau^{(n)}(t))\}_{t \in [0,T]}$ $(n \in \mathbb{N})$ is adapted to the filtration
\[
	{\cal F}^{(n)}_k := \sigma \left( \kappa_s :0 \leq s \leq \lfloor\frac{k}{\Delta t^{(n)}} \rfloor\right).
\]
\end{remark}


\subsection{The main result}

Our main result is Theorem \ref{MainResult}. It states that - with our choice of scaling - the order book dynamics can be described by a coupled ODE:PDE system when $n \to \infty$: the dynamics of the best bid and ask prices will be given in terms of an ODE, while the relative buy and sell volume densities will be given by the respective unique classical solution of a first order linear hyperbolic PDE with variable coefficients.

\begin{theorem}[Law of Large Numbers for LOBs] Let $\{S^{(n)}\}_{n\geq 1}$ be the sequence of continuous time processes defined in (\ref{def:OrigContState}) and suppose that Assumptions \ref{Assumption-1} and \ref{Assumption-2}-\ref{Assumption-7} hold. Then, for all $T>0$ there exists a deterministic process $s:[0,T] \to E$ such that
\begin{equation*}
	\lim_{n \to \infty} \sup_{t\in[0,T]}\|S^{(n)}(t)-s(t)\|_{E} = 0 \quad \text{in probability.}
\end{equation*}
The process $s$ is of the form $s(t)=\left(\begin{array}{c}
                   \gamma(t) \\
                   v(\cdot,t)
                 \end{array}\right),
$
where
$\gamma(t)=\left(\begin{array}{c}
                   b(t) \\
                   a(t)
                 \end{array}\right)$
is the vector of the best bid and ask price  at time $t \in [0,T]$ and $v(x,t)=\left(\begin{array}{c}
                   v_{b}(x,t) \\
                   v_{s}(x,t)
                 \end{array}\right)$
denotes the vector of buy and sell volume densities at $t \in [0,T]$ relative to the best bid and ask price.
In terms of the matrices
\begin{alignat}{2}
A(\cdot)&:=
				\left(\begin{array}{cc}
                          p^{A}(\cdot)-p^{B}(\cdot) & 0 \\
                          0 & p^{E}(\cdot)-p^{F}(\cdot) \\
                        \end{array}
                      \right), ~~ 
      B(\cdot,x):= 
      			\left(\begin{array}{cc}
                          -p^{C}(\cdot)f^{C}(x) & 0 \\
                          0 & -p^{G}(\cdot)f^{G}(x) \\
                        \end{array}
                      \right), \label{def:BCoeff}
\end{alignat}
the vector
\begin{alignat}{2}
                      c(\cdot,x)&:=
				\left(\begin{array}{c}
                          p^{D}(\cdot)f^{D}(x)\\
                          \\
                           p^{H}(\cdot)f^{H}(x) \\
                        \end{array}
                      \right), \label{def:cCoeff}
\end{alignat}
and the function $m(\cdot,\cdot)$ that specifies the expected waiting time between two consecutive active order arrivals, the function $\gamma$ is the unique solution to the 2-dimensional ODE system
\begin{equation} \label{ODE}
\left\{
\begin{array}{ll}
	\frac{\mathrm{d} \gamma(t)}{\mathrm{d} t} &= \frac{A\left(\gamma(t)\right)}{m\left(\gamma(t)\right)}
	\left(\begin{array}{c} 1 \\ 1 \end{array}\right), \quad t \in [0,T]\\
	\gamma(0) &= \left(\begin{array}{c} B_0 \\ A_0 \end{array} \right)
\end{array} \right.
\end{equation}
and $(v_{b},v_{s})$ is the unique non-negative bounded classical solution of the PDE
\begin{equation}
\left\{
\begin{array}{ll}
	v_{t}(t,x)&=\frac{1}{m^{}(\gamma(t))}\Big{(}A\left(\gamma(t)\right)
                      v_{x}(t,x)+B(\gamma(t),x)v(t,x) + c\left(\gamma(t),x\right)\Big{)},\quad (t,x)\in [0,T] \times \mathbb{R} \\
      v(0,x)&=v_{0}(0,x),\quad ~~~~~~~~~~~~~~~~~~~~~~~~~~~~~~~~~~~~~~~~~~~~~~~~~~~~~~~~~~~~~ x\in\mathbb{R}
\end{array} \right. .
\label{eq:MainTheoPDE}
\end{equation}
\label{MainResult}
\end{theorem}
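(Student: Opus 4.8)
The plan is to read the recursion \eqref{def:OrigStateRecurr} as a stochastic averaging scheme in the Hilbert space $E$ and to show that, under the scaling of Assumption \ref{assumption-scaling}, the fluctuations around the conditional-mean dynamics vanish, so that the limit is governed purely by the averaged drift. I would begin with the Doob decomposition
\[
	S^{(n)}_{k} = s^{(n)}_0 + \sum_{l=0}^{k-1} \E\big[\,\mathcal{D}^{(n)}_l(S^{(n)}_l)\,\big|\,\mathcal{F}^{(n)}_l\,\big] + N^{(n)}_k, \qquad N^{(n)}_k := \sum_{l=0}^{k-1}\Big(\mathcal{D}^{(n)}_l(S^{(n)}_l)-\E\big[\,\mathcal{D}^{(n)}_l(S^{(n)}_l)\,\big|\,\mathcal{F}^{(n)}_l\,\big]\Big),
\]
so that $N^{(n)}$ is an $E$-valued martingale whose increments form a triangular martingale difference array. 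Two structural facts drive everything: active orders ($\mathbf{A},\mathbf{B},\mathbf{E},\mathbf{F}$), which move prices by $\pm\Delta x^{(n)}$ and shift the densities, carry probability $\Delta p^{(n)}$ per event, whereas each passive order perturbs a density only by $\Delta v^{(n)}/\Delta x^{(n)}$ on an interval of length $\Delta x^{(n)}$, i.e. by $O(\Delta v^{(n)}/\sqrt{\Delta x^{(n)}})$ in $L^2$.

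Next I would kill the noise. Bounding the conditional second moments of the increments and summing over the $O(1/\Delta t^{(n)})$ events occurring before time $T$, the predictable quadratic variation of $N^{(n)}$ from the passive flow is of order $\tfrac{1}{\Delta t^{(n)}}\cdot\tfrac{(\Delta v^{(n)})^2}{\Delta x^{(n)}}=O(\Delta p^{(n)})$, and that from the active flow is of order $\Delta x^{(n)}$, after substituting $\Delta v^{(n)}\asymp\Delta t^{(n)}$ and $\Delta x^{(n)}\Delta p^{(n)}\asymp\Delta v^{(n)}$; both tend to $0$. The uniform law of large numbers for Banach-space-valued triangular martingale difference arrays proved in the appendix, together with a Doob maximal inequality, then yields $\sup_{k}\|N^{(n)}_k\|_E\to 0$ in probability over the relevant range of $k$. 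The exponent condition $\Delta p^{(n)}=O((\Delta t^{(n)})^\alpha)$ with $\alpha\in(\tfrac12,1)$ enters precisely in this step, to guarantee that sufficiently many passive orders occur between consecutive price moves for their aggregate to self-average while active orders remain frequent enough, so that the hypotheses of the appendix theorem are met uniformly in time.

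I would then identify the averaged vector field. Using Assumptions \ref{Assumption-2}, \ref{Assumption-3} and \ref{Assumption-5}, the conditional expected increment splits into an active and a passive part. For the price coordinates only $\mathbf{A},\mathbf{B}$ (resp. $\mathbf{E},\mathbf{F}$) contribute, giving a per-event bid drift $\Delta x^{(n)}\Delta p^{(n)}(p^{B}-p^{A})(B,A)$ and its ask analogue, which after the clock-time normalisation $1/m(\gamma)$ yield the price ODE \eqref{ODE}; for the density coordinates the active shifts Taylor-expand as $T^{(n)}_{\pm}v-v=\pm\Delta x^{(n)}v_x+o(\Delta x^{(n)})$ — legitimate under the difference-quotient bounds of Assumptions \ref{Assumption-1} and \ref{Assumption-3} — producing the transport term $A(\gamma)v_x$, while the passive cancelation and placement terms average, through the densities $f^{I}$ and the estimates \eqref{eq:RegInitPriceDensIV}, to $B(\gamma,x)v$ and $c(\gamma,x)$. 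Converting from the event index to clock time is handled by the time change $\tau^{(n)}$: the i.i.d. structure of $\{\varphi_k\}$ and Assumption \ref{Assumption-7} let me replace sums of waiting times by their conditional means $m(\gamma)$, so after normalisation the drift of the interpolated process converges to $\tfrac{1}{m(\gamma)}$ times the right-hand sides of \eqref{ODE} and \eqref{eq:MainTheoPDE}.

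Finally I would close the loop and verify well-posedness of the limit. Because $p^{I}$ and $m$ have bounded gradients (Assumptions \ref{Assumption-5}, \ref{Assumption-7}) and the cancelation coefficient is linear in $v_b$, the averaged vector field is Lipschitz on the bounded set in which the densities stay (their supports remain compact since $|\pi^{I}_k|\le M$), so $\sup_{t\le T}\|S^{(n)}(t)-s(t)\|_E$ is dominated by the vanishing noise, the vanishing discretisation error, and an integral of itself; Gronwall's inequality forces it to $0$ in probability. The limit is well posed: \eqref{ODE} is a Lipschitz planar system solvable by Picard--Lindel\"of, and with $\gamma$ fixed the PDE \eqref{eq:MainTheoPDE} is a linear first-order hyperbolic system that I would integrate along the characteristics $\dot x=-A(\gamma(t))/m(\gamma(t))$, obtaining the unique bounded classical solution in closed form, with non-negativity inherited from the proportional form of cancelation and the non-negativity of placement. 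The main obstacle is the infinite-dimensional two-scale averaging: showing that the $L^2$-valued fluctuations generated by the fast passive order flow between price moves vanish uniformly on $[0,T]$, and doing so simultaneously with the state-dependence of the drift (through both the prices and the density itself) and the random time change, so that the martingale control, the averaging and the Gronwall closure cannot be decoupled but must be run together.
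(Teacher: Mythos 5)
Your overall architecture (Doob decomposition into conditional drift plus an $E$-valued TMDA, the appendix law of large numbers to kill the noise, drift identification, Gronwall) is the same averaging philosophy as the paper, but two of your steps would fail as written. The first is your noise bound for the active-order contribution to the \emph{volume} coordinates. The claim that the active flow contributes quadratic variation of order $\Delta x^{(n)}$ ``by a direct computation'' is correct only for the price coordinates, whose increments are $\pm\Delta x^{(n)}$. The volume part of an active increment is $T^{(n)}_{\pm}\big(\eta^{(n)}_{v,k}\big)-\eta^{(n)}_{v,k}$, and for the evolved random step-function density its squared $L^2$-norm is $\sum_j |\eta^{(n),j+1}_{v,k}-\eta^{(n),j}_{v,k}|^2\,\Delta x^{(n)}$, a random quantity that is not small pathwise: the heights at neighbouring ticks differ by accumulated placements/cancelations of size $\Delta v^{(n)}/\Delta x^{(n)}$ each. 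Controlling its expectation is exactly what the paper's Appendix B is for (uniform second-moment bounds on book heights, Lemma \ref{lemma-bounded}; the $\mathcal{O}(\Delta x^{(n)})$ bound on expected height increments, Corollary \ref{lemma-shift2}; the absolute-coordinate representation and estimate (\ref{Lip-estimate})), culminating in Lemma \ref{last-lemma}, which gives $\mathcal{O}\big((\Delta v^{(n)})^{2\alpha}+(\Delta v^{(n)})^{2-\alpha}\big)$ --- not $\mathcal{O}(\Delta x^{(n)})$. This is also where $\alpha\in(\tfrac12,1)$ genuinely enters (through $\beta=\min\{\alpha,1-\alpha/2\}>\tfrac12$ in Corollary \ref{corollary-appendix}), rather than through your heuristic about passive orders self-averaging between price moves. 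Without these estimates the moment hypothesis of the appendix theorem is simply unverified.

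The second gap is in the drift identification and the Gronwall closure. The expansion $T^{(n)}_{\pm}v-v=\pm\Delta x^{(n)}v_x+o(\Delta x^{(n)})$ cannot be applied to the pre-limit densities, which are non-differentiable step functions; the difference-quotient bounds of Assumptions \ref{Assumption-1} and \ref{Assumption-3} give $\mathcal{O}(\Delta x^{(n)})$ Lipschitz-type control, not convergence of difference quotients to $v_x$ of the limit. The Taylor step is legitimate only against the smooth solution $\widehat u$ (Proposition \ref{prop-PDE-solution}), i.e. one must run a finite-difference consistency-plus-stability argument as in Proposition \ref{prop-PDE-scheme}, comparing the random dynamics to the scheme applied to $\widehat u$, not expand the random states. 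Relatedly, ``the averaged field is Lipschitz, so Gronwall closes'' is too crude: the one-step map carries shift terms with coefficient $\Delta p^{(n)}$, and since $\Delta p^{(n)}/\Delta t^{(n)}\approx 1/\Delta x^{(n)}\to\infty$, a naive per-step Lipschitz constant $1+C\Delta p^{(n)}$ compounds over $T/\Delta t^{(n)}$ steps to $\exp\big(CT\,\Delta p^{(n)}/\Delta t^{(n)}\big)\to\infty$. Stability holds only because the identity and shift coefficients form, up to $\mathcal{O}(\Delta v^{(n)})$, a convex combination and $T^{(n)}_{\pm}$ are $L^2$-isometries, so the $\Delta p^{(n)}$-terms cancel and the per-step constant is $1+\mathcal{O}(\Delta t^{(n)})$; this cancellation is the linchpin of the paper's stability estimate and is absent from your plan. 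Finally, the difficulty you flag at the end --- that noise control, averaging and the random time change ``must be run together'' --- is precisely what the paper avoids by the state/time separation of Proposition \ref{prop-state-and-time} (state process on a deterministic grid, Billingsley's time-change theorem) and by the intermediate processes $\widetilde u^{(n)}$ and $u^{(n)}$ that decouple price noise from volume noise; your single coupled Gronwall loop does not provide a substitute for that decomposition.
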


 The analysis of the limiting dynamics can be simplified by separating the randomness on the level of order arrival times from that of order types as shown in the following section. Subsequently, we give an explicit solution to the limiting PDE.


\subsubsection{State and time separation}

 For the continuous-time process $S^{(n)}$ we write
\[
	S^{(n)}(t) = \left( S^{(n)}_\gamma(t), S^{(n)}_v(t) \right)
\]
where $S^{(n)}_\gamma \in \mathbb{R}^2$ describes the dynamics of bid and ask prices and $S^{(n)}_v(t) \in L^2({\mathbb R}) \times L^2({\mathbb R})$ describes the dynamics of the buy and sell volume density functions. According to the following proposition the process can be expressed as the composition of a \textit{state process} $\eta^{(n)}$ and a \textit{time process} $\mu^{(n)}$. The proof follows from straightforward modifications of arguments given in Anisimov \cite[p.108]{Anisimov95}.

\begin{proposition}[State and time separation] \label{prop-state-and-time}
The process $S^{(n)}$ can be expressed as the composition of a random state process
\[
	\eta^{(n)}(t) = \left( \eta^{(n)}_\gamma(t), \eta^{(n)}_v(t) \right)
\]	
and a random time process $\mu^{(n)}$ as
\begin{equation*}
	S^{(n)}(t)=\eta^{(n)}\left(\mu^{(n)}(t)- \Delta t^{(n)} \right).
\end{equation*}
The state and time process is given by
\begin{equation} \label{state-process}
	\eta^{(n)}(t):=S_{k}^{(n)} \quad \mbox{for} \quad t \in\left[t^{(n)}_k, t^{(n)}_{k+1} \right)
\end{equation}
where $t^{(n)}_k := k \cdot \Delta t^{(n)}$ and
\begin{equation} \label{time-process}
	y^{(n)}(u) :=\tau_{k}^{(n)} \quad \mbox{for} \quad u \in\left[\tau^{(n)}_k, \tau^{(n)}_{k+1} \right),
\end{equation}
respectively. The time-change $\mu^{(n)}$ is then defined in terms of $y^{(n)}$ as
\begin{equation*}
	\mu^{(n)}(t):=\inf \{u > 0: y^{(n)}(u)>t\}.
    	\label{def:mu_n}
\end{equation*}
\end{proposition}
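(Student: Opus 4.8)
The plan is to prove this as a pathwise, deterministic identity: for each fixed $\omega \in \Omega$ and each $t \ge 0$ I would verify that both sides equal the same embedded state $S^{(n)}_j$, where $j = j(t,\omega)$ is the number of events that have occurred by physical time $t$. The conceptual content is that $S^{(n)}$ and $\eta^{(n)}$ record the \emph{same} sequence of states $\{S^{(n)}_k\}_k$, but read off two different clocks: $S^{(n)}$ advances according to the physical event times $\tau^{(n)}_k$, whereas $\eta^{(n)}$ advances according to the uniform fictitious times $t^{(n)}_k = k\,\Delta t^{(n)}$. The map $\mu^{(n)}$ translates physical time into fictitious time, and the subtraction of $\Delta t^{(n)}$ corrects the off-by-one shift created by taking the right-continuous generalized inverse $\mu^{(n)}(t) = \inf\{u > 0 : y^{(n)}(u) > t\}$. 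For this to work I read $y^{(n)}$ as the record of physical time along the fictitious clock, i.e.\ $y^{(n)}(u) = \tau^{(n)}_k$ for $u \in [t^{(n)}_k, t^{(n)}_{k+1})$, so that $y^{(n)}$ reports the physical time elapsed after as many events as have occurred by fictitious time $u$.

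Concretely, first I would fix $t$ and let $j$ be the unique index for which the sojourn interval $[\tau^{(n)}_j, \tau^{(n)}_{j+1})$ is non-empty and contains $t$; by \eqref{def:OrigContState} this gives $S^{(n)}(t) = S^{(n)}_j$. Next I would evaluate $\mu^{(n)}(t)$. Since $\{\tau^{(n)}_k\}_k$ is non-decreasing, $y^{(n)}(u) > t$ holds precisely when the associated index $k = \lfloor u/\Delta t^{(n)}\rfloor$ satisfies $\tau^{(n)}_k > t$, which (using $\tau^{(n)}_{j+1} > t \ge \tau^{(n)}_j$) is equivalent to $k \ge j+1$; hence the infimum is attained at $u = t^{(n)}_{j+1} = (j+1)\,\Delta t^{(n)}$, so $\mu^{(n)}(t) = t^{(n)}_{j+1}$. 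Therefore $\mu^{(n)}(t) - \Delta t^{(n)} = t^{(n)}_j \in [t^{(n)}_j, t^{(n)}_{j+1})$, and \eqref{state-process} yields $\eta^{(n)}(\mu^{(n)}(t) - \Delta t^{(n)}) = S^{(n)}_j$. Comparing the two computations gives the claimed identity for every $t$, and since both blocks of $S^{(n)}_k$ (the price vector in $\mathbb{R}^2$ and the volume pair in $L^2 \times L^2$) are read off the same embedded state, the identity holds jointly in $E$ with no further componentwise argument.

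The argument is essentially bookkeeping, so the ``obstacle'' lies in handling the degenerate and boundary cases cleanly rather than in any genuine difficulty. The delicate point is the exact size of the shift: one must check that $\mu^{(n)}$ lands on the \emph{right} endpoint $t^{(n)}_{j+1}$ of the fictitious sojourn interval, which is what forces the correction to be exactly one fictitious step $\Delta t^{(n)}$; this is the part adapted from Anisimov \cite[p.108]{Anisimov95}. Two technical points deserve care. First, the definition of $j(t)$ and the computation of the infimum presume that $\{\tau^{(n)}_k\}_k$ is non-decreasing with $\tau^{(n)}_k \to \infty$, so that the non-empty sojourn intervals partition $[0,\infty)$ and $j(t)$ is well defined and finite; the possibility $\varphi_k = 0$ is harmless, since an empty physical sojourn interval merely means the corresponding state is never observed in physical time, while every fictitious interval has positive length $\Delta t^{(n)}$. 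Second, one should record that $\mu^{(n)}$ is measurable, being the right-continuous generalized inverse of the non-decreasing, right-continuous process $y^{(n)}$, so that $\eta^{(n)} \circ (\mu^{(n)} - \Delta t^{(n)})$ is a bona fide random element of $E$; together with the pathwise identity above, this completes the proof.
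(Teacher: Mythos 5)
Your proof is correct, and it supplies something the paper itself does not: the paper gives no argument for Proposition \ref{prop-state-and-time} at all, saying only that it ``follows from straightforward modifications of arguments given in Anisimov [p.~108]''. Your pathwise verification --- fix $\omega$ and $t$, identify the embedded index $j(t)$, compute $\mu^{(n)}(t) = t^{(n)}_{j+1}$, and observe that the $-\Delta t^{(n)}$ shift exactly undoes the one-step overshoot of the right-continuous generalized inverse --- is precisely the bookkeeping those ``straightforward modifications'' amount to, so your route coincides in spirit with the intended one; yours is simply written out, degenerate cases ($\varphi_k = 0$, hence empty physical sojourn intervals) included.

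One point you handled correctly deserves emphasis, because it is the only place the argument could genuinely fail. The proposition as printed defines $y^{(n)}(u) := \tau^{(n)}_k$ for $u \in [\tau^{(n)}_k, \tau^{(n)}_{k+1})$, i.e.\ on \emph{physical}-time sojourn intervals. Under that literal reading the claimed identity is false: one gets $\mu^{(n)}(t) = \tau^{(n)}_{j+1}$, so $\eta^{(n)}\bigl(\mu^{(n)}(t) - \Delta t^{(n)}\bigr) = S^{(n)}_m$ with $m = \bigl\lfloor \sum_{l \le j} \varphi_l \bigr\rfloor - 1$, which differs from $j$ except in degenerate cases such as $\varphi_l \equiv 1$. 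Your reading --- $y^{(n)}(u) = \tau^{(n)}_k$ for $u \in [t^{(n)}_k, t^{(n)}_{k+1})$, the physical time elapsed after as many events as the fictitious clock has counted --- is the intended one: it matches how the paper actually uses $y^{(n)}$ in part b) of the proof of Proposition \ref{convergence-prices}, where the analogous process $y^{(n)}(t) = \sum_{k=0}^{\lfloor t/\Delta x^{(n)}\rfloor} \zeta^{(n)}(\widehat{\eta}^{(n)}_{\gamma,k}) \cdot \Delta x^{(n)}$ is constant on fictitious-time intervals and is then inverted to produce $\mu^{(n)}$. So your proof both repairs a typo in the statement and fills in the omitted verification; the side conditions you record (monotonicity of $\{\tau^{(n)}_k\}$, $\tau^{(n)}_k \to \infty$ so that $j(t)$ is finite, measurability of the composition) are exactly the right ones.
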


\mbox{ }

  The advantage of the state and time separation is that the processes $\eta^{(n)}$ and $\mu^{(n)}$ can be analyzed separately. In fact, we will show convergence in probability
\begin{equation*}
\lim_{n \to \infty} \sup_{t\in[0,T]}\left\|\eta^{(n)}(t)-\eta(t)\right\|_{E} = 0 ~~ \mbox{and}  ~~ \lim_{n \to \infty} \sup_{t\in[0,T]}\left|\mu^{(n)}(t)-\mu(t) \right| = 0
\end{equation*}
to limiting processes $\eta(t)=(\eta_\gamma(t), \eta_v(t))$ and $\mu(t)$. Since the state sequence takes values in the Hilbert space $E$, the time change theorem as proved, e.g. Billingsley \cite[p. 151]{BillingsleyConvProbMeas}, then implies that
\begin{equation*}
	\lim_{n \to \infty} \sup_{t \in [0,T]} \| S^{(n)}(t) - \eta(\mu(t)) \|_E =0 \quad\text{in probability.}
\end{equation*}
In our model, bid and ask prices are sufficient statistics for the evolution of the order book. In particular, the limiting behavior of the sequences $\eta^{(n)}_\gamma$ and $\mu^{(n)}$ can be analyzed without reference to volumes. In Section \ref{chapter-prices} we prove the following proposition.

\begin{proposition} \label{convergence-prices}
Let $\hat{\gamma}$ be the unique solution to the ODE
\begin{alignat}{2}
\begin{cases}
\frac{\mathrm{d} \hat{\gamma}(t)}{\mathrm{d} t}=A(\hat{\gamma}(t))
\left(
  \begin{array}{c}
   1 \\
   1 \\
  \end{array}
\right),\quad t\in(0,T]\\
\\
\hat{\gamma}(0)=\left(
  \begin{array}{c}
    B_0 \\
    A_0 \\
  \end{array}
\right).
\label{eq:autonStatePrice}
\end{cases}.
\end{alignat}
Then
\[
	\lim_{n \to \infty} \sup_{0 \leq t \leq T} \left| \eta^{(n)}_\gamma(t) - \hat{\gamma}(t) \right| = 0 \quad \mbox{in probability.}
\]
Moreover, the sequence of processes $\mu^{(n)}$ satisfies
\[
	\lim_{n \to \infty} \sup_{0 \leq t \leq T} \left| \mu^{(n)}(t) - \mu(t) \right| = 0 \quad \mbox{in probability.} \qquad \mbox{where} \qquad
	\mu^{-1}(t) = \int_0^t m(\hat{\gamma}_u) du.
\]
\end{proposition}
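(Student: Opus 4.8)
The plan is to read this as a fluid-limit (ODE-method) statement for the price coordinates and then to transfer the conclusion to the time change. I would work on the event-time scale, on which $\eta^{(n)}_\gamma(t)=S^{(n)}_{\gamma,\lfloor t/\Delta t^{(n)}\rfloor}$. Writing the price vector after $k$ events as a telescoping sum and inserting conditional expectations, I would use the martingale decomposition
\[
 S^{(n)}_{\gamma,k}=\gamma^{(n)}_0+\sum_{j=0}^{k-1}\mathbb{E}\big[S^{(n)}_{\gamma,j+1}-S^{(n)}_{\gamma,j}\,\big|\,\mathcal{F}^{(n)}_j\big]+N^{(n)}_k,
\]
where $N^{(n)}$ is a martingale. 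Only the active events $\mathbf{A},\mathbf{B},\mathbf{E},\mathbf{F}$ move the prices, each by $\pm\Delta x^{(n)}$ with conditional probability $\Delta p^{(n)}p^{I}(B^{(n)}_j,A^{(n)}_j)$, so the conditional mean increment is exactly $\Delta x^{(n)}\Delta p^{(n)}A(B^{(n)}_j,A^{(n)}_j)(1,1)^{\top}$, the discrete analogue of the ODE drift.

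Next I would show the martingale is negligible. Each increment of $N^{(n)}$ is bounded by $\Delta x^{(n)}$ and is nonzero only on an active event, so its conditional second moment is $\mathcal{O}\big((\Delta x^{(n)})^2\Delta p^{(n)}\big)$; summing the $\mathcal{O}(1/\Delta t^{(n)})$ increments up to time $T$ gives a quadratic variation of order $T(\Delta x^{(n)})^2\Delta p^{(n)}/\Delta t^{(n)}=\mathcal{O}(T\Delta x^{(n)})\to0$ by Assumption \ref{assumption-scaling}, whence $\sup_{t\le T}|N^{(n)}_{\lfloor t/\Delta t^{(n)}\rfloor}|\to0$ in probability by Doob's $L^2$-maximal inequality. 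For the drift, since $\eta^{(n)}_\gamma$ is piecewise constant on the intervals $[j\Delta t^{(n)},(j+1)\Delta t^{(n)})$ and $\Delta x^{(n)}\Delta p^{(n)}/\Delta t^{(n)}\to c_0c_1=1$, the cumulative conditional mean equals $\int_0^t A(\eta^{(n)}_\gamma(s))(1,1)^{\top}\,ds$ up to a boundedness-induced error of order $\Delta t^{(n)}$. Collecting terms, $\eta^{(n)}_\gamma$ satisfies the integral form of (\ref{eq:autonStatePrice}) with a uniform remainder $\varepsilon_n\to0$ in probability (initial data by Assumption \ref{Assumption-1}, plus the martingale and discretization errors). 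Since the entries of $A$ have bounded gradients (Assumption \ref{Assumption-5}), $A$ is bounded and Lipschitz, so (\ref{eq:autonStatePrice}) has a unique global solution $\hat\gamma$; subtracting the two integral equations and applying Gronwall to $t\mapsto\sup_{s\le t}|\eta^{(n)}_\gamma(s)-\hat\gamma(s)|$ yields the first claim.

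For the time change I would argue analogously on $y^{(n)}$. Since $\tau^{(n)}_k=\sum_{j<k}\varphi_j\Delta t^{(n)}$ and, by Wald's identity together with $\mathbb{E}[r^{(n)}]=1/\Delta p^{(n)}$, the conditional expectation $\mathbb{E}[\varphi_j\mid B^{(n)}_j,A^{(n)}_j]$ equals $m(B^{(n)}_j,A^{(n)}_j)$, the same mean-plus-martingale decomposition gives $y^{(n)}(u)=\sum_{j<k}\Delta t^{(n)}m(\eta^{(n)}_\gamma(j\Delta t^{(n)}))+(\text{martingale})$. The centered part has quadratic variation $\mathcal{O}(T\Delta t^{(n)})$ (here a finite second moment of $\varphi$ is needed) and vanishes by Doob, while the mean part is a Riemann sum converging to $\int_0^u m(\hat\gamma_s)\,ds=\mu^{-1}(u)$ by the convergence of $\eta^{(n)}_\gamma$ and continuity of $m$. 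Thus $y^{(n)}\to\mu^{-1}$ uniformly in probability; as $m>0$ makes $\mu^{-1}$ continuous and strictly increasing, the standard lemma on convergence of inverses of monotone functions transfers this to $\mu^{(n)}=(y^{(n)})^{-1}\to\mu$ uniformly in probability.

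The main obstacle I anticipate is the bookkeeping that ties the four scaling constants together in the martingale estimates: one must check that the quadratic variations genuinely collapse under Assumption \ref{assumption-scaling} rather than merely stay bounded, and this is exactly where the rates $\Delta x^{(n)}\Delta p^{(n)}/\Delta t^{(n)}\to1$ and $\Delta x^{(n)}\to0$ are essential. A secondary technical point is justifying $\mathbb{E}[\varphi\mid\text{prices}]=m$ and securing the second-moment control of $\varphi$ needed to kill the time-change martingale; everything downstream (Riemann sums, Gronwall, inverse convergence) is then routine.
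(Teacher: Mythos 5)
Your proof is correct, but it takes a genuinely different route from the paper's. The paper never works with a single martingale decomposition on the event-time scale; instead it introduces a second, \emph{active-order} time scale $s^{(n)}_k = k\cdot\Delta x^{(n)}$, proves first (Lemma \ref{lemma-hatD}) that the event-time price process $\eta^{(n)}_\gamma$ and the active-order-time process $\widehat\eta^{(n)}_\gamma$ are indistinguishable in the limit, and then observes (Lemma \ref{lem:ODEconv}) that on that scale the expected-increment recursion is \emph{exactly} a classical Euler scheme with step $\Delta x^{(n)}$ and drift $A(\cdot)(1,1)^{\top}$ --- no scaling-ratio argument is needed for the drift, since the conditional type probabilities of an active order are $n$-independent; the scaling assumption enters only through Lemma \ref{lemma-hatD}. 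The noise is killed there by the Banach-space TMDA law of large numbers from the appendix (Pisier's inequality), and the time change is handled with the aggregated inter-active-order times $\zeta^{(n)}_k$ rather than the individual $\varphi_j$. Your version is more elementary where it can afford to be: for the finite-dimensional price coordinates, Doob's $L^2$ maximal inequality with the predictable quadratic variation bound $\mathcal{O}\bigl(T(\Delta x^{(n)})^2\Delta p^{(n)}/\Delta t^{(n)}\bigr)=\mathcal{O}(T\Delta x^{(n)})$ indeed suffices, and your identification of the drift via $\Delta x^{(n)}\Delta p^{(n)}/\Delta t^{(n)}\to c_0c_1=1$ makes the role of Assumption \ref{assumption-scaling} more transparent than in the paper. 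Conversely, the paper's two-scale setup buys uniformity: the same TMDA machinery is reused verbatim for the infinite-dimensional volume densities (where Doob alone would not give the Banach-space estimates), and the Euler-scheme picture is cleaner. Your treatment of the time change on the event scale, with $m=\mathbb{E}[\varphi\mid B,A]$ obtained from Wald's identity and $\mathbb{E}[r^{(n)}]=1/\Delta p^{(n)}$, is equivalent to the paper's definition of $m$ as the conditional mean of $\zeta^{(n)}_k$; note that the uniform second-moment condition on $\varphi$ that you flag explicitly is also needed (but left implicit, inside the phrase ``a direct computation shows'') in the paper's application of Theorem \ref{thm-appendix}, so this is not a gap relative to the paper --- it is an honest statement of a shared hypothesis. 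Likewise both arguments tacitly use $m>0$ to invert $y$; you state it, the paper does not.
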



\mbox{ }

 Once the limiting time-change process $\mu$ has been identified, what remains to finish the proof of Theorem \ref{MainResult}, is to establish convergence of the volume processes $\eta^{(n)}_v$ to their deterministic limit. This will be achieved in Section \ref{chapter-volumes} where we prove the following result.
\begin{proposition} \label{convergence-volumes}
Let $\widehat{u}$ be the unique classical solution of the PDE
\begin{equation}
\left\{
\begin{array}{ll}
	\widehat{u}_{t}(x,t)&= 
	A\left(\widehat{\gamma}(t)\right)
                      \widehat{u}_{x}(x,t)+B(x,\widehat{\gamma}(t))\widehat{u}(x,t) + c\left(x,\widehat{\gamma}(t)\right),\quad (x,t)\in\mathbb{R}\times[0,T] \\
      \widehat{u}(x,0)&=v_{0}(x), ~~~~~~~~~~~~~~~~~~~~~~~~~~~~~~~~~~~~~~~~~~~~~~~~~~~~\quad x\in\mathbb{R}
\end{array} \right. .
\label{PDE2}
\end{equation}
Then
\[
	\lim_{n \to \infty} \sup_{0 \leq t \leq T} \left \| \eta^{(n)}_v(t;\cdot) - \widehat{u}(t,\cdot) \right \|_{L^2} = 0 \quad \mbox{in probability.}
\]
\end{proposition}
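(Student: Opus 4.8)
The plan is to work directly with the autonomous, fast-time-scale state process: by Proposition \ref{convergence-prices} we have $\eta^{(n)}_\gamma(t)\to\widehat\gamma(t)$ and $\mu^{(n)}\to\mu$ uniformly in probability, so throughout the argument I may replace the prevailing prices $(B^{(n)}_k,A^{(n)}_k)$ entering the coefficients $p^I$ by the deterministic limiting path $\widehat\gamma$, at the cost of a uniformly vanishing error controlled by the bounded gradients in Assumption \ref{Assumption-5}. Since $\eta^{(n)}_v$ advances by one step on the \emph{deterministic} grid $t^{(n)}_k=k\,\Delta t^{(n)}$, I telescope and apply a Doob decomposition against the natural filtration $\{\mathcal{F}^{(n)}_k\}$:
\[
\eta^{(n)}_v(t) \;=\; v_{0} \;+\; \sum_{k=0}^{K_n(t)-1}\mathbb{E}\!\left[\delta^{(n)}_k \,\big|\, \mathcal{F}^{(n)}_k\right] \;+\; \sum_{k=0}^{K_n(t)-1}\Delta N^{(n)}_k,
\qquad K_n(t):=\big\lfloor t/\Delta t^{(n)}\big\rfloor,
\]
where $\delta^{(n)}_k$ is the pair of volume increments in \eqref{D} (the third and fourth components of $\mathcal{D}^{(n)}_k(S^{(n)}_k)$) and $\Delta N^{(n)}_k:=\delta^{(n)}_k-\mathbb{E}[\delta^{(n)}_k\mid\mathcal{F}^{(n)}_k]$ is an $L^2$-valued martingale difference. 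Because $K_n(t)$ is deterministic, no random time change enters here; that randomness is carried entirely by $\mu^{(n)}$.

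\textbf{Identifying the drift.} Evaluating the one-step conditional expectation from \eqref{D} together with Assumptions \ref{Assumption-2}, \ref{Assumption-3} and \ref{Assumption-5}, the compensator increment on the buy side equals
\[
\Delta p^{(n)}\big[p^A(T^{(n)}_+-\mathrm{Id})+p^B(T^{(n)}_--\mathrm{Id})\big]v^{(n)}_{b,k}
\;+\;(1-\Delta p^{(n)})\,\Delta v^{(n)}\big[-p^C f^{(n),C}v^{(n)}_{b,k}+p^D f^{(n),D}\big],
\]
with $p^I$ at $(B^{(n)}_k,A^{(n)}_k)$, and symmetrically on the sell side; here I used $\mathbb{E}[\omega^I_k]\,\mathbb{P}[\pi^I_k\in[x^{(n)}_j,x^{(n)}_{j+1})]=\Delta x^{(n)}f^{(n),I}_j$. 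Dividing by $\Delta t^{(n)}$ and invoking $\tfrac{1}{\Delta x^{(n)}}(T^{(n)}_\pm-\mathrm{Id})\to\pm\partial_x$ with the scalings $\Delta x^{(n)}\Delta p^{(n)}/\Delta v^{(n)}\to1$ and $\Delta v^{(n)}/\Delta t^{(n)}\to1$ of Assumption \ref{assumption-scaling}, the active part yields the transport term $A(\widehat\gamma)\,\partial_x v$ (with $A$ as in \eqref{def:BCoeff}, since $p^A\partial_x-p^B\partial_x=(p^A-p^B)\partial_x$), while the passive part yields $B(\widehat\gamma,\cdot)v+c(\widehat\gamma,\cdot)$ via $\|f^{(n),I}-f^I\|_\infty\to0$ from \eqref{eq:RegInitPriceDensIV}. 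The second-order correction $\tfrac12(p^A+p^B)(\Delta x^{(n)})^2\partial_{xx}$ contributes at rate $\mathcal{O}(\Delta t^{(n)}/\Delta p^{(n)})\to0$ and is negligible. Thus the compensator is a Riemann-sum approximation to $\int_0^t\!\big[A(\widehat\gamma_s)\partial_x+B(\widehat\gamma_s,\cdot)\big]\eta^{(n)}_v(s)+c(\widehat\gamma_s,\cdot)\,\mathrm{d}s$, matching the (mild form of the) right-hand side of \eqref{PDE2}.

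\textbf{Vanishing of the noise.} For the martingale term I would apply the uniform law of large numbers for $L^2$-valued triangular martingale difference arrays proved in the appendix; its hypotheses reduce to $\sum_{k}\mathbb{E}\|\Delta N^{(n)}_k\|_{L^2}^2\to0$. A passive increment $\tfrac{\Delta v^{(n)}}{\Delta x^{(n)}}\omega^I_k\mathbf{1}_{[x^{(n)}_j,x^{(n)}_{j+1})}$ has squared $L^2$-norm $\mathcal{O}\big((\Delta v^{(n)})^2/\Delta x^{(n)}\big)$, while an active increment has $L^2$-norm $\mathcal{O}(\Delta x^{(n)})$ and occurs with probability $\Delta p^{(n)}$; summing the $K_n(T)=\mathcal{O}(1/\Delta t^{(n)})$ per-step second moments gives a total of order $\Delta p^{(n)}+\Delta t^{(n)}/\Delta p^{(n)}$, which vanishes under Assumption \ref{assumption-scaling} (the range $\alpha\in(\tfrac12,1)$ being what forces both summands, together with the maximal inequality underlying the uniform-in-$t$ statement, to tend to zero).

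\textbf{Closing the argument and the main obstacle.} Finally I would compare $\eta^{(n)}_v$ with the classical solution $\widehat u$ of \eqref{PDE2}, whose existence, uniqueness, boundedness and $C^1$-regularity I would establish by the method of characteristics: the system is linear and first-order hyperbolic with diagonal transport matrix $A(\widehat\gamma)$ and bounded, smooth coefficients $B,c$ (this also gives the closed form). The decisive difficulty is that the limiting transport operator $A(\widehat\gamma)\partial_x$ is \emph{unbounded} on $L^2$, so a direct Gronwall estimate on $\|\eta^{(n)}_v(t)-\widehat u(t)\|_{L^2}$ fails — the discrete generator has Lipschitz constant blowing up like $1/\Delta x^{(n)}$. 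The remedy exploits that the shifts $T^{(n)}_\pm$ are $L^2$-isometries: I would recast both the compensator dynamics and the PDE in Duhamel (variation-of-constants) form against the translation semigroup generated by the transport part, so that the unbounded operator enters only through an isometric flow, while the genuinely perturbative terms $B(\widehat\gamma,\cdot)v+c(\widehat\gamma,\cdot)$ act as bounded multiplication operators on $L^2$. The consistency error $\|(\text{discrete}-\text{continuous generator})\widehat u\|_{L^2}\to0$ is then controlled by testing the difference operators against the \emph{smooth} limit $\widehat u$ rather than against the merely $L^2$-regular $\eta^{(n)}_v$, using \eqref{eq:RegInitPriceDensIV} and Assumption \ref{Assumption-1}. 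Gronwall's inequality applied to the integrated equation, combined with the vanishing martingale and with $\eta^{(n)}_\gamma\to\widehat\gamma$, then yields $\sup_{0\le t\le T}\|\eta^{(n)}_v(t)-\widehat u(t)\|_{L^2}\to0$ in probability.
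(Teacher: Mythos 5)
Your overall architecture (Doob decomposition, identification of the compensator with a discretization of the PDE, an LLN for Hilbert-space-valued martingale difference arrays for the noise, and an isometry-based stability/Gronwall comparison against the smooth solution $\widehat u$) is close in spirit to the paper's, which runs the same ideas through a chain of intermediate processes $\widehat u^{(n)}, u^{(n)}, \widetilde u^{(n)}$. However, there is a genuine gap, and it sits exactly at the technical core of the result: your claim that an active increment $T^{(n)}_{\pm}\eta^{(n)}_{v,k}-\eta^{(n)}_{v,k}$ has $L^2$-norm $\mathcal{O}(\Delta x^{(n)})$. That spatial-regularity statement is true for the \emph{expected} density, with the expectation \emph{inside} the norm (this is Lemma \ref{lemma-shift} and Corollary \ref{un-lip}), but it is false in mean square for the random density: passive events deposit stochastically independent bumps of height $\Delta v^{(n)}/\Delta x^{(n)}$ into single cells, and after $k\sim T/\Delta t^{(n)}$ steps the accumulated cell-to-cell fluctuations give
\[
	\Delta p^{(n)}\,\mathbb{E}\bigl\|T^{(n)}_{\pm}\eta^{(n)}_{v,k}-\eta^{(n)}_{v,k}\bigr\|_{L^2}^2
	=\mathcal{O}\bigl((\Delta v^{(n)})^{2\alpha}+(\Delta v^{(n)})^{2-\alpha}\bigr),
\]
which is the content of the paper's Lemma \ref{last-lemma}; the term $(\Delta v^{(n)})^{2\alpha}$ \emph{dominates} your claimed $\Delta p^{(n)}(\Delta x^{(n)})^2\approx(\Delta v^{(n)})^{2-\alpha}$ whenever $\alpha<2/3$. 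Your per-step noise bound is therefore too optimistic. The conclusion happens to survive, since $\beta=\min\{\alpha,1-\alpha/2\}>1/2$ still meets the hypothesis of Corollary \ref{corollary-appendix}, but establishing the correct bound (via uniform boundedness of expected order book heights, Lemma \ref{lemma-bounded}, and the discrete gradient estimate, Corollary \ref{lemma-shift2}) is precisely the hard part of the paper's proof and is absent from your proposal. A smaller inaccuracy in the same step: the TMDA theorem does not merely require $\sum_k\mathbb{E}\|\Delta N^{(n)}_k\|^2_{L^2}\to0$, it requires a polynomial per-step rate $\mathbb{E}\|\Delta N^{(n)}_k\|^2_{L^2}=\mathcal{O}\bigl((\Delta t^{(n)})^{2\beta}\bigr)$ with $\beta>1/2$.

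The same false estimate undermines your opening move of replacing $p^I(B^{(n)}_k,A^{(n)}_k)$ by $p^I(\widehat\gamma(t^{(n)}_k))$ ``throughout''. For the transport part of the drift, the replacement error summed over $\lfloor T/\Delta t^{(n)}\rfloor$ steps is of order $\frac{\Delta p^{(n)}}{\Delta t^{(n)}}\,\sup_k|\delta p^{(n)}_k|\,\sup_k\|T^{(n)}_{\pm}\eta^{(n)}_{v,k}-\eta^{(n)}_{v,k}\|_{L^2}$; inserting the correct shift bound, this behaves like $\bigl((\Delta t^{(n)})^{3\alpha/2-1}+1\bigr)\sup_k|\delta p^{(n)}_k|$, and for $\alpha\in(\tfrac12,\tfrac23)$ the prefactor blows up while Proposition \ref{convergence-prices} provides convergence of prices in probability \emph{without a rate}. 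This is exactly why the paper orders its comparisons the way it does: the martingale LLN step ($\eta^{(n)}_v$ versus $\widetilde u^{(n)}$) keeps the random prices, and the price replacement ($\widetilde u^{(n)}$ versus $u^{(n)}$) is carried out only at the level of deterministic expected dynamics, where the Lipschitz shift estimate of Corollary \ref{un-lip} is actually available. To repair your proof you would either need to adopt that ordering, or prove a quantitative rate for the price convergence together with the sharp second-moment estimates of Lemma \ref{last-lemma}.
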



\subsubsection{Explicit solution}

 The PDE system (\ref{PDE2}) is coupled only through the limiting price dynamics. In particular, the equations for the buy and sell side can be solved independently. For the buy side PDE, we can write
\begin{equation}
\begin{cases}
\frac{\partial u_{b}}{\partial t}                =A_{b}(t)\frac{\partial u_{b}}{\partial x}+B_{b}(t,x)u_{b}+c_{b}(t,x)\\
u_{b}(0,x)=v_{b,0}(x)
\end{cases}
\label{eq:buySideSolvPDE}
\end{equation}
where $A_{b}(t)= p^{A}(\hat{\gamma}(t))-p^{B}(\hat{\gamma}(t))$, $B_{b}(t,x):=-p^{C}(\hat{\gamma}(t))f^{C}(x)$, $c_{b}(t,x):=p^{D}(\hat{\gamma}(t))f^{D}(x)$.
Using the method of characteristic curves, the PDE reduces to a family of ODE:s; see \cite[Chapter 3]{Evans} for details. The characteristic equations for our buy side PDE read
\begin{equation}
\begin{cases}
\frac{\mathrm{d} x}{\mathrm{d} \tau}=-A_{b}(\tau)\\
x(0)=\xi
\end{cases}
\text{and}\quad
\begin{cases}
\frac{\mathrm{d} \bar{u}_{b}}{\mathrm{d} \tau}                =B_{b}(\tau, x(\tau))\bar{u}_{b}+c_{b}(\tau, x(\tau))\\
\bar{u}_{b}(0,\xi)=v_{b,0}(\xi).
\end{cases}
\label{eq:buySideSolvODE}
\end{equation}
The solution of this system of linear ODE:s as a function of the state $\xi \in \mathbb{R}$ can be given in closed form:
\begin{eqnarray*}
	x(t,\xi) & = & \xi - \int_0^t A_b(t) \mathrm{d}t \\
	\bar{u}_b(t,\xi) &=& \exp\Big( \int_0^t B_b(u,x(u,\xi)) du \Big)  \Big( v_{b,0}(\xi) + \int_0^t \exp\Big( - \int_0^s B_b(u,x(u,\xi)) du \Big)
	c_b(s,x(s,\xi)) ds \Big).
\end{eqnarray*}
It describes the surface $\{ (t,\xi) : u_b(t,x(t,\xi)) = \bar{u}_b(t,\xi) \mbox{ given } u_b(0,\xi) = v_{0,b}(\xi) \}$.
The solution to the buy side PDE can be recovered from the solution to the ODE-system through
\[
	u_b(t,y) = \bar{u}_b \Big(t, y + \int_0^t A_b(s) ds \Big).
\]
Due to our smoothness assumptions on the volume placement and cancelation functions it is not hard to verify that the solution is uniformly bounded with uniformly bounded first and second derivatives with respect to the time and space variable. Moreover, since the function $v_{b,0}$ vanishes outside a compact interval (Assumption \ref{Assumption-1}) and no orders are placed or canceled beyond a distance $M$ from the best bid/ask price (Assumption \ref{Assumption-2}), the function $u_b(t,\cdot)$ vanishes outside some compact interval $I(T)$ for all $t \in [0,T]$. Altogether, one has the following result.

\begin{proposition} \label{prop-PDE-solution}
Under the assumptions of Theorem \ref{MainResult}, the PDE (\ref{eq:MainTheoPDE}) has a unique solution $u_b$. The solution is uniformly bounded, with uniformly bounded first and second derivatives with respect to both variables and there exists an interval $I$ such that $u_b(t,x) = 0$ for all $t \in [0,T]$ and $x \notin I$.
\end{proposition}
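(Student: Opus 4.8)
The plan is to prove the statement by the \emph{method of characteristics}, exactly as sketched before the proposition: one reduces the scalar buy-side equation (\ref{eq:buySideSolvPDE}) to the family of ordinary differential equations (\ref{eq:buySideSolvODE}) and then reads off existence, uniqueness, regularity and compact support from the resulting closed-form representation. The sell side is handled identically, and no joint treatment is needed because the system (\ref{PDE2}) is coupled only through the common price path $\hat{\gamma}$.

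First I would record the regularity of the coefficients. By Assumption \ref{Assumption-5} the functions $p^{I}$ are bounded and continuously differentiable with bounded gradients, while Proposition \ref{convergence-prices} provides the globally defined $C^{1}$ price path $\hat{\gamma}$ on $[0,T]$; hence $A_b(t)=p^{A}(\hat{\gamma}(t))-p^{B}(\hat{\gamma}(t))$ is bounded and $C^{1}$ in $t$. Since Assumption \ref{Assumption-3} makes $f^{C},f^{D}$ of class $C^{2}$ with compact support, the coefficients $B_b(t,x)=-p^{C}(\hat{\gamma}(t))f^{C}(x)$ and $c_b(t,x)=p^{D}(\hat{\gamma}(t))f^{D}(x)$ are bounded, $C^{1}$ in $t$ and $C^{2}$ in $x$, with all of these derivatives uniformly bounded on $[0,T]\times\mathbb{R}$. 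The crucial structural observation is that $A_b$ depends on $t$ only, so the first ODE in (\ref{eq:buySideSolvODE}) integrates to $x(t,\xi)=\xi-\int_0^t A_b(s)\,\mathrm{d}s$: the characteristic base curves are rigid translates of one another, the map $\xi\mapsto x(t,\xi)$ is for each $t$ an affine bijection of $\mathbb{R}$ with inverse $\xi(t,y)=y+\int_0^t A_b(s)\,\mathrm{d}s$, and the characteristics neither cross nor escape to infinity in finite time. This is precisely what makes the construction global and excludes any shock or focusing.

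Along each characteristic the second ODE in (\ref{eq:buySideSolvODE}) is a scalar linear equation, solved by the integrating-factor formula displayed in the text; setting $u_b(t,y):=\bar{u}_b(t,\xi(t,y))$ produces the candidate solution, and I would verify by direct differentiation --- chain rule through $\xi(t,y)$, using $\partial_t\xi(t,y)=A_b(t)$ --- that it satisfies (\ref{eq:buySideSolvPDE}) with $u_b(0,\cdot)=v_{b,0}$. For uniqueness I would take the difference $w$ of two classical solutions: $w$ solves the homogeneous equation with zero initial datum, so along each characteristic $\bar{w}$ obeys $\tfrac{\mathrm{d}}{\mathrm{d}t}\bar{w}=B_b\,\bar{w}$ with $\bar{w}(0)=0$, which forces $\bar{w}\equiv 0$ and hence $w\equiv 0$ (equivalently a Gronwall estimate along characteristics).

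The regularity and support claims then follow by inspecting the representation. Boundedness of $u_b$ and of its derivatives comes from bounding the exponential integrating factors, whose exponents are bounded on $[0,T]$ because $B_b$ is bounded, and from differentiating under the integral sign, every ingredient having uniformly bounded derivatives; the second $x$-derivative in particular draws on $f^{C},f^{D}\in C^{2}$ and the corresponding smoothness of $v_{b,0}$. For the support, since $v_{b,0}$ vanishes outside $[-M,M]$ (Assumption \ref{Assumption-1}) and $f^{C},f^{D}$ have compact support (Assumption \ref{Assumption-3}), while the characteristic shift obeys $|\int_0^t A_b(s)\,\mathrm{d}s|\le\|A_b\|_\infty\,T$, the function $u_b(t,\cdot)$ is supported in one fixed compact interval $I=I(T)$ for all $t\in[0,T]$. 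The only genuinely delicate point is this last regularity bookkeeping --- propagating boundedness of the \emph{second} derivatives through the composition with $\xi(t,y)$ and through the variation-of-constants integral, and making sure the smoothness required there is matched by what the hypotheses grant for the initial datum $v_{b,0}$; the hyperbolic structure itself is otherwise entirely transparent.
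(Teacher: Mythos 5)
Your proof is correct and follows essentially the same route as the paper: the paper establishes Proposition \ref{prop-PDE-solution} precisely by the method of characteristics, writing the closed-form solution of (\ref{eq:buySideSolvODE}), recovering $u_b(t,y)=\bar{u}_b\bigl(t,y+\int_0^t A_b(s)\,\mathrm{d}s\bigr)$, and reading off boundedness, regularity, and the compact-support claim from that representation, just as you do (your explicit uniqueness argument along characteristics and the chain-rule verification merely fill in steps the paper labels ``not hard to verify''). The one delicate point you flag --- that uniformly bounded \emph{second} derivatives require second-order smoothness of $v_{b,0}$, whereas Assumption \ref{Assumption-1} only grants continuous differentiability --- is a gap shared verbatim by the paper's own argument, not something your approach introduces.
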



\subsection{Examples and applications}

In this section, we discus examples where the limiting dynamics can be given in closed form as well as possible applications of our order book model to portfolio liquidation problems. The situation is most transparent when orders arrive according to a Poisson dynamics. Let us therefore assume that in the benchmark model $n=1$  orders arrive according to independent Poisson processes with smooth rate functions $\lambda^{A}(\cdot,\cdot), ..., \lambda^{H}(\cdot,\cdot)$. Then $\{\varphi_k\}$ is simply the minimum of the inter-arrival times of the Poisson processes. Standard arguments yield:
\[
	m(\cdot,\cdot) = \frac{1}{ \lambda^A(\cdot,\cdot) + \cdots + \lambda^H(\cdot,\cdot)}.
\]
To simplify the analysis we normalize the arrival rates so that $m(\cdot,\cdot) \equiv 1$. In the $n$:th model we scale the arrival rates of passive orders by $\Delta t^{(n)}$ and those of active orders by $\Delta x^{(n)}$. The arrival rates in the $n$:th model thus satisfy:
\[	
	\lambda^{(n),I}(\cdot,\cdot) = \frac{ \lambda^I(\cdot,\cdot)}{\Delta x^{(n)}}  \quad \mbox{for} \quad I = {\bf A}, {\bf B}, {\bf E}, {\bf F}.
\]	
and
\[	
	\lambda^{(n),I}(\cdot,\cdot) =  \frac{ \lambda^I(\cdot,\cdot)}{\Delta t^{(n)}} \quad \mbox{for} \quad I = {\bf C}, {\bf D}, {\bf G}, {\bf H}.
\]
Then the probability of the next event being an active order is of the order $\Delta p^{(n)} = \frac{\Delta t^{(n)}}{\Delta x^{(n)}}$ and:
\begin{eqnarray*}
			p^{I}(\cdot,\cdot) & = & \frac{\lambda^I(\cdot,\cdot)}{ \lambda^A(\cdot,\cdot) + \lambda^B(\cdot,\cdot) + \lambda^E(\cdot,\cdot) +\lambda^F(\cdot,\cdot)} \quad \mbox{for} \quad I= {\bf A}, {\bf B}, {\bf E}, {\bf F} \\
	p^{I}(\cdot,\cdot) & = & \frac{\lambda^I(\cdot,\cdot)}{\lambda^C(\cdot,\cdot) + \lambda^D(\cdot,\cdot) + \lambda^G(\cdot,\cdot) +\lambda^H(\cdot,\cdot)} \quad \mbox{for} \quad I= {\bf C}, {\bf D}, {\bf G}, {\bf H}. \\
%
\end{eqnarray*}

\subsubsection{Price dynamics}

Within the Poisson framework we can readily set up the ODE for the limiting dynamics of the best bid and ask price process. Of course, the \textit{spread} $\mathfrak{S}(t) := a(t) - b(t)$ should be non-negative at all times. This can easily be achieved if we require $p^A(\cdot)=p^E(\cdot)=0$ for $\mathfrak{S}(t)=0$. For most application it would in fact be sufficient to assume that the price dynamics depend on the prevailing best bid and ask prices only through the spread. If the stationarity condition
\[
	\lambda^A - \lambda^B = \lambda^E-\lambda^F = 0
\]
holds, then the spread is constant, $\mathfrak{S}(t) \equiv \mathfrak{S}(0)$, and always positive if the we require the limiting initial spread to be positive.
The following is a simple example where the spread is initially positive, then becomes zero, and eventually opens again.


\begin{example} \label{ex}
Assume that the order arrival rates take the form
\begin{equation*}
\left\{
\begin{array}{ll}
\lambda^{A}(b(t),a(t))&=\lambda^{B}(b(t),a(t))\\
\\
\lambda^{E}(b(t),a(t))-\lambda^{F}(b(t),a(t))&=2\sqrt{a(t)-B_{0}}, \quad t \in [0,T]
\end{array} \right. .
\end{equation*}
Then the bid price is constant, $b(t) \equiv B_{0}$, and $a(t)=B_{0}+\Big(t-\frac{T}{2}\Big)^{2}$ for $t \in [0,T]$; see Figure 4.
\end{example}

\begin{figure} \label{fig-ex}
\centering
\includegraphics[width=10cm]{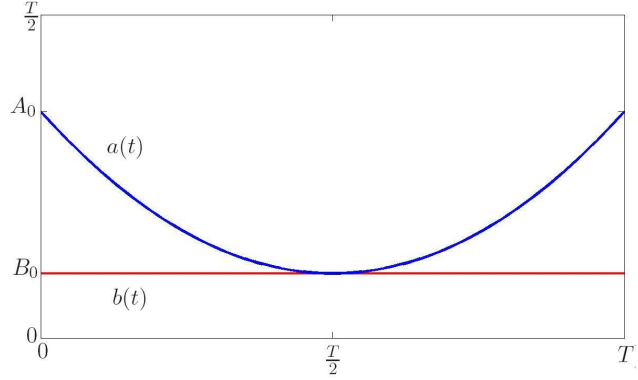}
\caption{Bid/ask price dynamics of Example \ref{ex}.}
\end{figure}


The probability of a price increase, respectively, decrease can also be approximated in terms of the arrival rates. If the next event is an active order arrival on the buy side, then
\[
	\lim_{n \to \infty}\mathbb{P}[B^{(n)}(t) = B^n(t-) - \Delta x^{(n)}] = \frac{\lambda^A}{\lambda^A + \lambda^B}, \quad
	\lim_{n \to \infty}\mathbb{P}[B^{(n)}(t) = B^n(t-) + \Delta x^{(n)}] = \frac{\lambda^B}{\lambda^A + \lambda^B}.
\]
The limiting probabilities that the next price change is an increase/decrease of the best bid/ask price as well as unconditional probabilities of price changes can be computed analogously.  Clearly, these probabilities increase in the respective rates. A further possible application of our model includes estimations of the expected time-to-fill of a limit order in the original model. This question has been studied by, e.g. Lo et al \cite{Lo}. A simple approximation in our model framework is given by the first time the limiting price process hits the placement price level of the limit order.

\subsubsection{Volume dynamics}

We now turn to the volume dynamics. When the limiting price dynamics is constant, then the explicit solution to the bid-side volume density PDE given in (\ref{eq:buySideSolvODE}) simplifies to
\begin{equation} \label{buy-side-constant}
	u_b(t,y) = e^{tB_b(y)} \left( v_{b,0}(y) + \frac{c_b(y)}{B_b(y)} \left[ 1 - e^{-t B_b(y)}\right]  \right)
\end{equation}
where we write $B_b(y)$ and $c_b(y)$ for $B_b(\gamma(0),y)$ and $c_b(\gamma(0),y)$, respectively. From this we see that the stationary solution is
\[
	u_b(t,y) = -\frac{c_b(y)}{B_b(y)} = \frac{p^D(\gamma(0))f^D(y)}{p^C(\gamma(0))f^C(y)} \equiv
	\frac{p^D}{p^C} \cdot \frac{f^D(y)}{f^C(y)}.
\]
In particular, the standing volume at the price level $y$ increases in the order arrival rates and the ``probabilities'' $f^D(y)$ with which an order is placed at that particular level when a placement occurs. Analogously, they decrease in $\lambda^C$ and $f^C(y)$. The following is a simple example where the spread is constant, but the prices are not. 
%
%
%


\begin{example}
Let us assume that 
\begin{equation*}
\lambda^{A}(b,a)-\lambda^{B}(b,a)=\lambda^{E}(b,a)-\lambda^{F}(b,a)=1.
\end{equation*}
Then $b(t)=b_{0}+t$ and $a(t)=b(t)+\mathfrak{S}(0)$. Assume moreover that $b_0 = 0$ and that the buy side passive order arrival rates depend only on the spread: 
\begin{equation*}
\lambda^{C}(b,a)=\lambda^{D}(b,a)=a-b.
\end{equation*}
Thus, $p^{C,D}(\gamma(t)) \equiv p^{C,D}$. Let us further assume (ignoring the fact that our density functions need to be defined on compact intervals for simplicity) that
\[
	f^C \equiv \frac{1}{p^C} \quad \mbox{and} \quad f^D(x) = \frac{e^{-x}}{p^D}.
\]
Then, 
\[
	B_b(\gamma(t),y) \equiv -1 \quad \mbox{and} \quad c_b(\gamma(t),y) = e^{-y}.
\]
We compute from the general solution formula:
\[
	u_b(t,x) = e^{-t}u_b(0,x) + e^{-y}(1-e^{-t}). 
\]
\end{example}

\subsubsection{Portfolio liquidation}

We close this section with a brief discussion of how our LOB model could be used to obtain endogenous shape functions for portfolio liquidation models. In models of optimal portfolio liquidation under market impact the goal is to find trading strategies that unwind (or purchase) a large number $X > 0$ of shares within a pre-specified time window $[0,T]$ at minimal cost. It is typically assumed that prices are continuous (as in our limiting model), and that the expected distribution of the standing buy (or sell) side volume can be described in terms of a possibly time-dependent {\textsl shape function} 
\[
	f: [0,T] \times \mathbb{R} \to \mathbb{R}_+. 
\]	
The benchmark case studied in the seminal paper of Almgren and Chriss \cite{Almgren1999} corresponds to a block-shaped order book where $f(t,x) \equiv \delta$ for some $\delta > 0$; discrete-time liquidation problems with more general shape functions have been studied in, e.g. \cite{AlfonsiFruthSchied}. For a given shape function a sell order of size $E_t$ submitted at time $t \in [0,T]$ moves the best bid price by an amount $D_t$ defined through
\[
	E_t = \int_0^{D_t}f(t,x) \, dx.
\]
If we denote by $F^{-1}(t,\cdot)$ the inverse of the anti-derivative of the shape function and if we assume for simplicity that there is no permanent price impact and that the cost of trading is benchmarked against the mid quote, then the cost $c(t,E_t)$ of trading $E_t$ shares is half the spread plus the market impact costs (see, e.g. \cite{AlfonsiFruthSchied} and references therein for details):
\begin{equation} \label{cost}
	c(t,E_t) = E_t \left( \frac{1}{2} \mathfrak{S}(t) + F^{-1}(t,E_t) \right). 
\end{equation}
If orders can be submitted at discrete points in time $t_n$ $(n=1, ..., N)$, the resulting optimization problem is given by: 
\begin{equation}\label{discrete-opt} 
\begin{split}
	\min_{(E_{t_n})_{n=1}^N} \sum_{i=1}^N c(t_n,E_{t_n})
	\quad \mbox{s.t.} \quad \sum_{n=1}^N E_{t_n} = X.
\end{split}
\end{equation}
Our model can be used to derive shape functions from order arrival dynamics. Within the framework of Poisson arrivals, the buy side dynamics in the $n$:th model is equivalent in distribution to modeling passive order arrivals and cancelations at the price levels $\{i\cdot \Delta x^{(n)}\}_{i \in \mathbb{Z}}$ in terms of independent Poisson processes with respective rates
\begin{eqnarray*}
	\eta^{(n),C}_i(\cdot) &=& \lambda^{(n),C}(\cdot) \int_{i\cdot \Delta x^{(n)}}^{(i+1) \cdot \Delta x^{(n)}} f^C(x) \, \mathrm{d}x \quad (i \in \mathbb{Z}) \\
	\eta^{(n),D}_i(\cdot) &=& \lambda^{(n),D}(\cdot) \int_{i\cdot \Delta x^{(n)}}^{(i+1) \cdot \Delta x^{(n)}} f^D(x) \, \mathrm{d}x \quad (i \in \mathbb{Z}).
\end{eqnarray*}
Placement and cancelation rates at different price levels (relative to the best bid/ask price) and average volumes placed/cancelled can readily be estimated from flow data; the same applies to aggregated (over all price levels) arrival rates $(\lambda^{(n),C/D})$ as well as market order arrival and spread placement rates $(\lambda^{(n),A/B})$ and volumes in the spread. We refer to Massey et al. \cite{MasseyParkerWhitt} for methods and techniques for estimating parameters of (non-homogeneous) Poisson processes and to Hautsch \cite{Hautsch11,Hautsch13} and references therein for applications of Poisson and general point processes to order dynamics and high-frequency trading.


\begin{remark}
A possible difficulty would be to calibrate the volumes placed in the spread. In our model a spread placement is the result of many small events taking place in the shadow book while in practice this is one single event. On the other hand, one could use a low-dimensional parametric model for the initial state of the book (e.g. block-shaped) and the density functions $f^I(x)$ for $I \in \{\bf{C}, \bf{D}\}$ and $x<0$ and try to calibrate the resulting spread placements to empirical data. Another possibility would be to work with the stationary solution; in this case the density functions are only  required on the positive half line (and could hence be estimated from flow data). This is what is often done indirectly in portfolio liquidation models when one assumes that the (benchmark) price is a martingale and can hence be treated as a constant in the optimization problem.  
\end{remark}


Once the placement, cancelation and active order arrival rates $\eta^{(n)}_i, \lambda^{(n),C/D}$ and $\lambda^{(n),A/B}$ have been estimated (possibly as functions of the best bid/ask price), functions $f^C$ and $f^D$ that satisfy the above equations can be constructed, e.g. by applying a method outlined in Schmidt et al.  \cite{SchmidtBastianMulansky} and the limiting model can be set up.


\begin{example} \label{ex1}
Let us assume that the tick size is $\Delta x^{(n)}$ and that we are given empirical arrival rates which satisfy
\[
	\frac{\eta^{(n),C}_i}{\eta^{(n),C}_{i+1}} = e^{\kappa^C \Delta x^{(n)}} \quad \mbox{and} \quad
	\frac{\eta^{(n),D}_i}{\eta^{(n),D}_{i+1}} = e^{\kappa^D \Delta x^{(n)}}	
\]
for all $i \in \mathbb{Z}$ and some constants $\kappa^{C,D} > 0$. Then, the functions $f^{C,D}$ are recovered as:
\[
	f^{C,D}(x) = e^{-\kappa^{C,D} x}.
\]
Assuming that $\kappa^D > \kappa^C$, the stationary solution is of the form $u_b(t,x) = \kappa_1 e^{-\kappa_2 x}$ for $\kappa_1,\kappa_2 > 0$ and 
\[
	\int_0^\infty u_b(t,x) \,dx = \frac{\kappa_1}{\kappa_2} =: \kappa.
\]	
\end{example}


Let us now assume that functions $f^{C,D}$ have been constructed along with the rates $\lambda^A, ..., \lambda^D$. Plugging the dynamics of the buy side volume density function into the portfolio liquidation problem, the resulting optimization problem reads
\[
	\min_{(E_{t_n})} \sum_{n=1}^N E_{t_n} \left( \frac{1}{2} \mathfrak{S}(t_n) + U^{-1}_b(t_n,E_{t_n}) \right)
	\quad \mbox{s.t.} \quad \sum_{n=1}^N E_{t_n} = X
\]
where $U_b^{-1}(t;\cdot)$ denotes the inverse of the anti-derivative of the volume density function $u_b(t;\cdot)$. In the framework of Example \ref{ex1} we arrive at the logarithmic impact function  $U^{-1}_b(t,y) = -\frac{1}{\kappa_2} \ln(1-\frac{y}{\kappa})$. 


\begin{remark}
If one assumes that continuous trading is possible and restricts the set of admissible trading strategies to absolutely continuous ones, then the liquidation problem can be stated in terms of the volumes $u_b(t,0)$ $(0 \leq t \leq T)$ at the top  of the book. If both absolutely continuous and block trading is allowed as in \cite{HN}, then the dynamics of the full book is again required.   
\end{remark}

\section{Convergence of bid/ask prices} \label{chapter-prices}
According to Proposition \ref{prop-state-and-time}, the process $S^{(n)}$ can be represented in terms of a composition of a state process $\eta^{(n)}$ that jumps at deterministic times $\{t^{(n)}_k\}$ and a time process $\mu^{(n)}$ that accounts for the random event arrival times. Prices change less frequently at times $\{\sigma^{(n)}_k\}$. This suggests to introduce a second time scale - which will be referred to as \textsl{active order time} - defined by
\[
	s^{(n)}_k := k \cdot \Delta x^{(n)}
\]
along which to scale the price process. In order to make this more precise, let us denote by $\mathcal{D}_{\gamma,k}^{(n)}$ the restriction of the operator ${\mathcal D}^{(n)}_k$ to the price component of the state sequence and put
\[
	\widehat{\mathcal{D}}_{\gamma,k}^{(n)} := \sum_{l=\sigma^{(n)}_{k-1} }^{\sigma^{(n)}_k-1} \mathcal{D}^{(n)}_{\gamma,l} =
	{\mathcal{D}}_{\gamma,\sigma^{(n)}_k}^{(n)}
\]
where the second equality follows from the fact that prices do not change between times $\sigma^{(n)}_k$ and $\sigma^{(n)}_{k+1}-1$.
Furthermore, we introduce the family of continuous time stochastic processes $\widehat{\eta}_{\gamma}^{(n)}$ defined by
\[
	\widehat{\eta}^{(n)}(t) := \widehat{\eta}^{(n)}_k \quad \mbox{for } t \in  [s^{(n)}_k, s^{(n)}_{k+1})
\]
where
\begin{alignat}{2}
\begin{cases}
\widehat{\eta}_{\gamma,k+1}^{(n)}
&:=\widehat{\eta}_{\gamma, k}^{(n)} + \widehat{\mathcal{D}}_{\gamma,k}^{(n)}(\widehat{\eta}_{k}^{(n)})
\\
\label{def:akbkRec}\\
\widehat{\eta}_{0}^{(n)}&=\left(
  \begin{array}{c}
    B^{(n)}_{0} \\
    A^{(n)}_{0} \\
  \end{array}
\right).
\end{cases}
\end{alignat}
The quantity $\widehat{\eta}_{\gamma,k}^{(n)}$ describes the state of the price process after the $k$:th price change. The following lemma shows that the process $\eta^{(n)}_\gamma$, evolving on the level of event time, and the process $\widehat{\eta}^{(n)}_\gamma$, evolving on the level of active order time, are indistinguishable in the limit when $n \to \infty$.

\begin{lemma} \label{lemma-hatD}
For any $T > 0$ and $\epsilon > 0$, it holds that
\[
	\lim_{n \to \infty} \mathbb{P} \left[ \sup_{0 \leq t \leq T} \left| \sum_{k=0}^{t/\Delta x^{(n)}} \widehat{\mathcal{D}}^{(n)}_{\gamma,k} -
	\sum_{k=0}^{t/\Delta t^{(n)}} \mathcal{D}^{(n)}_{\gamma,k}  \right| > \epsilon \right] = 0.
\]
\end{lemma}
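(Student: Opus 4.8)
The plan is to reduce the comparison to a statement about how far the number of active orders seen after $\lfloor t/\Delta t^{(n)}\rfloor$ events lags behind the ``active-order clock'' index $\lfloor t/\Delta x^{(n)}\rfloor$. First I would observe that the price increment operator $\mathcal{D}^{(n)}_{\gamma,k}$ vanishes at every passive step and, at an active step $k=\sigma^{(n)}_j$, coincides with $\widehat{\mathcal{D}}^{(n)}_{\gamma,j}$ by the very definition of the latter. Writing $N^{(n)}(m)$ for the number of active orders among the first $m$ events, this gives the identity
\[
	\sum_{k=0}^{t/\Delta t^{(n)}} \mathcal{D}^{(n)}_{\gamma,k} = \sum_{j=1}^{N^{(n)}(\lfloor t/\Delta t^{(n)}\rfloor)} \widehat{\mathcal{D}}^{(n)}_{\gamma,j},
\]
so that the two sums in the lemma run over the \emph{same} summands and differ only in their upper limits, namely $K^{(n)}_t := \lfloor t/\Delta x^{(n)}\rfloor$ versus $N^{(n)}_t := N^{(n)}(\lfloor t/\Delta t^{(n)}\rfloor)$. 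Since each active increment satisfies $|\widehat{\mathcal{D}}^{(n)}_{\gamma,j}| \le \Delta x^{(n)}$, the difference is a boundary sum bounded in norm by $\Delta x^{(n)}\,|K^{(n)}_t - N^{(n)}_t|$, and it suffices to prove that $\sup_{0\le t\le T}\Delta x^{(n)}|K^{(n)}_t - N^{(n)}_t| \to 0$ in probability.

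Next I would exploit the i.i.d. structure of the active-order indicators. By the Remark following Assumption \ref{Assumption-5}, each event is active with probability exactly $\Delta p^{(n)}$, \emph{independently of the state of the book}; hence the indicators $\xi_k := \mathbf{1}\{\phi_k \in \{\mathbf{A},\mathbf{B},\mathbf{E},\mathbf{F}\}\}$ are i.i.d. Bernoulli$(\Delta p^{(n)})$ and $N^{(n)}(m) = \sum_{k=1}^m \xi_k$. I then decompose $N^{(n)}(m) = m\Delta p^{(n)} + Z^{(n)}_m$, where $Z^{(n)}_m := \sum_{k=1}^m(\xi_k - \Delta p^{(n)})$ is a mean-zero random walk, hence a martingale, so that $K^{(n)}_t - N^{(n)}_t$ splits into the deterministic part $K^{(n)}_t - \lfloor t/\Delta t^{(n)}\rfloor\Delta p^{(n)}$ and the fluctuation $-Z^{(n)}_{\lfloor t/\Delta t^{(n)}\rfloor}$. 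Under the scaling normalization of Assumption \ref{assumption-scaling}, $\frac{\Delta x^{(n)}\Delta p^{(n)}}{\Delta t^{(n)}} = 1$, i.e. $\Delta p^{(n)} = \Delta t^{(n)}/\Delta x^{(n)}$, so the deterministic part equals $\lfloor t/\Delta x^{(n)}\rfloor - \lfloor t/\Delta t^{(n)}\rfloor\,\Delta t^{(n)}/\Delta x^{(n)}$, which is $O(1)$ uniformly in $t$ (only floor discrepancies survive, and $\Delta t^{(n)}=o(\Delta x^{(n)})$); multiplying by $\Delta x^{(n)}$ sends it to $0$.

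It remains to control the fluctuation \emph{uniformly} in $t$, and this is the key step. Because $t\mapsto Z^{(n)}_{\lfloor t/\Delta t^{(n)}\rfloor}$ is piecewise constant, its supremum over $[0,T]$ equals $\max_{0\le m\le \lfloor T/\Delta t^{(n)}\rfloor}|Z^{(n)}_m|$, so I would invoke Doob's $L^2$ maximal inequality for the martingale $Z^{(n)}$:
\[
	\mathbb{P}\Big[\max_{m\le T/\Delta t^{(n)}} |Z^{(n)}_m| > \tfrac{\epsilon}{\Delta x^{(n)}}\Big]
	\le \frac{(\Delta x^{(n)})^2}{\epsilon^2}\,\mathrm{Var}\big(Z^{(n)}_{\lfloor T/\Delta t^{(n)}\rfloor}\big)
	\le \frac{(\Delta x^{(n)})^2}{\epsilon^2}\cdot\frac{T\Delta p^{(n)}}{\Delta t^{(n)}}
	= \frac{T\,\Delta x^{(n)}}{\epsilon^2},
\]
using $\mathrm{Var}(\xi_k)\le \Delta p^{(n)}$ and once more $\frac{\Delta p^{(n)}}{\Delta t^{(n)}}=\frac{1}{\Delta x^{(n)}}$. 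Since $\Delta x^{(n)}\to 0$ the right-hand side vanishes, which yields $\sup_{t\le T}\Delta x^{(n)}|Z^{(n)}_{\lfloor t/\Delta t^{(n)}\rfloor}|\to 0$ in probability; combined with the deterministic estimate this proves the lemma.

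The main obstacle is precisely the uniformity over $t\in[0,T]$: a pointwise Chebyshev bound would control the counting discrepancy only for each fixed $t$, whereas the lemma requires the supremum. The random-walk structure of the active-order indicators --- a consequence of the state-independence of $\Delta p^{(n)}$ --- is what makes Doob's maximal inequality available, and the scaling relation $\Delta x^{(n)}\Delta p^{(n)} = \Delta t^{(n)}$ is exactly what makes the resulting variance bound of order $T/\Delta x^{(n)}$ collapse against the threshold $\epsilon/\Delta x^{(n)}$ to give the vanishing rate $T\Delta x^{(n)}/\epsilon^2$.
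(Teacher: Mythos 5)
Your proof is correct, but it follows a genuinely different route from the paper's. The paper first asserts that the two sums have the same expected value, then centers each sum around its mean and applies the Banach-space law of large numbers for triangular martingale difference arrays (Theorem \ref{thm-appendix}) to each centered sum separately, using the threshold trick with $\alpha \in (\tfrac12,1)$. You instead exploit the pathwise structure: since $\mathcal{D}^{(n)}_{\gamma,k}$ vanishes at passive steps and coincides with $\widehat{\mathcal{D}}^{(n)}_{\gamma,j}$ at the $j$:th active step, both sums are partial sums of the \emph{same} sequence, differing only in their upper indices $\lfloor t/\Delta x^{(n)}\rfloor$ versus $N^{(n)}(\lfloor t/\Delta t^{(n)}\rfloor)$; the difference is then at most $\Delta x^{(n)}$ times the index discrepancy, whose drift part is killed by the normalization $\Delta x^{(n)}\Delta p^{(n)} = \Delta t^{(n)}$ of Assumption \ref{assumption-scaling} and whose fluctuation part is killed by Kolmogorov's/Doob's maximal inequality for the Bernoulli counting walk (legitimate because, by Assumption \ref{Assumption-5}, the conditional probability of an active event is exactly $\Delta p^{(n)}$ regardless of the state, so the centered indicators form a martingale difference sequence). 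Your argument buys three things: it is more elementary and self-contained (no appeal to the Pisier-based TMDA machinery of the appendix), it yields an explicit convergence rate $T\Delta x^{(n)}/\epsilon^2$, and it sidesteps the paper's ``by construction'' claim that the two sums share the same expectation --- a claim that is not entirely immediate, since the expectations are evaluated at states sampled along different time scales. What the paper's route buys is uniformity of method: the same TMDA theorem is reused for the noise terms in Proposition \ref{convergence-prices} and for the volume densities in Proposition \ref{prop:etaNuTildeConv}, so the lemma's proof comes essentially for free once that framework is in place.
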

\begin{proof}
By construction, the two sums $\sum_{k=0}^{\lfloor t/\Delta x^{(n)} \rfloor} \widehat{\mathcal{D}}^{(n)}_{\gamma,k}$ and $\sum_{k=0}^{\lfloor t/\Delta t^{(n)} \rfloor} \mathcal{D}^{(n)}_{\gamma,k}$ have the same expected value for any $t \in [0,T]$:
\[
	\sum_{k=0}^{\lfloor t/\Delta x^{(n)} \rfloor} \mathbb{E} \widehat{\mathcal{D}}^{(n)}_{\gamma,k} =
	\sum_{k=0}^{\lfloor t/\Delta t^{(n)} \rfloor} \mathbb{E} {\mathcal{D}}^{(n)}_{\gamma,k}.
\]
As a result, it is enough to prove that
\[
	\lim_{n \to \infty} \mathbb{P} \left[ \sup_{0 \leq t \leq T} \left| \sum_{k=0}^{\lfloor t/\Delta x^{(n)} \rfloor}
	\left\{ \widehat{\mathcal{D}}^{(n)}_{\gamma,k}  - \mathbb{E} \widehat{\mathcal{D}}^{(n)}_{\gamma,k} \right\} \right|
	> \frac{\epsilon}{2} \right] =
	\lim_{n \to \infty} \mathbb{P} \left[ \sup_{0 \leq t \leq T} \left| \sum_{k=0}^{\lfloor t/\Delta t^{(n)} \rfloor}
	\left\{ \mathcal{D}^{(n)}_{\gamma,k}  - \mathbb{E}{\mathcal{D}}^{(n)}_{\gamma,k} \right\} \right|
	 > \frac{\epsilon}{2} \right] = 0.
\]
The random variables
\[
	\frac{1}{\Delta x^{(n)}}\left\{ \widehat{\mathcal{D}}^{(n)}_{\gamma,k} - \mathbb{E} \widehat{\mathcal{D}}^{(n)}_{\gamma,k} \right\}, \quad k=0, ..., \lfloor T/\Delta x^{(n)} \rfloor, ~~ n \in \mathbb{N}
\]	
and
\[
	\frac{1}{\Delta t^{(n)}}\left\{ \mathcal{D}^{(n)}_{\gamma,k} - \mathbb{E} \mathcal{D}^{(n)}_{\gamma,k} \right\}, \quad k=0, ..., \lfloor T/\Delta t^{(n)} \rfloor, ~~ n \in \mathbb{N}
\]
form triangular martingale difference arrays in the sense of Definition \ref{def-TMDA} with respect to the filtrations $\{{\cal F}_{\sigma^{(n)}_k}\}_{k \in \mathbb{N}}$ and $\{{\cal F}_{k \in \mathbb{N}}\}_k$, respectively. A direct computation shows that they are uniformly $L^2$-bounded. Thus, it follows from Theorem \ref{thm-appendix} that for all $\alpha > \frac{1}{2}$:
\[
	\lim_{n \to \infty} \mathbb{P} \left[ \sup_{0 \leq m \leq \lfloor \frac{T}{\Delta x^{(n)}} \rfloor } \frac{1}{\Delta x^{(n)}} \left| \sum_{k=0}^{m}
	\left\{ \widehat{\mathcal{D}}^{(n)}_{\gamma,k}  - \mathbb{E}\widehat{\mathcal{D}}^{(n)}_{\gamma,k} \right\} \right| \geq
	\frac{\epsilon}{2} \left( \frac{T}{\Delta x^{(n)}} \right)^ \alpha \right]  = 0
\]
as well as
\[
	\lim_{n \to \infty} \mathbb{P} \left[ \sup_{0 \leq m \leq \lfloor \frac{T}{\Delta t^{(n)}} \rfloor } \frac{1}{\Delta t^{(n)}} \left| \sum_{k=0}^{m}
	\left\{ {\mathcal{D}}^{(n)}_{\gamma,k}  - \mathbb{E}{\mathcal{D}}^{(n)}_{\gamma,k} \right\} \right| \geq
	\frac{\epsilon}{2} \left( \frac{T}{\Delta t^{(n)}} \right)^ \alpha \right]  = 0.
\]
Choosing $\alpha \in (\frac{1}{2},1)$ and multiplying the inequalities in the above probabilities by $\Delta x^{(n)}$ and $\Delta t^{(n)}$, respectively, proves the assertion.
\end{proof}


 Let $\widehat{\gamma}$ be the solution to the ODE (\ref{eq:autonStatePrice}) and consider the discretisation $\widehat{\gamma}^{(n)}_k := \widehat{\gamma}(s^{(n)}_k)$. The next lemma shows that the sequence of expected price processes $\widetilde{\gamma}^{(n)}$ defined by
\[
	\widetilde{\gamma}^{(n)}(t) := \widetilde{\gamma}^{(n)}_k	\quad \mbox{for } t \in  [s^{(n)}_k, s^{(n)}_{k+1})
\]
where
\begin{alignat}{2}
\begin{cases}
\widetilde{\gamma}_{k+1}^{(n)}
&:=\widetilde{\gamma}_{k}^{(n)} + \mathbb{E}\left[ \widehat{\mathcal{D}}_{\gamma,k}^{(n)}(\widehat{\gamma}_{k}^{(n)}) \right]
\\
\label{def:akbkRec}\\
\widetilde{\gamma}_{0}^{(n)}&=\left(
  \begin{array}{c}
    B^{(n)}_{0} \\
    A^{(n)}_{0} \\
  \end{array}
\right)
\end{cases}
\end{alignat}
converges uniformly to $\widehat{\gamma}$ on compact time intervals. The proof is standard; we give it merely for completeness.

\begin{lemma}
For any $T>0$
\begin{equation*}
\sup_{t\in[0,T]}|\widetilde{\gamma}^{(n)}(t)-\widehat{\gamma}(t)| =\mathcal{O}(\Delta x^{(n)}).
\end{equation*}
\begin{proof}
Active orders change prices by one tick. Moreover, by Assumption \ref{Assumption-5}, the conditional probabilities of an active order being a market buy/sell order or limit buy/sell order placement in the spread are independent of $n \in \mathbb{N}$. Hence
\begin{alignat}{2}
\mathbb{E}\left[ \widehat{\mathcal{D}}_{\gamma,k}^{(n)}(\widehat{\gamma}_{k}^{(n)}) \right]&= \Delta x^{(n)} \cdot \left(
  \begin{array}{c}
                          p^{A}(\widehat{\gamma}_{k}^{(n)}) - p^{B}(\widehat{\gamma}_{k}^{(n)}) \\
                           p^{E}(\widehat{\gamma}_{k}^{(n)}) - p^{F}(\widehat{\gamma}_{k}^{(n)})
                        \end{array}
                      \right)\nonumber\\
&=\Delta x^{(n)}\cdot\left\{ A(\widehat{\gamma}_{k}^{(n)})\left(
  \begin{array}{c}
   1 \\
   1 \\
  \end{array}
\right) \right\}
\label{eq:ExpectationsExplicit}
\end{alignat}
Thus, the sequence $\widehat{\gamma}^{(n)}$ defines a special case of the classical Euler scheme for the ODE (\ref{ODE})  and hence converges uniformly to its unique solution, see e.g. Hairer et al. \cite[Theorem 7.3]{HairerNorsettWannerI}, with rate $\Delta x^{(n)}$. 
\end{proof}
\label{lem:ODEconv}
\end{lemma}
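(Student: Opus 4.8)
The plan is to read the recursion defining $\widetilde{\gamma}^{(n)}$ as Euler's method with step size $h=\Delta x^{(n)}$ for the ODE \eqref{eq:autonStatePrice}, and then to obtain the first-order global error $\mathcal{O}(\Delta x^{(n)})$ by a standard quadrature/accumulation argument. The identity \eqref{eq:ExpectationsExplicit} already supplies the crucial simplification: the expected increment is
\[
	\mathbb{E}\left[\widehat{\mathcal{D}}_{\gamma,k}^{(n)}(\widehat{\gamma}_{k}^{(n)})\right]=\Delta x^{(n)}\, A(\widehat{\gamma}_{k}^{(n)})\binom{1}{1},
\]
so, writing $f(\gamma):=A(\gamma)(1,1)'$ for the vector field of \eqref{eq:autonStatePrice} and recalling $\widehat{\gamma}_{k}^{(n)}=\widehat{\gamma}(s_{k}^{(n)})$, the recursion reads $\widetilde{\gamma}_{k+1}^{(n)}=\widetilde{\gamma}_{k}^{(n)}+\Delta x^{(n)}f(\widehat{\gamma}(s_{k}^{(n)}))$.

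First I would extract the regularity that fixes the rate. By Assumption \ref{Assumption-5} the functions $p^{A},p^{B},p^{E},p^{F}$ are bounded and have bounded gradients, so $f=(p^{A}-p^{B},\,p^{E}-p^{F})'$ is bounded and globally Lipschitz with some constant $L$. Hence \eqref{eq:autonStatePrice} has a unique solution $\widehat{\gamma}$ on $[0,T]$; moreover $\widehat{\gamma}$ is Lipschitz (its derivative is bounded by $\|f\|_{\infty}$), and therefore $u\mapsto f(\widehat{\gamma}(u))$ is itself Lipschitz on $[0,T]$. This is exactly the smoothness needed to make the per-step consistency error of order $(\Delta x^{(n)})^{2}$.

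Then I would accumulate. Summing the recursion and using $\widehat{\gamma}(s_{k}^{(n)})-\widehat{\gamma}(0)=\int_{0}^{s_{k}^{(n)}}f(\widehat{\gamma}(u))\,du$, the grid error $e_{k}:=|\widetilde{\gamma}_{k}^{(n)}-\widehat{\gamma}(s_{k}^{(n)})|$ decomposes into the initial discrepancy $|(B_{0}^{(n)},A_{0}^{(n)})-(B_{0},A_{0})|$ plus the left-endpoint quadrature error $\sum_{j<k}\int_{s_{j}^{(n)}}^{s_{j+1}^{(n)}}\big(f(\widehat{\gamma}(s_{j}^{(n)}))-f(\widehat{\gamma}(u))\big)\,du$. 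Each summand is bounded by $\tfrac12 L\|f\|_{\infty}(\Delta x^{(n)})^{2}$ by the Lipschitz property, and there are at most $T/\Delta x^{(n)}$ of them, so the quadrature part is $\mathcal{O}(\Delta x^{(n)})$ uniformly in $k$. Passing from the grid back to continuous time costs only the oscillation of $\widehat{\gamma}$ over one cell, $\sup_{t\in[s_{k}^{(n)},s_{k+1}^{(n)})}|\widehat{\gamma}(t)-\widehat{\gamma}(s_{k}^{(n)})|\le\|f\|_{\infty}\Delta x^{(n)}$, which is again $\mathcal{O}(\Delta x^{(n)})$. If one instead insists on the genuine Euler reading with the increment evaluated at the iterate $\widetilde{\gamma}_{k}^{(n)}$, the same rate follows from the discrete Gronwall inequality $e_{k+1}\le(1+L\Delta x^{(n)})e_{k}+C(\Delta x^{(n)})^{2}$, which is precisely the content of the Euler theorem cited from Hairer et al.

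This is textbook numerical analysis, so I do not expect a genuine obstacle; the two points needing care are (i) deriving the Lipschitz constant of $f$ from the bounded-gradient hypothesis in Assumption \ref{Assumption-5}, since that is what forces the first-order rate rather than mere consistency, and (ii) the initial term, because Assumption \ref{Assumption-1} only yields $(B_{0}^{(n)},A_{0}^{(n)})\to(B_{0},A_{0})$ without an a priori rate. Accordingly the clean $\mathcal{O}(\Delta x^{(n)})$ bound should be read either as controlling the discretization error alone, or under the mild understanding that the initial prices converge at least at rate $\Delta x^{(n)}$.
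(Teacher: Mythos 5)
Your proof is correct, and it rests on the same key step as the paper's — the identity (\ref{eq:ExpectationsExplicit}), which turns the recursion for $\widetilde{\gamma}^{(n)}$ into a first-order scheme for the ODE (\ref{eq:autonStatePrice}) — but from there the two arguments diverge: the paper simply declares the scheme ``a special case of the classical Euler scheme'' and cites Hairer et al., whereas you carry out the error analysis directly, and your reading is in fact the more faithful one. Because the expected increment is evaluated at the exact solution $\widehat{\gamma}(s^{(n)}_k)$ rather than at the iterate $\widetilde{\gamma}^{(n)}_k$, the recursion telescopes into a left-endpoint Riemann sum for $\int_0^{s^{(n)}_k} A(\widehat{\gamma}(u))(1,1)'\,\mathrm{d}u$, so the global error is a pure accumulation of $\mathcal{O}((\Delta x^{(n)})^2)$ consistency errors and no Gronwall/stability step is needed at all; the classical Euler theorem the paper invokes addresses the scheme with the vector field at the iterate, which is not literally what is written, though your remark that discrete Gronwall handles that variant at the same rate shows the distinction is harmless. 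What your route buys is a self-contained proof that isolates exactly which hypotheses drive the rate (bounded gradients of $p^I$ in Assumption \ref{Assumption-5} giving a Lipschitz vector field); what the paper's route buys is brevity. Your caveat about the initial condition is also a legitimate observation about the statement itself: Assumption \ref{Assumption-1} yields $(B^{(n)}_0,A^{(n)}_0)\to(B_0,A_0)$ with no rate, so the initial discrepancy $|\widetilde{\gamma}^{(n)}_0-\widehat{\gamma}(0)|$ is only $o(1)$ and the clean $\mathcal{O}(\Delta x^{(n)})$ bound holds, strictly speaking, only for the discretization error or under an implicit rate assumption on the initial prices; the paper glosses over this, and it is inconsequential downstream since the proof of Proposition \ref{convergence-prices} uses only that the supremum tends to zero, but flagging it is warranted.
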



 We are now ready to prove convergence in probability of the bid and ask prices.

\mbox{ }

\noindent \textsc{Proof of Proposition \ref{convergence-prices}}.

\begin{itemize}
\item[a)]
We first consider the convergence of the state process $\eta_{\gamma}^{(n)}$ and claim that
\begin{equation}
	\lim_{n \to \infty} \sup_{t\in[0,T]}|\eta_{\gamma}^{(n)}(t)-\widehat{\gamma}(t)| \rightarrow 0,\quad \text{in probability.}
\label{eq:etanToeta}
\end{equation}
In view of Lemma \ref{lemma-hatD}, we can write
\begin{alignat}{2}
\eta_{\gamma}^{(n)}(t) &=\widehat{\eta}_{\gamma}^{(n)} (t)\nonumber\\
&=\gamma^{(n)}(0)+\sum_{k=0}^{\lfloor t/\Delta x^{(n)} \rfloor} \widehat{\mathcal{D}}_{\gamma,k}^{(n)}\left(\widehat{\eta}_{\gamma,k}^{(n)}\right) \nonumber\\
&=\gamma^{(n)}(0)+\sum_{k=0}^{\lfloor t/\Delta x^{(n)}\rfloor}\mathbb{E}\left[ \widehat{\mathcal{D}}_{\gamma,k}^{(n)}(\widehat{\eta}_{\gamma,k}^{(n)}) \right]
+\sum_{k=0}^{\lfloor t/\Delta x^{(n)} \rfloor}\left( \widehat{\mathcal{D}}_{\gamma,k}^{(n)}(\widehat{\eta}_{\gamma,k}^{(n)})-\mathbb{E}\left[\widehat{\mathcal{D}}_{\gamma,k}^{(n)} (\widehat{\eta}_{\gamma,k}^{(n)}) \right]\right)
\nonumber
\label{eq:theoBegBuyArguments}
\end{alignat}
up to some random additive constant that vanishes almost surely uniformly in $t \in [0,T]$ as $n \to \infty$.
Adding and subtracting the sequence $\widetilde{\gamma}^{(n)}$ yields (again up to a vanishing additive constant):
\begin{eqnarray*}
	\left|\eta_{\gamma}^{(n)}(t)-\widehat{\gamma}(t)\right| & \leq & \left| \widetilde{\gamma}^{(n)}(t)-\widehat{\gamma}(t)\right| \\
	& & +  \left|\sum_{k=0}^{\lfloor t/\Delta x^{(n)} \rfloor}\mathbb{E}\left[\widehat{\mathcal{D}}_{\gamma,k}^{(n)}(\widehat{\eta}_{\gamma,k}^{(n)}) \right]-\widetilde{\gamma}^{(n)}(t)\right| \\
	& & +\left|\sum_{k=0}^{\lfloor t/\Delta x^{(n)} \rfloor}\left(\widehat{\mathcal{D}}_{\gamma,k}^{(n)}(\widehat{\eta}_{\gamma,k}^{(n)})-\mathbb{E}\left[\widehat{\mathcal{D}}_{\gamma,k}^{(n)}(\widehat{\eta}_{\gamma,k}^{(n)}) \right]\right)\right|.
\end{eqnarray*}
For the first term, we deduce from Lemma \ref{lem:ODEconv} that
\begin{equation}
	\sup_{t\in[0,T]} |\widetilde{\gamma}^{(n)}(t)-\widehat{\gamma}(t)| =\mathcal{O}(\Delta t^{(n)}).
\label{eq:gammahatnConv}
\end{equation}
For the second term we use the Lipschitz continuity of the event probabilities $p^I(\cdot,\cdot)$ in order to establish the existence of a constant $L_\gamma > 0$ such that:
\begin{alignat}{2}
\left|\sum_{k=0}^{\lfloor t/\Delta x^{(n)} \rfloor}\mathbb{E}\left[\widehat{\mathcal{D}}_{\gamma,k}^{(n)}(\widetilde{\eta}_{\gamma,k}^{(n)}) \right]-\widetilde{\gamma}^{(n)}(t)\right|
&=\left|\sum_{k=0}^{\lfloor t/\Delta x^{(n)} \rfloor}\mathbb{E}\left[\widehat{\mathcal{D}}_{\gamma,k}^{(n)}(\widehat{\eta}_{\gamma,k}^{(n)}) \right]-\mathbb{E}\left[\widehat{\mathcal{D}}_{\gamma,k}^{(n)}(\widehat{\gamma}_{k}^{(n)}) \right]\right| \nonumber\\
&\leq\sum_{k=0}^{\lfloor t/\Delta x^{(n)} \rfloor}\left|\mathbb{E}\left[\widehat{\mathcal{D}}_{\gamma,k}^{(n)}(\widehat{\eta}_{\gamma,k}^{(n)}) \right]-\mathbb{E}\left[\widehat{\mathcal{D}}_{\gamma,k}^{(n)}(\widehat{\gamma}_{k}^{(n)})  \right]\right| \nonumber\\
&\leq \Delta x^{(n)}\cdot L_{\gamma} \cdot \sum_{k=0}^{\lfloor t/\Delta x^{(n)} \rfloor}\left|\widehat{\eta}_{\gamma,k}^{(n)}-\widehat{\gamma}_{k}^{(n)} \right|. \nonumber
\end{alignat}
The third term corresponds to the noise-term of the price process. For each $n \in {\mathbb N}$, the sequence
\[
	y^{n}_k := \widehat{\mathcal{D}}_{\gamma,k}^{(n)}(\widehat{\eta}_{\gamma,k}^{(n)})-\mathbb{E}\left[\widehat{\mathcal{D}}_{\gamma,k}^{(n)}(\widehat{\eta}_{\gamma,k}^{(n)})\right],
	\quad k=0, ..., \lfloor T/\Delta x^{(n)} \rfloor
\]
is a martingale difference sequence. A direct computation shows that
\[
	\sup_{n,k} \mathbb{E}|y^n_k|^2 \leq C \cdot \left( \Delta x^{(n)} \right)^2.
\]
Hence, the law of large numbers for triangular martingale difference arrays (Theorem \ref{thm-appendix} and Corollary \ref{corollary-appendix}) implies
\[
	\lim_{n \to \infty} \mathbb{P}\left [ \sup_{0 \leq m \leq T/\Delta x^{(n)}} | \sum_{k=0}^m y^n_k | > 0 \right] = 0,
\]
just as in the proof of Lemma \ref{lemma-hatD}. Thus, using Lemma \ref{lemma-hatD} again, we see that
\begin{alignat}{2}
\left|{\eta}_{\gamma}^{(n)}(t)-\widehat{\gamma}(t)\right| =
\left|\widehat{\eta}_{\gamma}^{(n)}(t)-\widehat{\gamma}(t)\right| \leq \Delta x^{(n)}\cdot L_{\gamma} \cdot \sum_{k=0}^{\lfloor t/\Delta x^{(n)} \rfloor}\left|\widehat{\eta}_{\gamma,k}^{(n)}-\widehat{\gamma}_{k}^{(n)} \right| + o(1)\quad \text{in probability} \nonumber
\end{alignat}
for some additive term of order $o(1)$ uniform in $t \in [0,T]$. As a result, (\ref{eq:etanToeta}) follows from an application of Gronwall's lemma along with Lemma \ref{lemma-hatD}.
%
%
\item[b)] Let us now consider the cumulative ``active order time process''
\begin{alignat}{2}
y^{(n)}(t)&=\sum_{k=0}^{\lfloor t/\Delta x^{(n)} \rfloor}\zeta^{(n)}\left(\widehat{\eta}_{\gamma,k}^{(n)}\right) \cdot \Delta x^{(n)}\nonumber\\
&=\Delta x^{(n)} \cdot \left\{ \sum_{k=0}^{\lfloor t/\Delta x^{(n)} \rfloor}\mathbb{E}\left[\zeta^{(n)} \left(\widehat{\eta}_{\gamma,k}^{(n)}\right) \right]+
\sum_{k=0}^{\lfloor t/\Delta x^{(n)} \rfloor}\left(\zeta^{(n)}\left(\widehat{\eta}_{\gamma,k}^{(n)}\right)-\mathbb{E}\left[\zeta^{(n)}\left(\widehat{\eta}_{\gamma,k}^{(n)}\right) \right]\right) \right\} \nonumber\\
&=\Delta x^{(n)} \sum_{k=0}^{\lfloor t/\Delta x^{(n)} \rfloor}m^{}\left(\widehat{\eta}_{\gamma,k}^{(n)}\right)+
\Delta x^{(n)} \cdot \sum_{k=0}^{\lfloor t/\Delta x^{(n)} \rfloor}\left(\zeta^{(n)}\left(\widehat{\eta}_{\gamma,k}^{(n)}\right)-\mathbb{E}\left[\zeta^{(n)}\left(\widehat{\eta}_{\gamma,k}^{(n)}\right)\right]\right).\nonumber\\
\label{eq:timeEVI}
\end{alignat}
By the above established uniform convergence of $\widehat{\eta}_{\gamma}^{(n)}$ to $\widehat{\gamma}$ in probability and because the function $m$ is Lipschitz continuous, the first sum converges to the function
\begin{equation}
	y(t) = \int_{0}^{t}m(\widehat{\gamma}(u))\mathrm{d} u.
\label{eq:timeEVII}
\end{equation}
Applying the same arguments as above to the martingale difference sequences
\[
	\zeta^{(n)}\left(\widehat{\eta}_{\gamma,k}^{(n)}\right)-\mathbb{E}\left[\zeta_{k}^{(n)}\left(\widehat{\eta}_{\gamma,k}^{(n)}\right) \right], \quad k=0, ..., \lfloor \frac{t}{\Delta x^{(n)}} \rfloor
\]
we see that the second term vanishes uniformly in $t \in [0,T]$ in probability. Thus,
\[
	\lim_{n \to \infty} \sup_{t \in [0,T]} |y^{(n)}(t) - y(t)| = 0 \quad \text{in probability.}
\]
%
Since $y^{(n)}$ and $y$ are increasing functions, their inverses $\mu^{(n)}$ and $\mu$ exist. By continuity
\begin{equation*}
\sup_{t\in[0,T]}|\mu^{(n)}(t)-\mu(t)|\rightarrow0\quad \text{in probability as $n\rightarrow\infty$}
\end{equation*}
and
\begin{equation}
	\mu'(t)=\left(y^{-1}(t)\right)'=\frac{1}{y'(y^{-1}(t))}=\frac{1}{m(\mu(t))}=\frac{1}{m \big(\widehat{\gamma}(\mu(t))\big)}.
\label{eq:muDiff}
\end{equation}
\end{itemize}
%
Since both the state and the time process converge, we conclude from the time change theorem  that
\begin{equation*}
\sup_{t\in[0,T]}\left|\eta_\gamma^{(n)}(t)-\gamma(t)\right| \rightarrow 0\quad\text{in probability $n\rightarrow\infty$},
\end{equation*}
where $\gamma(t)=\widehat{\gamma}(\mu(t))$ and 
\begin{alignat}{2}
\gamma'(t)=\widehat{\gamma}'(\mu(t))\cdot \mu'(t)=\frac{A(\widehat{\gamma}(\mu(t)))}{m(\widehat{\gamma}(\mu(t)))}\left(\begin{array}{c}
                                                                 1 \\
                                                                 1
                                                               \end{array}\right)
=\frac{A(\gamma(t))}{m(\gamma(t))}\left(\begin{array}{c}
                                                                 1 \\
                                                                 1
                                                               \end{array}\right).
                                                               \nonumber
\end{alignat}
\hfill $\Box$


\section{Convergence of volume densities} \label{chapter-volumes}

In this section we prove Proposition \ref{convergence-volumes}. To this end, we denote by ${\cal D}^{(n)}_{v,k}(\cdot,\cdot)$ the restriction of the operator ${\cal D}^{(n)}_k$ to $L^2 \times L^2$, i.e the restriction of ${\cal D}^{(n)}_{k}$ to the volume components of the state process. We need to show that the sequence $\{\eta^{(n)}_v\}_{n \in \mathbb{N}}$ of $L^2 \times L^2$-valued step-functions defined recursively by
\begin{equation}
\eta_{v}^{(n)}(t, \cdot):=\eta_{v,k}^{(n)}\quad \text{for $t\in[t_{k}^{(n)},t_{k+1}^{(n)})$}
\label{def:etavt}
\end{equation}
where
\begin{alignat}{2}
\begin{cases}
\eta_{v,k+1}^{(n)}&:=\eta^{(n)}_{k}+\mathcal{D}_{v,k}^{(n)}\left(\eta_{\gamma,k}^{(n)},\eta_{v,k}^{(n)}\right)\\
\label{def:etavk}\\
\eta_{v,0}^{(n)}&:=v_{0}^{(n)}
\end{cases}
\end{alignat}
converges in probability in $L^2$ to the unique solution of the PDE (\ref{PDE2}).
We will show convergence in several steps. In a first step we find a convergent discretization scheme of the PDE which is coherent with the order book dynamics. Subsequently, we link this scheme to the expected dynamics of the volume densities.


\subsection{A numerical scheme for the limiting PDE}

 For any $n \in {\mathbb N}$, the scaling parameters $\Delta x^{(n)}$ and $\Delta t^{(n)}$ define a grid $\{(t^{(n)}_k, x^{(n)}_k)\}$ on $[0,T] \times \mathbb{R}$ through $t_{k}^{(n)}=k \cdot \Delta t^{(n)}$ $(k \in \mathbb{N}_0)$ and $x_{j}^{(n)} =j \cdot \Delta x^{(n)}$ $(j \in \mathbb{Z})$.
In a first step, we approximate the unique solution $\widehat{u}:[0,T] \times \mathbb{R} \to \mathbb{R}^2$ to (\ref{PDE2}) through a sequence of grid-point functions $\widehat{u}^{(n)}: [0,T] \times \mathbb{R} \to \mathbb{R}^2$. To this end, we put
\[
	A_R(t):=\left(\begin{array}{cc}
                          p^{A}(\hat{\gamma}(t)) & 0 \\
                          0 & p^{E}(\hat{\gamma}(t)) \\
                        \end{array}
                      \right), \quad
      A_L(t):=\left(\begin{array}{cc}
                          p^{B}(\hat{\gamma}(t)) & 0 \\
                          0 & p^{F}(\hat{\gamma}(t)) \\
                        \end{array}
                      \right)
\]
and
\[
	F(t,x) := B(\hat{\gamma}(t),x), \quad g(t,x) := c(\hat{\gamma}(t),x).
\]
Furthermore, we introduce operators $\mathcal{H}^{(n)}_t$ that act on $v \in L^2$ according to
\begin{equation} \label{def:operatorH}
\begin{split}
	\mathcal{H}^{(n)}_t(v)  := &
	~ v + \Delta p^{(n)} \cdot A_{R}(t) \left[ v(\cdot + \Delta x^{(n)})-v(\cdot) \right] \nonumber\\
	& ~~~ + \Delta p^{(n)} \cdot A_{L}(t) \left[ v(\cdot-\Delta x^{(n)})-v(\cdot) \right]  \nonumber\\
	& ~~~ + \Delta v^{(n)} \cdot (1 - \Delta p^{(n)}) \cdot \left[ F(t,\cdot) \cdot v(\cdot)+ g(t,\cdot) \right].
\end{split}
\end{equation}
The sequence of grid-point approximations is then defined recursively by
\begin{equation}
	\widehat{u}^{(n)}(t,\cdot):=\widehat{u}_{k}^{(n)}  \quad \mbox{for} \quad t \in [t^{(n}_{k}, t^{(n)}_{k+1})
\end{equation}
where
\begin{alignat}{2}
\begin{cases}
	\widehat{u}^{(n)}_{k+1} &=  \mathcal{H}^{(n)}_{t^{(n)}_k}(\widehat{u}^{(n)}_k)  \\
	\\
	\widehat{u}^{(n)}_{0}&=v_{0}^{(n)}.
\end{cases}
\end{alignat}
The sequence of step-functions $\{\widehat{u}^{(n)}\}$ essentially describes a discretized limiting volume dynamics of the order book. We benchmark this dynamics against the expected pre-limit volume dynamics when prices are replaced by the limiting dynamics. More precisely, we introduce another sequence of step functions $u^{(n)}:[0,T] \times \mathbb{R} \to \mathbb{R}$ by
\begin{equation}
{u}^{(n)}(t,\cdot):=u_{k}^{(n)}  \quad \mbox{for} \quad t \in [t^{(n}_{k}, t^{(n)}_{k+1})  
\label{def:Contwidetilde}
\end{equation}
where
\begin{alignat}{2}
\begin{cases}
	u_{k+1}^{(n)}& = u_{k}^{(n)}+\mathbb{E}\left[\mathcal{D}_{v,k}^{(n)}\left(\hat{\gamma}(t^{(n)}_k), u_{k}^{(n)} \right) \right]\\
\label{def:UTildenjkRec}\\
u_{0}^{(n)}&:=v_{0}^{(n)}.
\end{cases}
\end{alignat}
In a first step we are now going to show that the grid-point functions $\widehat{u}^{(n)}$ approximate the solution $\widehat{u}$ of our PDE. Subsequently we show that the PDE can as well be approximated by the functions $u^{(n)}$.

\begin{proposition}[Convergence of the numerical scheme] \label{prop-PDE-scheme}
Assume that the assumptions of Theorem \ref{MainResult} hold. Then, the processes $\widehat{u}^{(n)}$ define a convergent finite difference scheme of the PDE (\ref{PDE2}), i.e.
\begin{equation}
\sup_{t\in[0,T]}\|\widehat{u}^{(n)}(t, \cdot)-\widehat{u}(t, \cdot)\|_{L^2}\rightarrow 0\quad \text{as $n\rightarrow\infty$}.
\label{eq:convNumScheme}
\end{equation}
\begin{proof}
For $t \in [0,T]$ the local truncation error associated with the grid-point approximations $\widehat{u}^{(n)}$ is defined as
\begin{equation}
\mathcal{L}^{(n)}(t,x):=\frac{1}{\Delta t^{(n)}}\left( \widehat{u}(t+\Delta t^{(n)},x) - {\cal H}^{(n)}_t( \widehat{u}(t,\cdot) )(x) \right).
\label{def:locTrunc}
\end{equation}
Smoothness of the solution $\widehat{u}$ (bounded with uniformly bounded first and second derivatives) along with Assumption \ref{assumption-scaling} implies that the following estimate holds uniformly in $t \in [0,T]$ and $x \in \mathbb{R}$:
\begin{alignat}{2}
\mathcal{L}^{(n)}(t,x)&=\frac{1}{\Delta t^{(n)}}\Big{(} \widehat{u}(t+\Delta t^{(n)},x)-\widehat{u}(t,x)- \Delta p^{(n)} \cdot A_R(t) ( \widehat{u}(t, x+\Delta x^{(n)})-\widehat{u}(t,x)) \nonumber\\
&\qquad\qquad\qquad\qquad\qquad\qquad\qquad\quad- \Delta p^{(n)} \cdot A_L(t) (\widehat{u}(t,x-\Delta x^{(n)})-\widehat{u}(t,x))\nonumber\\
&\qquad\qquad\qquad\qquad\qquad\qquad\qquad\quad-\Delta v^{(n)} \cdot (1 - \Delta p^{(n)}) \cdot \left[ F(t,x)\widehat{u}(t,x)- g(t,x) \right]\Big{)}\nonumber\\
&=\frac{1}{\Delta t^{(n)}}\Big{(}\widehat{u}_{t}(t,x)\Delta t^{(n)}+o(\Delta t^{(n)})- \Delta p^{(n)} \cdot A_R (\widehat{u}_{x}(t,x)\Delta x^{(n)}+o(\Delta x^{(n)}))\nonumber\\
&\qquad\qquad\qquad\qquad\qquad\qquad\qquad\quad- \Delta p^{(n)} \cdot A_L(-\widehat{u}_{x}(t,x))\Delta x^{(n)}+o(\Delta x^{(n)}))\nonumber\\
&\qquad\qquad\qquad\qquad\qquad\qquad\qquad\quad-\Delta v^{(n)} \cdot (1 - \Delta p^{(n)}) \cdot \left[ F(t,x)\widehat{u}(t,x)- g(t,x) \right]\Big{)}\nonumber\\
&=\widehat{u}_{t}(x,t)-\left(A_{L}(t)-A_{R}(t)\right)\widehat{u}_{x}(x,t)-F(x,t)\widehat{u}(x,t)-g(x,t)+o(1)\qquad\qquad\qquad\nonumber\\
&=\widehat{u}_{t}(t,x)-A(t)\widehat{u}_{x}(t,x)-B(t,x)\widehat{u}(t,x)-c(t,x)+o(1)\nonumber\\
&=o(1) 
\label{eq:localConv}
\end{alignat}
The solution $\widehat{u}(t;\cdot)$ the PDE vanishes outside a compact interval for all $t \in [0,T]$ (Proposition \ref{prop-PDE-solution}) and hence so does $\mathcal{L}^{(n)}(t,\cdot)$. As a result, we also have
\begin{equation} \label{L2-L}
	\lim_{n \to \infty} \sup_{t \in [0,T]} \| \mathcal{L}^{(n)}(t,\cdot) \|_{L^2} = 0.
\end{equation}

 From (\ref{def:locTrunc}), one has that $\widehat{u}(t+\Delta t^{(n)}, \cdot)=\mathcal{H}^{(n)}_t(\widehat{u}(t,\cdot))+\Delta t^{(n)}\mathcal{L}^{(n)}(t, \cdot).$
In terms of the error-function
\begin{equation}
\delta \widehat{u}^{(n)}(t,x):=\widehat{u}^{(n)}(t,x) - \widehat{u}(t,x)
\label{def:ErrFunc}
\end{equation}
this yields
\begin{eqnarray*}
	\delta \widehat{u}^{(n)} (t^{(n)}_{k+1}; \cdot) &=& \widehat{u}^{(n)}(t^{(n)}_{k+1}; \cdot) - \widehat{u}(t^{(n)}_k + \Delta t^{(n)}; \cdot) \\
	& = & \mathcal{H}^{(n)}_k(\widehat{u}^{(n)}(t^{(n)}_k, \cdot))-\mathcal{H}^{(n)}_k(\widehat{u}(t^{(n)}_k, \cdot))-\Delta t^{(n)}\mathcal{L}^{(n)}(t^{(n)}_k, \cdot) \\
	&=& \mathcal{H}^{(n)}_k(\delta \widehat{u}^{(n)}(t^{(n)}_k, \cdot))-\Delta t^{(n)}\mathcal{L}^{(n)}(t^{(n)}_k, \cdot),
\end{eqnarray*}
due to the linearity of the operator $\mathcal{H}^{(n)}_k$. Using this property iteratively, and putting $(\mathcal{H}^{(n)})^k := \mathcal{H}^{(n)}_1 \circ \cdots \circ \mathcal{H}^{(n)}_k$ one finds
\begin{equation}
	\delta \widehat{u}(t_{k+1}^{(n)}, \cdot)=(\mathcal{H}^{(n)})^k(\delta \widehat{u}^{(n)}(0,\cdot))-\Delta t^{(n)}\sum_{i=0}^{k}(\mathcal{H}^{(n)})^{k-i}(\mathcal{L}^{(n)}(t^{(n)}_{i}), \cdot).
\label{eq:ErrorIter}
\end{equation}
From the definition of $\mathcal{H}^{(n)}_t$ together with the fact that the functions $F$ and $g$ are uniformly bounded by assumption, one finds (for large enough $n \in \mathbb{N}$):
\begin{eqnarray*}
	\| \mathcal{H}^{(n)}_t (v) \|_{L^2} & \leq & \|v\|_{L^2} \left( 1- \Delta p^{(n)} (A_R(t) + A_S(t)) \right) + \|T^{(n)}_+(v) \|_{L^2} \cdot \left( \Delta p^{(n)} \cdot A_R(t) \right) \\
	& & + \|T^{(n)}_-(v) \|_{L^2} \cdot \left( \Delta p^{(n)} \cdot A_L(t) \right) + \Delta v^{(n)} \left( \|F\|_\infty \|v\|_{L^2} + \|g\|_\infty \right).
\end{eqnarray*}
Thus, by the isometry property $\| T^{(n)}_\pm(v) \|_{L^2} = \| v \|_{L^2}$ of the translation operators, there exists a constant $C > 0$ that is independent of $t \in [0,T]$ such that
\[
	\sup_{\| v\|_{L^2} = 1} \| \mathcal{H}^{(n)}_t (v) \|_{L^2} \leq 1 + C \Delta t^{(n)}.
\]
In particular, since $k \leq \lfloor \frac{T}{\Delta t^{(n)}} \rfloor$
\begin{eqnarray*}
	\sup_ k \left \| (\mathcal{H}^{(n)})^k(\delta \widehat{u}^{(n)}(0,\cdot)) \right\|_{L^2} & \leq & (1+C \Delta t^{(n)})^{\lfloor T/ \Delta t^{(n)} \rfloor}
	\| \delta \widehat{u}^{(n)}(0,\cdot) \|_{L^2} \\
	& \leq & e^{CT} \| \delta \widehat{u}^{(n)}(0,\cdot) \|_{L^2}  \\
	& = & o(1),
\end{eqnarray*}
where the last equality follows from Assumption \ref{Assumption-1}. Similarly, from (\ref{L2-L}):
\[
	\sup_{k,i,n} \left \| (\mathcal{H}^{(n)})^{k-i}(\mathcal{L}^{(n)}(t^{(n)}_{i-1}), \cdot) \right\|_{L^2} = o(1).
\]
Using the same arguments as before, one concludes that the sum in (\ref{eq:ErrorIter}) vanishes uniformly in time. Hence,
\begin{alignat}{2}
	\lim_{n \to \infty} \sup_{t \in [0,T]} \| \widehat{u}^{(n)}(t,\cdot)- \widehat{u}(t,\cdot) \|_{L^2}&=0.
	\nonumber
\end{alignat}
\end{proof}
\label{prop:numSchemConv}
\end{proposition}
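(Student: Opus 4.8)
The plan is to follow the classical consistency-plus-stability paradigm for finite difference schemes, exploiting that each $\mathcal{H}^{(n)}_t$ is an affine operator whose linear part is assembled from the $L^2$-isometric translations $T^{(n)}_\pm$ together with bounded zeroth-order terms.

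First I would measure consistency by inserting the exact solution $\widehat{u}$ into the scheme and defining the local truncation error
\[
	\mathcal{L}^{(n)}(t,x):=\frac{1}{\Delta t^{(n)}}\left( \widehat{u}(t+\Delta t^{(n)},x) - \mathcal{H}^{(n)}_t(\widehat{u}(t,\cdot))(x) \right).
\]
Because Proposition \ref{prop-PDE-solution} guarantees that $\widehat{u}$ is bounded with uniformly bounded first and second derivatives, I may Taylor expand in both arguments; the difference quotients then produce $\widehat{u}_t$ and $\widehat{u}_x$ with remainders of higher order. The scaling Assumption \ref{assumption-scaling} is precisely what makes the prefactors combine: $\frac{\Delta p^{(n)}\Delta x^{(n)}}{\Delta t^{(n)}}$ converts the translation differences weighted by $\Delta p^{(n)} A_R$ and $\Delta p^{(n)} A_L$ into the transport term $(A_R-A_L)\widehat{u}_x = A\widehat{u}_x$, while $\frac{\Delta v^{(n)}}{\Delta t^{(n)}}$ converts the reaction block into $B\widehat{u}+c$. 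Hence the bracket collapses to the PDE residual, which is zero, leaving $\mathcal{L}^{(n)}(t,x)=o(1)$ uniformly in $(t,x)$. Since $\widehat{u}(t,\cdot)$ — and therefore $\mathcal{L}^{(n)}(t,\cdot)$ — vanishes outside a fixed compact interval for all $t\in[0,T]$ (again Proposition \ref{prop-PDE-solution}), this pointwise estimate upgrades at once to $\sup_{t\in[0,T]}\|\mathcal{L}^{(n)}(t,\cdot)\|_{L^2}\to 0$.

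Next I would propagate the error $\delta\widehat{u}^{(n)}:=\widehat{u}^{(n)}-\widehat{u}$. Linearity of the homogeneous part of $\mathcal{H}^{(n)}_t$ (the inhomogeneity $g$ cancels in the difference) gives $\delta\widehat{u}^{(n)}(t^{(n)}_{k+1})=\mathcal{H}^{(n)}_k(\delta\widehat{u}^{(n)}(t^{(n)}_k))-\Delta t^{(n)}\mathcal{L}^{(n)}(t^{(n)}_k)$, and iterating expresses the error as $(\mathcal{H}^{(n)})^k(\delta\widehat{u}^{(n)}(0))$ minus a scheme-propagated sum of truncation errors. The crux — and the step I expect to be the main obstacle — is the stability bound $\sup_{\|v\|_{L^2}=1}\|\mathcal{H}^{(n)}_t(v)\|_{L^2}\le 1+C\Delta t^{(n)}$ with $C$ independent of $t$ and $n$. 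Here the isometry $\|T^{(n)}_\pm(v)\|_{L^2}=\|v\|_{L^2}$ is decisive: applying the triangle inequality, the translation contributions carrying weights $\Delta p^{(n)} A_R$ and $\Delta p^{(n)} A_L$ exactly offset the $-\Delta p^{(n)}(A_R+A_L)$ coming from the identity coefficient, so the first-order-in-$\Delta p^{(n)}$ terms cancel in norm and only the reaction term survives; as $\Delta v^{(n)}=\mathcal{O}(\Delta t^{(n)})$ and $\|F\|_\infty<\infty$, this leaves the benign factor $1+C\Delta t^{(n)}$.

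Finally I would combine the two ingredients. Iterating the stability bound over the $\lfloor T/\Delta t^{(n)}\rfloor$ steps yields the discrete-Gronwall factor $(1+C\Delta t^{(n)})^{\lfloor T/\Delta t^{(n)}\rfloor}\le e^{CT}$, so the propagated initial error is at most $e^{CT}\|\delta\widehat{u}^{(n)}(0)\|_{L^2}=o(1)$ by the convergence of initial states in Assumption \ref{Assumption-1}, and the accumulated truncation sum is at most $T\,e^{CT}\sup_{t}\|\mathcal{L}^{(n)}(t,\cdot)\|_{L^2}=o(1)$ by consistency. Taking the supremum over $t\in[0,T]$ then gives the asserted uniform $L^2$-convergence. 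Everything apart from the stability estimate is routine bookkeeping once the isometry of the translations and the scaling relations of Assumption \ref{assumption-scaling} are in hand.
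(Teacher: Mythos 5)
Your proposal is correct and takes essentially the same route as the paper: the same local truncation error and Taylor-expansion consistency step (upgraded to $L^2$ via the compact support from Proposition \ref{prop-PDE-solution}), the same error recursion, the same stability bound $\sup_{\|v\|_{L^2}=1}\|\mathcal{H}^{(n)}_t(v)\|_{L^2}\le 1+C\Delta t^{(n)}$ exploiting the isometry of $T^{(n)}_\pm$, and the same discrete-Gronwall combination with Assumption \ref{Assumption-1}. Your explicit observation that the inhomogeneity $g$ cancels in the difference, so that only the linear part of the affine operator $\mathcal{H}^{(n)}_t$ is iterated, is in fact a small sharpening of the paper's looser appeal to ``linearity.''
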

\vspace{3mm}
 Next, we show that the functions $u^{(n)}$ also approximate the PDE. More precisely, the following holds.

\begin{proposition} \label{prop-PDE-scheme2}
Under the assumptions of Theorem \ref{MainResult}
\[
	\lim_{n \to \infty} \sup_{t \in [0,T]} \| \widehat{u}^{(n)}(t;\cdot) - {u}^{(n)}(t;\cdot) \|_{L^2} = 0.
\]
\end{proposition}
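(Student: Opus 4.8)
The plan is to exploit that both $\widehat{u}^{(n)}$ and $u^{(n)}$ are generated by affine recursions started from the \emph{same} datum $v^{(n)}_0$, and to show that their one-step update maps differ only by a term governed by $\|f^{(n),I}-f^I\|$. Writing $e^{(n)}_k := u^{(n)}_k - \widehat{u}^{(n)}_k$, so that $e^{(n)}_0 = 0$, I would derive a recursion for $e^{(n)}_k$ whose homogeneous part is the linear part of $\mathcal{H}^{(n)}_{t^{(n)}_k}$ and whose inhomogeneity is the per-step discrepancy, and then close the estimate with the same contraction bound $\sup_{\|v\|_{L^2}=1}\|\mathcal{H}^{(n)}_t(v)\|_{L^2}\le 1 + C\Delta t^{(n)}$ established in the proof of Proposition \ref{prop-PDE-scheme}, together with a discrete Gronwall argument.

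First I would compute $\mathbb{E}[\mathcal{D}^{(n)}_{v,k}(\hat{\gamma}(t^{(n)}_k),v)]$ for a fixed deterministic $v$. Since $\phi_k$ is conditionally independent of $(\omega^I_k,\pi^I_k)$ and the price argument is deterministic, the price-shift entries contribute $\Delta p^{(n)} p^A(\hat\gamma(t^{(n)}_k))[T^{(n)}_+(v)-v] + \Delta p^{(n)} p^B(\hat\gamma(t^{(n)}_k))[T^{(n)}_-(v)-v]$ on the buy side (and the analogous $E,F$ terms on the sell side). For the cancelation entry, independence of $\omega^C_k$ and $\pi^C_k$ gives $\mathbb{E}[M^{(n),C}_k(x)] = \mathbb{E}[\omega^C_1]\,\mathbb{P}[\pi^C_k \in [x^{(n)}_j,x^{(n)}_{j+1})] = \Delta x^{(n)} f^{(n),C}(x)$ on bin $j$, so the cancelation term equals $-\Delta v^{(n)}(1-\Delta p^{(n)}) p^C(\hat\gamma(t^{(n)}_k)) f^{(n),C}(\cdot) v(\cdot)$, and similarly the placement term equals $\Delta v^{(n)}(1-\Delta p^{(n)}) p^D(\hat\gamma(t^{(n)}_k)) f^{(n),D}(\cdot)$. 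Comparing with the definition of $\mathcal{H}^{(n)}_t$, and recalling that $F=B(\hat\gamma,\cdot)$ and $g=c(\hat\gamma,\cdot)$ have buy-side entries $-p^C f^C$ and $p^D f^D$, one sees that $v + \mathbb{E}[\mathcal{D}^{(n)}_{v,k}(\hat\gamma(t^{(n)}_k),v)] = \mathcal{H}^{(n)}_{t^{(n)}_k}(v) + R^{(n)}_k(v)$, where the remainder is exactly the discrepancy between the discretized and the exact densities,
\[
R^{(n)}_k(v) = \Delta v^{(n)}(1-\Delta p^{(n)})\big[-p^C(\hat\gamma(t^{(n)}_k))(f^{(n),C}-f^C)\,v + p^D(\hat\gamma(t^{(n)}_k))(f^{(n),D}-f^D)\big]
\]
on the buy side, with the obvious $G,H$ analogue on the sell side.

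The remainder is then estimated in $L^2$: since the densities $f^I$ and their discretizations $f^{(n),I}$ have compact support, $\|f^{(n),I}-f^I\|_{L^2}\le C\|f^{(n),I}-f^I\|_\infty = o(1)$ by (\ref{eq:RegInitPriceDensIV}), whence $\|R^{(n)}_k(v)\|_{L^2}\le \Delta v^{(n)}\,\varepsilon_n\,(1+\|v\|_{L^2})$ with $\varepsilon_n = o(1)$ uniformly in $k$ and $t$. By the linearity of the increment, $e^{(n)}_{k+1} = \mathcal{H}^{(n)}_{t^{(n)}_k}(u^{(n)}_k)-\mathcal{H}^{(n)}_{t^{(n)}_k}(\widehat u^{(n)}_k) + R^{(n)}_k(u^{(n)}_k)$, and the difference of $\mathcal{H}$-values equals its linear part applied to $e^{(n)}_k$ (the inhomogeneous $g$-term cancels), which obeys $\|e^{(n)}_{k+1}\|_{L^2}\le(1+C\Delta t^{(n)})\|e^{(n)}_k\|_{L^2} + \|R^{(n)}_k(u^{(n)}_k)\|_{L^2}$. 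To control $\|u^{(n)}_k\|_{L^2}$ in the remainder I would bootstrap: Proposition \ref{prop-PDE-scheme} and Proposition \ref{prop-PDE-solution} give a uniform bound $\sup_k\|\widehat u^{(n)}_k\|_{L^2}\le C'$, hence $\|u^{(n)}_k\|_{L^2}\le C' + \|e^{(n)}_k\|_{L^2}$. Substituting and using $\Delta v^{(n)} = \mathcal{O}(\Delta t^{(n)})$ (Assumption \ref{assumption-scaling}) gives $\|e^{(n)}_{k+1}\|_{L^2}\le(1+C''\Delta t^{(n)})\|e^{(n)}_k\|_{L^2} + o(\Delta t^{(n)})$, and discrete Gronwall with $e^{(n)}_0=0$ and $k\le\lfloor T/\Delta t^{(n)}\rfloor$ yields $\sup_k\|e^{(n)}_k\|_{L^2} = o(1)$, which is the claim.

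The main obstacle is the bookkeeping in the expectation computation — in particular extracting the cancelation contribution in the precise form $\Delta x^{(n)} f^{(n),C}$ via the independence of $\omega^I_k$ and $\pi^I_k$ and the definition of $f^{(n),I}$ — combined with the fact that the cancelation remainder carries a factor $\|v\|_{L^2}$, so the Gronwall step must be run on the coupled bound rather than on a purely inhomogeneous linear recursion. Once the one-step discrepancy has been identified with $R^{(n)}_k$, the stability and Gronwall argument is essentially identical to the one already carried out in the proof of Proposition \ref{prop-PDE-scheme}.
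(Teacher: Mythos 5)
Your proof is correct and follows essentially the same route as the paper: the paper likewise identifies the expected-dynamics recursion as the operator $\mathcal{H}^{(n)}_t$ with $f^I$ replaced by the discretized densities $f^{(n),I}$ (its operator $\widehat{\mathcal{H}}^{(n)}_t$), obtains an affine error recursion whose per-step inhomogeneity is controlled by $\|f^{(n),I}-f^I\|_\infty = o(1)$ from (\ref{eq:RegInitPriceDensIV}), and closes with the stability bound $1+C\Delta t^{(n)}$ and the iteration/Gronwall argument of Proposition \ref{prop-PDE-scheme}. The only cosmetic difference is that the paper anchors the remainder term at $\widehat{u}^{(n)}_k$, whose uniform $L^2$-bound is already available from Proposition \ref{prop-PDE-scheme}, whereas you anchor it at $u^{(n)}_k$ and bootstrap its bound through the Gronwall step -- both are valid.
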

\begin{proof}
The proof is similar to that of Proposition \ref{prop-PDE-scheme}. By analogy to the operator
$\mathcal{H}^{(n)}_t$ we introduce an operator $\widehat{\mathcal{H}}^{(n)}_t$ on $L^2$ by
\begin{equation} 
\begin{split} \label{def:operatorH1}
	\widehat{\mathcal{H}}^{(n)}_t(v)  := &
	~ v + \Delta p^{(n)} \cdot A_{R}(t) \left[ v(\cdot + \Delta x^{(n)})-v(\cdot) \right] \nonumber\\
	& ~~~ + \Delta p^{(n)} \cdot A_{L}(t) \left[ v(\cdot-\Delta x^{(n)})-v(\cdot) \right]  \nonumber\\
	& ~~~ + \Delta v^{(n)} \cdot (1-\Delta p^{(n)}) \left[ F^{(n)}(t,\cdot) \cdot v(\cdot) + g^{(n)}(t,\cdot) \right].
\end{split}
\end{equation}
where for $t \in [t^{(n)}_k, t^{(n)}_{k+1})$:
\begin{eqnarray*}
	F^{(n)}(t,\cdot) & := & \left( \begin{array}{cc}
	-f^{(n),C}(\cdot) \cdot p^C(\widehat{\gamma}(t^{(n)}_k)) & 0 \\ 0 & -f^{(n),G}(\cdot) \cdot p^G(\widehat{\gamma}(t^{(n)}_k))
	\end{array} \right) \\
	g^{(n)}(t,\cdot) & := & \left( \begin{array}{c}
	f^{(n),D}(\cdot) \cdot p^D(\widehat{\gamma}(t^{(n)}_k)) \\ f^{(n),H}(\cdot) \cdot p^H(\widehat{\gamma}(t^{(n)}_k))  \end{array} \right).
\end{eqnarray*}
For the error function $\delta \widehat{u}^{(n)}_k(\cdot) := \widehat{u}^{(n)}_k(\cdot) - {u}^{(n)}_k(\cdot)$ we then obtain
\begin{eqnarray*}
	\delta \widehat{u}^{(n)}_{k+1}(\cdot) & = & \widehat{\mathcal{H}}^{(n)}_t(\delta \widehat{v}^{(n)}_{k}		(\cdot)) + \Delta v^{(n)} \cdot (1 - \Delta p^{(n)}) \cdot \left( \delta F^{(n)}(\cdot) \cdot \widehat{u}^{(n)}_k(\cdot) + \delta g^{(n)}
	(\cdot) \right) \\
	& =: & \widehat{\mathcal{H}}^{(n)}_t(\delta \widehat{u}^{(n)}_{k}(\cdot)) + \Delta v^{(n)} \cdot (1-\Delta p^{(n)}) \cdot
	\widehat{\mathcal{L}}(t^{(n)}_k;\cdot)
\end{eqnarray*}
where
\begin{eqnarray*}
	\delta F^{(n)}(t,\cdot) &:=& \left( \begin{array}{cc}
	-( f^{(n),C}(\cdot) - f^C(\cdot) ) \cdot p^C(\widehat{\gamma}(t^{(n)}_k))  & 0 \\ 0 & -( f^{(n),G}(\cdot) -  f^G(\cdot)) \cdot p^G(\widehat{\gamma}(t^{(n)}_k))
	\end{array} \right), \\
	\delta g^{(n)}(t,\cdot) &:=& \left( \begin{array}{c}
	\left( f^{(n),D}(\cdot) -  f^D(\cdot) \right) \cdot p^D(\widehat{\gamma}(t^{(n)}_k))  \\
	\left( f^{(n),H}(\cdot) -  f^H(\cdot) \right) \cdot p^H(\widehat{\gamma}(t^{(n)}_k))
	\end{array} \right).
\end{eqnarray*}
By construction, the grid-point functions $\widehat{u}^{(n)}$ are uniformly bounded. As it result, it follows from Assumption \ref{Assumption-2} that
\[
	\lim_{n \to \infty} \sup_{k=0, ..., \lfloor T/\Delta t^{(n)} \rfloor} \| \widehat{\mathcal{L}}(t^{(n)}_k;\cdot) \|_{L^2}  = 0.
\]	
One can now proceed as in the proof of Proposition \ref{prop-PDE-scheme}, to conclude that
\[
	\lim_{n \to \infty} \sup_{t \in [0,T]} \| \widehat{u}^{(n)}(t,\cdot) - u^{(n)}(t;\cdot) \|_{L^2} = 0.
\]
\end{proof}


\subsection{Expected volume dynamics and discretized PDEs}

 To show the convergence of volume density functions we compare the random states $\eta_{v}^{(n)}$ with the deterministic  approximations of the limiting PDE obtained in the previous subsection. For this, we introduce the deterministic step function valued processes $\widetilde{u}^{(n)}$:
\begin{equation}
\widetilde{u}^{(n)}(t,\cdot):=\widetilde{u}_{k}^{(n)}  \quad \mbox{for } t \in [t^{(n}_{k}, t^{(n)}_{k+1}) 
\label{def:Contwidetilde}
\end{equation}
where
\begin{alignat}{2}
\begin{cases}
	\widetilde{u}_{k+1}^{(n)}&:=\widetilde{u}_{k}^{(n)}+\mathbb{E}\left[\mathcal{D}_{v,k}^{(n)}\left( \eta^{(n)}_{\gamma,k}, \widetilde{u}^{(n)}_k \right) \right]\\
\label{def:UTildenjkRec}\\
\widetilde{u}_{0}^{(n)}&:=v_{0}^{(n)}.
\end{cases}.
\end{alignat}

 The process $\widetilde{u}^{(n)}$ describes the expected dynamics of the volume density functions for the actual price process; in particular, $\widetilde{u}^{(n)}$ is a stochastic process. By contrast, the process $u^{(n)}$ describes the dynamics of the expected volume density functions when the random evolution of bid and ask prices is replaced by its deterministic limit. We have:
\begin{eqnarray*}
	\|\eta_{v}^{(n)}(t,\cdot)-\widehat{u}(t,\cdot)\|_{L^2} & \leq &
       \|\eta_{v}^{(n)}(t,\cdot)-\widetilde{u}^{(n)}(t,\cdot)\|_{L^2} + \|\widetilde{u}^{(n)}(t,\cdot)-u^{(n)}(t,\cdot)\|_{L^2} \\
       && + \| u^{(n)}(t,\cdot)-\widehat{u}^{(n)}(t,\cdot) \| +\|\widehat{u}^{(n)}(t,\cdot)-\widehat{u}(t,\cdot)\|_{L^2}.
\label{eq:Theo:DensConv}
\end{eqnarray*}
The last two terms are deterministic and converges uniformly to zero by Propositions \ref{prop:numSchemConv} and \ref{prop-PDE-scheme2}. It remains to show convergence of the first two (random) terms. This will be achieved in the following two subsections.


\subsubsection{Estimating the price impact of expected volume dynamics}

 The term $\|\widetilde{u}^{(n)}(t,\cdot)- u^{(n)}(t,\cdot)\|_{L^2}$ measures the impact of the noise in the  price process on the expected standing volume. The following proposition shows that it converges to zero almost surely (i.e. for almost all price processes), uniformly over compact time intervals.

\begin{proposition} \label{prop-tilde-u}
Under the assumptions of Theorem \ref{MainResult} it holds that
\begin{equation*}
\sup_{t\in[0,T]}\|\widetilde{u}^{(n)}(t,\cdot)- u^{(n)}(t,\cdot)\|_{L^2}\rightarrow 0\quad\text{in probability as $n\rightarrow\infty$.}
\end{equation*}
\end{proposition}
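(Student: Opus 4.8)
The plan is to realize $\widetilde u^{(n)}$ and $u^{(n)}$ as iterates of two affine operators on $L^2\times L^2$ that differ only in the price argument feeding the event probabilities, and then to transport the price discrepancy $|\eta^{(n)}_{\gamma,k}-\widehat\gamma(t^{(n)}_k)|$ through a discrete Gronwall estimate. From the proof of Proposition \ref{prop-PDE-scheme2} we have $u^{(n)}_{k+1}=\widehat{\mathcal{H}}^{(n)}_{t^{(n)}_k}(u^{(n)}_k)$ with $\widehat{\mathcal{H}}^{(n)}_t$ the affine operator introduced there, whose coefficients are evaluated at the \emph{deterministic} price $\widehat\gamma(t^{(n)}_k)$. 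By definition of $\widetilde u^{(n)}$, the same recursion holds with $\widehat{\mathcal{H}}^{(n)}_{t^{(n)}_k}$ replaced by the operator $\widetilde{\mathcal{H}}^{(n)}_k$ obtained by substituting the \emph{random} price $\eta^{(n)}_{\gamma,k}$ for $\widehat\gamma(t^{(n)}_k)$ in the matrices $A_R,A_L$ and in $F^{(n)},g^{(n)}$. Writing $\delta u^{(n)}_k:=\widetilde u^{(n)}_k-u^{(n)}_k$ and inserting $\pm\,\widehat{\mathcal{H}}^{(n)}_{t^{(n)}_k}(\widetilde u^{(n)}_k)$, the identical source terms cancel and linearity of the remaining part yields
\begin{equation*}
	\delta u^{(n)}_{k+1}=L^{(n)}_k\,\delta u^{(n)}_k+R^{(n)}_k,\qquad \delta u^{(n)}_0=0,
\end{equation*}
where $L^{(n)}_k$ is the linear (shift-and-multiplication) part of $\widehat{\mathcal{H}}^{(n)}_{t^{(n)}_k}$ and $R^{(n)}_k:=\widetilde{\mathcal{H}}^{(n)}_k(\widetilde u^{(n)}_k)-\widehat{\mathcal{H}}^{(n)}_{t^{(n)}_k}(\widetilde u^{(n)}_k)$ encodes the coefficient mismatch.

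First I would collect the pathwise a priori estimates. Exactly as in Proposition \ref{prop-PDE-scheme}, the isometry $\|T^{(n)}_\pm v\|_{L^2}=\|v\|_{L^2}$ and boundedness of the $p^I$ and of $F^{(n)},g^{(n)}$ give $\|L^{(n)}_k\|_{L^2\to L^2}\le 1+C\Delta t^{(n)}$ and hence $\sup_k\|\widetilde u^{(n)}_k\|_{L^2}\le e^{CT}(\|v^{(n)}_0\|_{L^2}+C)$, uniformly in $n$ and in the realized price path. The key additional estimate is the uniform control of the discrete spatial modulus $\rho^{(n)}_k:=\|T^{(n)}_+\widetilde u^{(n)}_k-\widetilde u^{(n)}_k\|_{L^2}$. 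Applying $T^{(n)}_+-\mathrm{Id}$ to the recursion for $\widetilde u^{(n)}$, the shift terms commute with it, while the multiplication terms split as $(T^{(n)}_+-\mathrm{Id})(F^{(n)}v)=(T^{(n)}_+F^{(n)})(T^{(n)}_+v-v)+(T^{(n)}_+F^{(n)}-F^{(n)})v$, producing a driving term of order $\Delta v^{(n)}\rho^{(n)}_k$ and an inhomogeneity of order $\Delta v^{(n)}\,\|T^{(n)}_\pm f^{(n),I}-f^{(n),I}\|_\infty(\|\widetilde u^{(n)}_k\|_{L^2}+1)=\mathcal{O}(\Delta v^{(n)}\Delta x^{(n)})$ by (\ref{eq:RegInitPriceDensIV}). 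Summing over $k\le T/\Delta t^{(n)}$, using $\rho^{(n)}_0=\mathcal{O}(\Delta x^{(n)})$ from Assumption \ref{Assumption-1} and $\Delta v^{(n)}/\Delta t^{(n)}\to c_1$, gives $\sup_k\rho^{(n)}_k=\mathcal{O}(\Delta x^{(n)})$, again pathwise.

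Next I would bound the mismatch. The two shift contributions to $R^{(n)}_k$ carry the factor $\Delta p^{(n)}$, a coefficient difference bounded by $L\,|\eta^{(n)}_{\gamma,k}-\widehat\gamma(t^{(n)}_k)|$ through the bounded gradients of $p^A,p^B,p^E,p^F$ (Assumption \ref{Assumption-5}), and the factor $T^{(n)}_\pm\widetilde u^{(n)}_k-\widetilde u^{(n)}_k$, whose norm is $\rho^{(n)}_k=\mathcal{O}(\Delta x^{(n)})$ by the previous step; hence they are of order $\Delta p^{(n)}\Delta x^{(n)}\,|\eta^{(n)}_{\gamma,k}-\widehat\gamma(t^{(n)}_k)|$. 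The source contribution carries $\Delta v^{(n)}$ and, by the bounded gradients of $p^C,p^D,p^G,p^H$ together with the uniform boundedness of $\widetilde u^{(n)}_k$ and of the $f^{(n),I}$ (Assumption \ref{Assumption-3}), is of order $\Delta v^{(n)}\,|\eta^{(n)}_{\gamma,k}-\widehat\gamma(t^{(n)}_k)|$. Since $\Delta p^{(n)}\Delta x^{(n)}/\Delta v^{(n)}\to c_0$ (Assumption \ref{assumption-scaling}), both groups are of the same order, so
\begin{equation*}
	\|R^{(n)}_k\|_{L^2}\le C\,\Delta v^{(n)}\,\bigl|\eta^{(n)}_{\gamma,k}-\widehat\gamma(t^{(n)}_k)\bigr|.
\end{equation*}
Iterating the error recursion and using $\|L^{(n)}_k\|\le 1+C\Delta t^{(n)}$ gives $\|\delta u^{(n)}_k\|_{L^2}\le e^{CT}\sum_{i=0}^{k-1}\|R^{(n)}_i\|_{L^2}$, whence, with at most $T/\Delta t^{(n)}$ summands,
\begin{equation*}
	\sup_{t\in[0,T]}\|\widetilde u^{(n)}(t,\cdot)-u^{(n)}(t,\cdot)\|_{L^2}\le e^{CT}\,C\,\frac{T\,\Delta v^{(n)}}{\Delta t^{(n)}}\,\sup_{k}\bigl|\eta^{(n)}_{\gamma,k}-\widehat\gamma(t^{(n)}_k)\bigr|.
\end{equation*}
The prefactor stays bounded since $\Delta v^{(n)}/\Delta t^{(n)}\to c_1$, and $\sup_k|\eta^{(n)}_{\gamma,k}-\widehat\gamma(t^{(n)}_k)|\le\sup_{t\in[0,T]}|\eta^{(n)}_\gamma(t)-\widehat\gamma(t)|\to 0$ in probability by Proposition \ref{convergence-prices}; the claim follows.

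The step I expect to be the main obstacle is the uniform $\mathcal{O}(\Delta x^{(n)})$ bound on the spatial modulus $\rho^{(n)}_k$, required simultaneously for all $k\le T/\Delta t^{(n)}$ and pathwise in the realized price sequence. This is not mere bookkeeping: since $\Delta p^{(n)}=\mathcal{O}((\Delta t^{(n)})^{\alpha})$ with $\alpha<1$ forces $\Delta p^{(n)}/\Delta t^{(n)}\to\infty$, the shift contributions to $R^{(n)}_k$ would not survive summation over $\sim T/\Delta t^{(n)}$ steps without the extra factor $\Delta x^{(n)}$ gained from the smoothness of $\widetilde u^{(n)}_k$. It is precisely the regularity of $v^{(n)}_0$ (Assumption \ref{Assumption-1}) and of the densities $f^I$ (Assumption \ref{Assumption-3} and (\ref{eq:RegInitPriceDensIV})), combined with the scaling balance $\Delta p^{(n)}\Delta x^{(n)}\sim\Delta v^{(n)}\sim\Delta t^{(n)}$, that closes the estimate.
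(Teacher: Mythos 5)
Your proof is correct, and while it shares the overall skeleton of the paper's argument -- realize $u^{(n)}$ and $\widetilde{u}^{(n)}$ as iterates of affine operators that differ only in whether the event probabilities are evaluated at $\widehat{\gamma}(t^{(n)}_k)$ or at $\eta^{(n)}_{\gamma,k}$, derive a perturbed linear error recursion with operator norm $1+C\Delta t^{(n)}$, and close via discrete Gronwall plus Proposition \ref{convergence-prices} -- the technical route is genuinely different. The paper uses the mirror-image decomposition: it applies the \emph{random}-coefficient operator $\widetilde{\mathcal{H}}^{(n)}_k$ to the error and lets the coefficient mismatch act on the \emph{deterministic} scheme $u^{(n)}_k$, so the a priori regularity it needs (uniform $L^2$ bound and the shift estimate $\|T^{(n)}_\pm u^{(n)}_k-u^{(n)}_k\|_{L^2}=\mathcal{O}(\Delta x^{(n)})$) is Corollary \ref{un-lip}, which is obtained through the probabilistic representation $u^{(n)}_k=\mathbb{E}\,\zeta^{(n)}_k$ and the order-book estimates of Appendix B (Lemmas \ref{lemma-B1}--\ref{lemma-shift}). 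You instead apply the deterministic linear part to the error and let the mismatch $R^{(n)}_k$ act on the \emph{random} $\widetilde{u}^{(n)}_k$, so you must supply the same two bounds pathwise for $\widetilde{u}^{(n)}_k$; you do this by a purely analytic finite-difference argument -- apply $T^{(n)}_+-\mathrm{Id}$ to the recursion, commute it with the shift part, use the discrete product rule on the multiplication part, and invoke Gronwall together with Assumption \ref{Assumption-1} and (\ref{eq:RegInitPriceDensIV}). Your route buys self-containedness (Appendix B is bypassed for this proposition, and the estimate is deterministic in the realized price path, so randomness enters only through $\sup_k|\eta^{(n)}_{\gamma,k}-\widehat{\gamma}(t^{(n)}_k)|$), plus the explicit quantitative bound $\sup_t\|\widetilde{u}^{(n)}-u^{(n)}\|_{L^2}\leq C\sup_k|\eta^{(n)}_{\gamma,k}-\widehat{\gamma}(t^{(n)}_k)|$; the paper's route places the regularity burden on the deterministic object $u^{(n)}$, for which expectations and the order-book representation are available, and reuses machinery its appendix develops anyway. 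You also correctly identify the one point where the argument could collapse -- the shift mismatch carries $\Delta p^{(n)}/\Delta t^{(n)}\to\infty$ and survives summation only because of the extra $\mathcal{O}(\Delta x^{(n)})$ spatial modulus and the balance $\Delta p^{(n)}\Delta x^{(n)}\sim\Delta v^{(n)}\sim\Delta t^{(n)}$ -- which is exactly the role Corollary \ref{un-lip} plays in the paper.
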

{
\begin{proof}
We argue again as in the proof of Proposition \ref{prop-PDE-scheme}. Analogously to the operator $\widehat{\mathcal{H}}^{(n)}$ defined in the proof of Proposition \ref{prop-PDE-scheme2} we define for $t \in [t^{(n)}_k, t^{(n)}_{k+1})$ the operator  
\begin{equation} \label{def:operatorH2}
\begin{split}
	\widetilde{\mathcal{H}}^{(n)}_k(v)  := &
	~ v + \Delta p^{(n)} \cdot \widetilde{A}^{(n)}_{R}(t) \left[ v(\cdot + \Delta x^{(n)})-v(\cdot) \right] \nonumber\\
	& ~~~ + \Delta p^{(n)} \cdot \widetilde{A}_{L}(t) \left[ v(\cdot-\Delta x^{(n)})-v(\cdot) \right]  \nonumber\\
	& ~~~ + \Delta v^{(n)} \cdot (1-\Delta p^{(n)}) \left[ \widetilde{F}^{(n)}(t,\cdot) \cdot v(\cdot) + \widetilde{g}^{(n)}(t,\cdot) \right].
\end{split}
\end{equation}
where for $t \in [t^{(n)}_k, t^{(n)}_{k+1})$:
\begin{eqnarray*}
	\widetilde{A}^{(n)}_{R}(t) & := & \left( \begin{array}{cc}
	p^A(\eta^{(n)}_{\gamma,k}) & 0 \\ 0 & p^E(\eta^{(n)}_{\gamma,k}) 
	\end{array} \right) 
\end{eqnarray*}
and $\widetilde{A}^{(n)}_{L}(t)$, $\widetilde{F}^{(n)}(t,\cdot)$ and $\widetilde{g}^{(n)}(t,\cdot)$ are defined analogously. Let us further put $\delta p^{(n),I}_k := p^{I}(\hat{\gamma}(t^{(n)}_k)) - p^{I}(\eta^{(n)}_{\gamma,k})$ for $I={\bf A},{\bf B},{\bf E},{\bf F}$ and
\[
	\delta A^{(n)}_{R,k}:=\left(\begin{array}{cc}
                          \delta p^{(n),A}_k  & 0 \\
                          0 & \delta p^{(n),E}_k
                        \end{array}
                      \right), \quad
      \delta A^{(n)}_{L.k} :=\left(\begin{array}{cc}
                           \delta p^{(n),B}_k  & 0 \\
                          0 & \delta p^{(n),F}_k
                        \end{array}
                      \right)
\]
and denote by $\delta f^{(n)}_k$ and $\delta g^{(n)}_k$ the corresponding quantities for cancelations and limit order placements. 
Then the error function $\delta \widetilde{u}^{(n)}_k(\cdot) :=  \widetilde{u}^{(n)}_k(\cdot)- u^{(n)}_k(\cdot)$
satisfies $\delta \widetilde{u}^{(n)}_0 = 0$ and can be represented in terms of the operator $\widetilde{\mathcal{H}}^{(n)}_k$ as follows: 
\[
	\delta \widetilde{u}^{(n)}_{k+1}(\cdot) = \widetilde{\mathcal{H}}^{(n)}_k(\delta \widetilde{u}^{(n)}_{k}(\cdot) ) + \Delta t^{(n)} \cdot \widetilde{\mathcal{L}}^{(n)}_k(t^{(n)}_{k},\cdot )
\]
where 
\begin{equation}
\begin{split}
	\widetilde{\mathcal{L}}^{(n)}_k(t^{(n)}_k, \cdot) & := \frac{\Delta p^{(n)}}{\Delta t^{(n)}}  \cdot \delta {A}^{(n)}_{R,k} \left[ u^{(n)}_k(\cdot + \Delta x^{(n)})-u^{(n)}_k(\cdot) \right] \nonumber\\
	& ~~~ + \frac{\Delta p^{(n)}}{\Delta t^{(n)}} \cdot \delta {A}^{(n)}_{L,k} \left[ u^{(n)}_k(\cdot - \Delta x^{(n)})-u^{(n)}_k(\cdot) \right]  \nonumber\\
	& ~~~ + (1-\Delta p^{(n)}) \cdot \left[ \delta {f}^{(n)} \cdot u^{(n)}_k(\cdot) + \delta {g}^{(n)}_k \right].
\end{split}
\end{equation}
Corollary \ref{un-lip} establishes
\[
	\| u^{(n)}_k\|_{L^2} \leq L \quad \mbox{and} \quad \| T^{(n)}_\pm u^{(n)}_{k} - u^{(n)}_{k} \|_{L^2} \leq L \cdot \Delta x^{(n)}
\]
for some constant $L < \infty$ that is independent of $(n,k)$. Using our assumptions on the placements and cancelations functions and event probability functions along with the fact that 
\begin{equation*}
	\lim_{n \to \infty} \sup_{k=0, ..., \lfloor T/\Delta t^{(n)} \rfloor}  |\eta^{(n)}_{\gamma,k}-\widehat{\gamma}(t^{(n)}_k)| \rightarrow 0\quad \text{in probability},
\end{equation*}
this implies 
\begin{equation*} 
	\lim_{n \to \infty} \sup_{k=0, ..., \lfloor T/\Delta t^{(n)} \rfloor}  \| \widetilde{\mathcal{L}}^{(n)}_k(t^{(n)}_k,\cdot) \|_{L^2} = 0.
\end{equation*}
We can now argue as in the proof of Proposition \ref{prop-PDE-scheme} to conclude. 
\end{proof}}


\subsubsection{Convergence of volumes to their expected values}

In this subsection we apply a law of large number for Hilbert space-valued triangular martingale difference arrays (TMDAs) in order to establish the missing convergence to zero of the distance between $\eta^{(n)}_v$ and $\widetilde{u}^{(n)}$. More precisely, our goal is to prove the following result.

\begin{proposition}Suppose the assumption of Theorem \ref{MainResult} hold. Then,
\begin{equation*}
	\sup_{t\in[0,T]}\|\eta_{v}^{(n)}(t,\cdot)-\widetilde{u}^{(n)}(t, \cdot)\|_{L^2} \to 0 \quad\text{in probability} \quad \mbox{as } n \to \infty.
\end{equation*}
\label{prop:etaNuTildeConv}
\end{proposition}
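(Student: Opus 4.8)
The plan is to realize the error process $\delta^{(n)}_k := \eta^{(n)}_{v,k}-\widetilde u^{(n)}_k$ as a discrete stochastic convolution of martingale noise against a uniformly bounded propagator, and to control it through the uniform law of large numbers for Hilbert-space triangular martingale difference arrays. Since both $\eta^{(n)}_v$ and $\widetilde u^{(n)}$ start from $v_0^{(n)}$ and are driven by the \emph{same} price path $\eta^{(n)}_{\gamma,k}$, and since $\mathcal{D}^{(n)}_{v,k}(\gamma,\cdot)$ is affine in the volume variable (the shift entries $M^{(n),A}_k,\dots$ and the cancelation entries are linear in $v$, while the placement entries are additive), subtracting the two recursions makes the additive placement terms cancel and yields
\[
	\delta^{(n)}_{k+1}=\big(\mathrm{Id}+\mathcal{A}^{(n)}_k\big)\delta^{(n)}_k+\xi^{(n)}_k,\qquad \delta^{(n)}_0=0,
\]
where $\mathcal{A}^{(n)}_k$ is the $\mathcal{F}^{(n)}_k$-measurable, price-dependent linear part of the map $v\mapsto\mathbb{E}\big[\mathcal{D}^{(n)}_{v,k}(\eta^{(n)}_{\gamma,k},v)\mid\mathcal{F}^{(n)}_k\big]$, and $\xi^{(n)}_k:=\mathcal{D}^{(n)}_{v,k}(\eta^{(n)}_{\gamma,k},\eta^{(n)}_{v,k})-\mathbb{E}\big[\mathcal{D}^{(n)}_{v,k}(\eta^{(n)}_{\gamma,k},\eta^{(n)}_{v,k})\mid\mathcal{F}^{(n)}_k\big]$ is an $L^2\times L^2$-valued martingale difference.

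Two ingredients feed the estimate. First, exactly as in the proof of Proposition \ref{prop-PDE-scheme}, the $L^2$-isometry of $T^{(n)}_{\pm}$ together with $1-\Delta p^{(n)}(\cdots)\ge 0$ gives the one-step operator bound $\|\mathrm{Id}+\mathcal{A}^{(n)}_k\|_{\mathrm{op}}\le 1+C\Delta t^{(n)}$, so the backward products of these operators over $\lfloor T/\Delta t^{(n)}\rfloor$ steps stay bounded by $e^{CT}$. The decisive point is that the active-order (shift) part, though weighted by the comparatively large factor $\Delta p^{(n)}$, enters $\mathrm{Id}+\mathcal{A}^{(n)}_k$ only through averaging operators $\mathrm{Id}+\Delta p^{(n)}(T^{(n)}_{\pm}-\mathrm{Id})$ of norm at most one, and therefore never inflates the propagator. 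Second, $\{\xi^{(n)}_k\}$ is a triangular martingale difference array whose conditional second moment I estimate cell by cell: the placement/cancelation increments are localized to a single grid cell and contribute $\mathbb{E}[\|\cdot\|_{L^2}^2\mid\mathcal{F}^{(n)}_k]=O(\Delta v^{(n)}\Delta p^{(n)})$, while the shift increments, bounded using the uniform $L^2$-Lipschitz estimate on the volume process from Corollary \ref{un-lip}, contribute $O\big(\Delta p^{(n)}(\Delta x^{(n)})^2\big)$. By Assumption \ref{assumption-scaling} the accumulated variance then satisfies $\sum_{k\le T/\Delta t^{(n)}}\mathbb{E}\|\xi^{(n)}_k\|_{L^2}^2=O\big(T(\Delta t^{(n)})^{\min(\alpha,\,1-\alpha)}\big)\to 0$.

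Combining these, martingale-difference orthogonality (the cross terms $\langle(\mathrm{Id}+\mathcal{A}^{(n)}_k)\delta^{(n)}_k,\xi^{(n)}_k\rangle$ vanish in conditional mean) gives the fixed-time bound $\mathbb{E}\|\delta^{(n)}_m\|_{L^2}^2\le e^{2CT}\sum_k\mathbb{E}\|\xi^{(n)}_k\|_{L^2}^2\to 0$, uniformly in $m\le\lfloor T/\Delta t^{(n)}\rfloor$. To upgrade this to the uniform-in-time statement I would invoke the maximal inequality for Hilbert-space-valued triangular martingale difference arrays (Theorem \ref{thm-appendix} and Corollary \ref{corollary-appendix}), applied to the recursion just as it was applied to the price noise in Lemma \ref{lemma-hatD}; since $\alpha>\tfrac12$ this delivers $\sup_{m}\|\delta^{(n)}_m\|_{L^2}\to 0$ in probability, and rewriting the step index in continuous time via (\ref{def:etavt}) finishes the proof. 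The main obstacle is precisely this last step: propagating the martingale cancelation \emph{uniformly in time} through the random, price-dependent transport-plus-reaction operators $\mathrm{Id}+\mathcal{A}^{(n)}_k$. A naive Gronwall estimate on norms only produces $\sum_k\|\xi^{(n)}_k\|_{L^2}$, which does \emph{not} vanish; what rescues the argument is the combination of (i) orthogonality of the martingale increments, which kills the cross terms in the second moment, and (ii) the forward $L^2$-stability furnished by the isometry of the transport operators, so that even the $\Delta p^{(n)}$-weighted shift events leave the accumulated noise unamplified.
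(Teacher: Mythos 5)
Your decomposition is correct as far as it goes: subtracting the defining recursion of $\widetilde u^{(n)}_k$ from that of $\eta^{(n)}_{v,k}$ does yield $\delta^{(n)}_{k+1}=(\mathrm{Id}+\mathcal{A}^{(n)}_k)\delta^{(n)}_k+\xi^{(n)}_k$, the bound $\|\mathrm{Id}+\mathcal{A}^{(n)}_k\|_{\mathrm{op}}\le 1+C\Delta t^{(n)}$ holds (the averaged cancelation part carries a factor $\Delta v^{(n)}$ rather than $\Delta v^{(n)}/\Delta x^{(n)}$, because $\mathbb{E}[M^{(n),C}_{v,k}]$ is of order $\Delta x^{(n)}$), and $\mathcal{F}^{(n)}_k$-measurability of $(\mathrm{Id}+\mathcal{A}^{(n)}_k)\delta^{(n)}_k$ does kill the cross terms. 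Note that this is organized differently from the paper, which keeps no propagator at all: it identifies $\widetilde u^{(n)}_k$ with the conditional expectation of $\eta^{(n)}_{v,k}$ given the price process, so that $\eta^{(n)}_v-\widetilde u^{(n)}$ is \emph{itself} a sum of martingale differences $Y^n_k$ to which Theorem \ref{thm-appendix} and Corollary \ref{corollary-appendix} apply directly. The first genuine gap in your version is the variance estimate for the shift noise. Corollary \ref{un-lip} is a statement about the \emph{deterministic} functions $u^{(n)}$ (equivalently about expected volumes; compare Lemma \ref{lemma-shift}, where the expectation stands \emph{inside} the norm); it gives no control on $\mathbb{E}\|T^{(n)}_{\pm}\eta^{(n)}_{v,k}-\eta^{(n)}_{v,k}\|^2_{L^2}$ for the \emph{random} density. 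That random step function carries single-cell spikes of height of order $\Delta v^{(n)}/\Delta x^{(n)}$ left by each of the roughly $T/\Delta t^{(n)}$ past placements, so it is not mean-square Lipschitz of order $\Delta x^{(n)}$: as the paper's Lemma \ref{last-lemma} shows (via the representation in absolute coordinates, the height bound (\ref{expected-norm2}) and Corollary \ref{lemma-shift2}), the correct order is $\Delta p^{(n)}\,\mathbb{E}\|T^{(n)}_\pm\eta^{(n)}_{v,k}-\eta^{(n)}_{v,k}\|^2_{L^2}=\mathcal{O}\bigl((\Delta v^{(n)})^{2\alpha}+(\Delta v^{(n)})^{2-\alpha}\bigr)$, and the contribution $(\Delta v^{(n)})^{2\alpha}=(\Delta p^{(n)})^2$ is absent from your $\mathcal{O}(\Delta p^{(n)}(\Delta x^{(n)})^2)$. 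Your final conclusion would survive with the corrected bound (both exponents exceed $1$), but the estimate you cite does not prove it, and this estimate is precisely the technical core of the paper's argument.

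The second gap is the passage to $\sup_{t\in[0,T]}$. Theorem \ref{thm-appendix} is a maximal inequality for \emph{martingales} $\sum_{k\le m}y^n_k$, whereas your error process $\delta^{(n)}_m=\sum_{k<m}\bigl[\prod_{j=k+1}^{m-1}(\mathrm{Id}+\mathcal{A}^{(n)}_j)\bigr]\xi^{(n)}_k$ is a stochastic convolution whose summands change with $m$ through the random, price-dependent propagator; it is not a martingale, so the theorem cannot be applied ``to the recursion just as in Lemma \ref{lemma-hatD}'' (there the centered price increments form an honest martingale sum). Your points (i)--(ii) deliver only the fixed-$m$ bound $\sup_m\mathbb{E}\|\delta^{(n)}_m\|^2_{L^2}\to 0$, and a union bound over the $\sim T/\Delta t^{(n)}$ time steps then fails because $2\beta<2$. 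Within your framework the step can be repaired, e.g.\ by the factorization $\delta^{(n)}_m=P^{(n)}_{m-1}\sum_{k<m}(P^{(n)}_{k})^{-1}\xi^{(n)}_k$, where $P^{(n)}_{k}:=(\mathrm{Id}+\mathcal{A}^{(n)}_k)\circ\cdots\circ(\mathrm{Id}+\mathcal{A}^{(n)}_0)$: the factors $(P^{(n)}_{k})^{-1}$ are $\mathcal{F}^{(n)}_k$-measurable and uniformly bounded (Neumann series, since $\|\mathcal{A}^{(n)}_k\|_{\mathrm{op}}=\mathcal{O}(\Delta t^{(n)})$), so the inner sum is a genuine $L^2$-valued martingale to which Theorem \ref{thm-appendix} applies, while $\|P^{(n)}_{m-1}\|_{\mathrm{op}}\le e^{CT}$; alternatively one uses the paper's conditional-expectation identification, under which the propagator never appears. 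As written, however, the uniform-in-time claim is asserted rather than proved.
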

\begin{proof}
Using the definition of $\eta_{v,k}^{(n)}$ in (\ref{def:etavk}) and $\widetilde{u}_{k}^{(n)}$  in (\ref{def:Contwidetilde}) we see that
\[
	\widetilde{u}^{(n)}_k = \mathbb{E} \eta_{v,k}^{(n)},
\]
conditioned on the price process. As a result,
\begin{alignat*}{2} \label{eq:TheoEtaVbConv}
\|\eta_{v}^{(n)}(t,\cdot)-\widetilde{u}^{(n)}(t,\cdot)\|_{L^2}
= \|\sum_{k=0}^{\lfloor t/\Delta t^{(n)} \rfloor}\left(\mathcal{D}_{v,k}^{(n)}(\cdot,\cdot)-\mathbb{E}[\mathcal{D}_{v,k}^{(n)}(\eta_{\gamma,k}^{(n)},\eta_{v,k}^{(n)}) ]\right) \|_{L^2}.
\end{alignat*}
In order to establish convergence of the sum to zero uniformly in time we introduce the $L^2$-valued triangular martingale-difference-array
\begin{equation}\label{def-Y}
	Y^n_k := \mathcal{D}_{v,k}^{(n)}(\cdot,\cdot)-\mathbb{E}[\mathcal{D}_{v,k}^{(n)}(\eta_{\gamma,k}^{(n)},\eta_{v,k}^{(n)})].
\end{equation}
If we can show that there exists $\beta > \frac{1}{2}$ such that
\begin{equation} \label{bound-expected-value}
	\sup_{n,k} \left( \frac{1}{\Delta t^{(n)}} \right)^{2\beta} \mathbb{E}[\|Y^n_k\|^2_{L^2}] < \infty,
\end{equation}
then Theorem \ref{thm-appendix} and Corollary \ref{corollary-appendix} would guarantee that
\begin{equation} \label{limit}
	\lim_{n \to \infty} \mathbb{P} \left[ \sup_{0 \leq m\leq \lfloor T/\Delta t^{(n)} \rfloor} \| \sum_{k=0}^m  Y^{n}_k \|_{L^2} > \epsilon \right] = 0
\end{equation}
and the proposition would be proved. {To establish (\ref{bound-expected-value}), we need to bound the following terms: 
	\begin{eqnarray*}
		\sup_{n,k} \mathbb{E} \left[ \left\| {\bf 1}_{D_k,H_k} \frac{\Delta v^{(n)}}{\Delta x^{(n)}} M^{(n),D,H}_{v,k} - 
		 \mathbb{E}\left[ {\bf 1}_{D_k,H_k} \frac{\Delta v^{(n)}}{\Delta x^{(n)}} M^{(n),D,H}_{v,k} \right] \right\|^2_{L^2} 
		 \right]  \\
		\sup_{n,k} \mathbb{E} \left[ \left\| {\bf 1}_{C_k,G_k} \frac{\Delta v^{(n)}}{\Delta x^{(n)}} M^{(n),C,G}_{v,k} \eta^{(n)}_{v,k} - 
		 \mathbb{E}\left[ {\bf 1}_{C_k,G_k} \frac{\Delta v^{(n)}}{\Delta x^{(n)}} M^{(n),C,G}_{v,k} \eta^{(n)}_{v,k} \right] \right\|^2_{L^2} 
		 \right] \\
		 \sup_{n,k} \mathbb{E} \left[ \left\| {\bf 1}_{A_k,B_k}  \left(T_{\pm}^{(n)}\left({\eta}_{v_b,k}^{(n)}\right)-{\eta}_{v_b,k}^{(n)} \right) - 
		 \mathbb{E}\left[ {\bf 1}_{A_k,B_k} \left(T_{\pm}^{(n)}\left({\eta}_{v_b,k}^{(n)}\right)-{\eta}_{v_b,k}^{(n)} \right) \right] \right\|^2_{L^2} 
		 \right] \\
		  \sup_{n,k} \mathbb{E} \left[ \left\| {\bf 1}_{E_k,F_k}  \left(T_{\pm}^{(n)}\left({\eta}_{v_s,k}^{(n)}\right)-{\eta}_{v_s,k}^{(n)} \right) - 
		 \mathbb{E}\left[ {\bf 1}_{E_k,F_k} \left(T_{\pm}^{(n)}\left({\eta}_{v_s,k}^{(n)}\right)-{\eta}_{v_s,k}^{(n)} \right) \right] \right\|^2_{L^2} 
		 \right].  
	\end{eqnarray*}
This is done in Lemma \ref{last-lemma} in the appendix. In particular, this lemma shows that the placement/cancellation and shift terms of of the order 
\[
	\mathcal{O}\left( (\Delta v^{(n)})^{2} \right) \quad \mbox{and} \quad  \mathcal{O}\left( (\Delta v^{(n)})^{2 \alpha} + (\Delta v^{(n)})^{2 - \alpha}\right),
\]
respectively. Since $\alpha \in (1/2,1)$  the assertion follows for $\beta := \min\{ \alpha, 1 - \alpha/2 \} $. }

\end{proof}


\section{Conclusion}
In this work a law of large numbers for limit order books was established. Starting from order arrival and cancelation rates for all price levels, we showed that the LOB dynamics can be described by a coupled PDE:ODE system when tick and order sizes tend to zero while arrival rates tend to infinity in a particular way. A key insight is that the scaling limit requires two time scales: a fast time scale for passive order arrivals and a comparably slow time scale for active order arrivals. The proof of convergence of volume densities was carried out in three steps: We first showed that the expected LOB dynamics resembles a numerical scheme of hyperbolic PDEs plus noise, provided the random price dynamics is replaced by its deterministic limit. Subsequently, we showed that the impact of the noise in the price process on the volume dynamics vanishes in the limit. Finally, we used a law of large numbers for triangular martingale difference arrays to prove that the LOB model converges to its expected value. Our model allows for approximation of key order book statistics such as expected price increments, expected standing volumes at future times and expected time to fill.

Several questions remain open. First, it would be interesting to establish a CLT or, more generally, a diffusion approximation for LOBs. Based on the idea of having different time scales for active and passive order arrivals, Bayer, Horst and Qiu \cite{BayerHorstQiu} have recently established a first SPDE scaling limit for order books. However, they assume that cancellations are subject to additive (rather than multiplicative) noise so volumes may become negative. Second, it would be interesting to solve models of optimal portfolio liquidation based on our limiting model. For this end, the model should ideally be calibrated to market data.

\begin{appendix}
\section{A Law of Large Numbers for Banach-Space-Valued TMDAs}

Here we prove a law of large numbers for triangular martingale difference arrays taking values in  real separable p-uniformly smooth Banach spaces. A Banach space $E$ is called $p$-uniformly smooth, where $p\in(1,2]$, if
 \begin{equation*}
        \rho_{E}(\tau)=\sup \left\{\|\frac{x+y}{2}\|+\|\frac{x-y}{2}\|-1:\|x\|=1,\|y\|=\tau \right\} = {\cal O}(\tau^p).
\end{equation*}
All Hilbert spaces are 2-uniformly smooth by the parallelogram identity. The spaces $C$, $l^{1}$ and $L^{1}$ are not uniformly smooth.


\begin{definition} \label{def-TMDA}
A family of random variables $y^n_k$, $k = 1, ..., n$, $n=1,2, ...$ defined on some probability space $(\Omega,{\cal F}, \mathbb{P})$ is called a triangular martingale difference array (TMDA) with respect to a family $\{ {\cal F}^n \}_{n =1,2, ...}$  of filtrations, ${\cal F}^n = \{{\cal F}^n_k\}_{k=0}^n,$ if for all $n=1,2, ...$ the sequence $y^n_1, ..., y^n_n$ is an ${\cal F}^n$-martingale difference sequence (MDS), i.e.
\[
	\mathbb{E}\left[y^n_k | {\cal F}_{k-1}^n\right] = 0.
\]
\end{definition}


 If $\{y^n_k\}$ is a TMDA, then for all $n=1,2, ...$ one has
\[
	\mathbb{E}[\sum_{j=1}^k y^n_k | {\cal F}_{k-1}^n] = y^n_k,
\]
that is, partial sums are martingale. For such martingales, Pisier \cite{Pisier} proved the following moment estimate.

\begin{lemma}\label{lemma-Pisier}
Let $E$ be a real separable $p$-uniformly smooth Banach space $(1 \leq p \leq 2)$. Then, for all $r \geq 1$ there exists a constant $C>0$ such that for all martingales
\[
	\left\{ \left( \sum_{i=1}^n X_i, {\cal G}_n \right) \right\}_{n \geq 1}
\]
with values in $E$, we have
\[
	\mathbb{E} \left[ \sup_{n \geq 1} \left| {  X_n} 
	\right|\right ]^r  \leq C \mathbb{E} \left( \sum_{n=1}^\infty |{X_i - X_{i-1}}|^p \right)^{r/p}.
\]
\end{lemma}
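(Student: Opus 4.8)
The plan is to deduce the stated maximal inequality from two essentially independent ingredients: the \emph{martingale type $p$} property of $E$ (a pure moment inequality) and a Burkholder--Davis--Gundy--type upgrade that converts the moment bound into a bound on the $L^r$-norm of the maximal function. Throughout I would write $M_n$ for the partial sums of the martingale, $d_i := M_i - M_{i-1}$ for its differences (so $\|d_i\|$ is the $\|X_i - X_{i-1}\|$ of the statement), $M^* := \sup_n \|M_n\|$, and $S := \left(\sum_i \|d_i\|^p\right)^{1/p}$. My actual target is the strong maximal bound $\mathbb{E}[(M^*)^r] \leq C\,\mathbb{E}[S^r]$; note that by Jensen ($r \geq 1$) this already dominates the literally printed left-hand side $\left(\mathbb{E}[M^*]\right)^r$, so it suffices to prove the maximal version.

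First I would translate the geometric hypothesis $\rho_E(\tau) = \mathcal{O}(\tau^p)$ into the pointwise power-$p$ smoothness inequality: there is a constant $C_E$ with
\[
	\tfrac{1}{2}\left( \|x+y\|^p + \|x-y\|^p \right) \leq \|x\|^p + C_E \|y\|^p \qquad (x,y \in E).
\]
Passing from the modulus of smoothness, which controls $\tfrac12(\|x+y\|+\|x-y\|)$ only to first order, to this power-$p$ form is a routine homogeneity-and-convexity manipulation (for $p=2$ in a Hilbert space it is exactly the parallelogram identity, with $C_E = 1$).

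Second I would establish martingale type $p$ by a symmetrization argument. Conditioning on $\mathcal{G}_{n-1}$, let $d_n'$ be a conditionally independent copy of $d_n$; since $\mathbb{E}[d_n' \mid \mathcal{G}_{n-1}] = 0$, Jensen's inequality gives $\mathbb{E}[\|M_{n-1}+d_n\|^p \mid \mathcal{G}_{n-1}] \leq \mathbb{E}[\|M_{n-1}+d_n - d_n'\|^p \mid \mathcal{G}_{n-1}]$. The symmetrized increment $d_n - d_n'$ is conditionally symmetric, so the power-$p$ smoothness inequality applies directly, and together with $\mathbb{E}\|d_n - d_n'\|^p \leq 2^p\,\mathbb{E}\|d_n\|^p$ it yields the one-step estimate
\[
	\mathbb{E}\left[\|M_n\|^p \mid \mathcal{G}_{n-1}\right] \leq \|M_{n-1}\|^p + C'\,\mathbb{E}\left[\|d_n\|^p \mid \mathcal{G}_{n-1}\right].
\]
Taking expectations and summing over $n$ gives $\sup_N \mathbb{E}\|M_N\|^p \leq \|M_0\|^p + C'\sum_i \mathbb{E}\|d_i\|^p$, which is the martingale type $p$ inequality.

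The maximal upgrade is the step I expect to be the main obstacle. For $r=p$ it is immediate: $\|M_n\|$ is a nonnegative submartingale, so Doob's $L^p$ inequality combined with the type-$p$ bound gives $\mathbb{E}[(M^*)^p] \leq C\,\mathbb{E}[S^p]$. For general $r \geq 1$ one cannot merely rescale exponents, and the clean route is a good-$\lambda$ inequality of Burkholder--Gundy type: introduce stopping times $\mu = \inf\{n : \|M_n\| > \lambda\}$, $\nu = \inf\{n : \|M_n\| > 2\lambda\}$ and $\sigma = \inf\{n : S_{n+1} > \delta\lambda\}$, apply the type-$p$ moment bound to the martingale stopped between $\mu \wedge \sigma$ and $\nu \wedge \sigma$, and use Chebyshev to obtain $\mathbb{P}[M^* > 2\lambda,\ S \leq \delta\lambda] \leq C\delta^p\,\mathbb{P}[M^* > \lambda]$. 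Integrating against $r\lambda^{r-1}\,d\lambda$ and choosing $\delta$ small then yields $\mathbb{E}[(M^*)^r] \leq C_r\,\mathbb{E}[S^r]$ for every $r \geq 1$. The delicate points are the bookkeeping of the single, possibly large, increment straddling the stopping time (controllable precisely because increments enter only through their $p$-th power with $p \leq 2$) and the measurability needed for optional stopping; an alternative packaging is to feed the supermartingale decomposition $\|M_n\|^p - C'\sum_{i\leq n}\mathbb{E}[\|d_i\|^p \mid \mathcal{G}_{i-1}]$ into Lenglart's domination inequality, at the cost of then having to compare the predictable compensator with the optional square function $S^p$.
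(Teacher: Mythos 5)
The paper offers no proof of this lemma at all --- it is quoted from Pisier --- so your attempt has to stand on its own; and your reading of the (garbled) statement as the maximal inequality $\mathbb{E}\left[(M^*)^r\right] \leq C\,\mathbb{E}\left[S^r\right]$, $S=\bigl(\sum_i \|d_i\|^p\bigr)^{1/p}$, is the right one. Your first two steps are sound: the passage from $\rho_E(\tau)=\mathcal{O}(\tau^p)$ to the power-type inequality $\tfrac12\left(\|x+y\|^p+\|x-y\|^p\right)\leq\|x\|^p+C_E\|y\|^p$ is indeed a homogenization argument (using $\tfrac{a^p+b^p}{2}\leq\bigl(\tfrac{a+b}{2}\bigr)^p+c_p|a-b|^p$ together with $\bigl|\,\|x+y\|-\|x-y\|\,\bigr|\leq 2\|y\|$), and the symmetrization proof of martingale type $p$ is correct: conditional Jensen plus conditional symmetry of $d_n-d_n'$ makes the power-type inequality applicable without ever differentiating the norm.

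The genuine gap is in step 3, and precisely in the range $r>p$. The good-$\lambda$ argument as you describe it does not close. For the truncated differences $\tilde d_n = d_n\mathbf{1}_{\{\mu<n\leq\nu\wedge\sigma\}}$ to form a martingale difference sequence, the event $\{n\leq\sigma\}$ must be $\mathcal{G}_{n-1}$-measurable, which forces $\sigma=\inf\{n:\sum_{i\leq n-1}\|d_i\|^p>(\delta\lambda)^p\}$ or some other predictable variant. But then $\sum_n\|\tilde d_n\|^p$ contains the single increment $d_\sigma$ straddling the stopping time, and $\mathbb{E}\left[\|d_\sigma\|^p\mathbf{1}_{\{\mu<\sigma\leq\nu\}}\right]$ is in no way bounded by $C(\delta\lambda)^p\,\mathbb{P}\left[M^*>\lambda\right]$: let a real-valued martingale creep above $\lambda$ with tiny steps (small but positive probability) and then jump by $\pm K\lambda$ with $K$ arbitrarily large; the offending term scales like $K^p$, so the fact that increments enter only through their $p$-th power --- your stated reason the step is ``controllable'' --- does not save it. This is exactly why the discrete-time Burkholder--Gundy proofs insert an extra device, typically the Davis decomposition (compensate the large jumps $d_n\mathbf{1}_{\{\|d_n\|>2d^*_{n-1}\}}$, whose contribution is controlled by $d^*\leq S$, and run the good-$\lambda$ argument only on the predictably bounded part); that idea is absent from your sketch. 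Your fallback via Lenglart does not fill the hole either: Lenglart's domination inequality applies only to exponents $q=r/p<1$, i.e.\ $r<p$ (after which one must still compare the predictable compensator with $S^p$, say via Garsia--Neveu), and Doob covers $r=p$; the range $r>p$, which the lemma asserts, is left unproven by both of your packagings. So the architecture is reasonable, but the maximal upgrade needs the missing Davis-type (or equivalent extrapolation) ingredient to be a proof.
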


 This lemma allows us to prove the following law of large numbers for TMDAs.

\begin{theorem} \label{thm-appendix}
Let $y^n_k$, $k = 1, ..., n$, $n=1,2, ...$ be a TMDA taking values in a real separable $p$-uniformly smooth Banach space $E$ for $1 \leq p \leq 2$ such that
\[
	\sup_{n,k} \mathbb{E} |y^n_k|^p < \infty.
\]
Then, for all $\beta > 0$ such that $\beta \cdot p > 1$ one has for all $\epsilon > 0$  that
\[
	\lim_{n \to \infty} \mathbb{P}\left[ \sup_{1 \leq m \leq n} | \sum_{k=1}^m y^n_k | \geq \epsilon \cdot n^\beta  \right] = 0.
\]
\end{theorem}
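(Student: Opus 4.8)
The plan is to derive the result from the maximal inequality of Lemma~\ref{lemma-Pisier} together with a single application of Markov's inequality, the key being to choose the moment exponent $r$ in Pisier's estimate equal to $p$. This choice is exactly what makes the resulting bound linear in $n$, and hence strong enough to defeat the normalisation $n^{\beta}$.

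First I would fix $n$ and set $S^n_m := \sum_{k=1}^m y^n_k$ for $1\le m\le n$. Since the $y^n_k$ form an $\mathcal{F}^n$-martingale difference sequence, the partial sums $(S^n_m,\mathcal{F}^n_m)_{1\le m\le n}$ are an $E$-valued martingale with increments $y^n_k$. Because $E$ is real separable and $p$-uniformly smooth and $p\ge 1$, the exponent $r=p\ge 1$ is admissible in Lemma~\ref{lemma-Pisier}, so I would invoke it to obtain
\[
    \mathbb{E}\Big[\sup_{1\le m\le n}|S^n_m|^p\Big] \le C\,\mathbb{E}\Big[\Big(\sum_{k=1}^n |y^n_k|^p\Big)^{r/p}\Big] = C\sum_{k=1}^n \mathbb{E}|y^n_k|^p,
\]
where the equality uses $r/p=1$ to collapse the outer power and pull the finite sum out of the expectation by linearity. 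The uniform bound $\sup_{n,k}\mathbb{E}|y^n_k|^p=:\Lambda<\infty$ then yields $\mathbb{E}[\sup_{1\le m\le n}|S^n_m|^p]\le C\Lambda\,n$.

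Next I would apply Markov's inequality to the nonnegative random variable $\sup_{1\le m\le n}|S^n_m|^p$:
\[
    \mathbb{P}\Big[\sup_{1\le m\le n}|S^n_m|\ge \epsilon n^\beta\Big]
    = \mathbb{P}\Big[\sup_{1\le m\le n}|S^n_m|^p\ge \epsilon^p n^{\beta p}\Big]
    \le \frac{C\Lambda\,n}{\epsilon^p n^{\beta p}} = \frac{C\Lambda}{\epsilon^p}\,n^{1-\beta p}.
\]
Since $\beta p>1$ by hypothesis, the exponent $1-\beta p$ is strictly negative, so the right-hand side tends to $0$ as $n\to\infty$, which is precisely the assertion.

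I do not expect a genuine obstacle here; the single point that must be handled with care is the choice of exponent. Taking any $r>p$ would leave $\mathbb{E}[(\sum_{k=1}^n|y^n_k|^p)^{r/p}]$ with outer power $r/p>1$, and the superadditivity of $t\mapsto t^{r/p}$ then obstructs a clean $O(n)$ bound without extra hypotheses (such as independence or uniformly controlled higher moments); matching $r=p$ to the assumed uniform $p$-th moment bound is exactly what linearises the estimate and closes the argument. Measurability of the supremum is immediate, since for each fixed $n$ it is a maximum over finitely many indices.
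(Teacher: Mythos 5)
Your proof is correct and follows essentially the same route as the paper's: Pisier's maximal inequality (Lemma~\ref{lemma-Pisier}) combined with Markov's inequality. The only difference is that you fix the exponent $r=p$ from the outset, whereas the paper carries a general exponent $q$ through the argument; your choice is in fact the cleanest specialization, since (as you correctly observe) the bound $\mathbb{E}\bigl[\bigl(\sum_{k=1}^n|y^n_k|^p\bigr)^{q/p}\bigr]={\cal O}(n^{q/p})$ needs no further justification when $q=p$, while for $q>p$ it would require extra hypotheses.
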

\begin{proof}
By Markov's inequality
\[
	\mathbb{P}\left[ \sup_{1 \leq m \leq n} | \sum_{k=1}^m y^n_k | \geq \epsilon \cdot n^\beta  \right] \leq
	\frac{1}{\epsilon n^{q \cdot \beta}} \mathbb{E}\left[ \sup_{1 \leq m \leq n} | \sum_{k=1}^m y^n_k | \right]^q
\]
for all $q \geq 1$. Thus, it follows from Lemma \ref{lemma-Pisier} that
\begin{eqnarray*}
	\mathbb{P}\left[ \sup_{1 \leq m \leq n} | \sum_{k=1}^m y^n_k | \geq \epsilon \cdot n^\beta  \right] & \leq & C n^{-\beta \cdot q} \mathbb{E} \left[ \sum_{k=1}^n |y^n_k|^p \right]^{q/p}  \\
	& \leq & C n^{-\beta \cdot q + \frac{q}{p}}
\end{eqnarray*}
for a generic constant $C > 0$ since the random variables $y^n_k$ have a uniformly bounded $p$:th moment. Hence, the assertion follows as soon as $-\beta \cdot q + \frac{q}{p} < 0$, which holds for all $q > 0$ as $\beta \cdot p > 1$.
\end{proof}


 As an immediate corollary from the preceding theorem one obtains the following law of large numbers for TMDAs.

\begin{corollary}\label{corollary-appendix}
{Let $y^n_k$, $k = 1, ..., n$, $n=1,2, ...$ be a TMDA taking values in a real separable $2$-uniformly smooth Banach space $E$ such that
\[
	\sup_{n,k} \Big( n^{2 \beta} \mathbb{E} |y^n_k|^2 \Big) < \infty
\]
for some $\beta > \frac{1}{2}$. Then
\[
	\lim_{n \to \infty}  \sup_{1 \leq m \leq n} | \sum_{k=1}^m y^n_k | = 0 \quad \mbox{in probability.}
\]
}
\end{corollary}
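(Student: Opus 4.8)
The plan is to deduce the statement directly from Theorem \ref{thm-appendix} by a deterministic rescaling that turns the decaying second-moment bound into the uniform moment bound required by that theorem. Since $E$ is $2$-uniformly smooth, I work throughout with the exponent $p=2$.

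First I would introduce the rescaled array $\tilde{y}^n_k := n^\beta y^n_k$ for $k=1,\ldots,n$ and $n=1,2,\ldots$. Because each factor $n^\beta$ is a deterministic constant, multiplying by it preserves the defining identity $\mathbb{E}[\tilde{y}^n_k \mid \mathcal{F}^n_{k-1}] = n^\beta \,\mathbb{E}[y^n_k \mid \mathcal{F}^n_{k-1}] = 0$, so $\{\tilde{y}^n_k\}$ is again a TMDA with respect to the same family of filtrations $\{\mathcal{F}^n\}$. Moreover, the hypothesis $\sup_{n,k}\big(n^{2\beta}\,\mathbb{E}|y^n_k|^2\big) < \infty$ is precisely the statement $\sup_{n,k}\mathbb{E}|\tilde{y}^n_k|^2 < \infty$, so the rescaled array satisfies the moment assumption of Theorem \ref{thm-appendix} with $p=2$.

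Next I would apply Theorem \ref{thm-appendix} to $\{\tilde{y}^n_k\}$, taking the free exponent in that theorem to be the same value $\beta$. This is admissible, since the theorem requires the exponent times $p$ to exceed $1$, and here $\beta\cdot 2 > 1$ exactly because $\beta > \tfrac{1}{2}$. The conclusion then reads, for every $\epsilon > 0$,
\[
\lim_{n \to \infty} \mathbb{P}\left[ \sup_{1 \leq m \leq n} \Big| \sum_{k=1}^m \tilde{y}^n_k \Big| \geq \epsilon \cdot n^\beta \right] = 0.
\]

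Finally I would undo the rescaling. Since $\sum_{k=1}^m \tilde{y}^n_k = n^\beta \sum_{k=1}^m y^n_k$, dividing the threshold in the event by $n^\beta$ shows that the probability above equals $\mathbb{P}\big[\sup_{1 \leq m \leq n} |\sum_{k=1}^m y^n_k| \geq \epsilon\big]$. Hence this probability tends to $0$ for every $\epsilon > 0$, which is exactly the assertion that $\sup_{1 \leq m \leq n} |\sum_{k=1}^m y^n_k| \to 0$ in probability. There is no genuine obstacle here; the only point requiring care is the bookkeeping of exponents, namely matching the decay rate $\beta$ of the second moments against the threshold exponent in Theorem \ref{thm-appendix} so that the factors of $n^\beta$ cancel exactly and leave a constant threshold $\epsilon$.
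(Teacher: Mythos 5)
Your proposal is correct and follows essentially the same route as the paper's own proof: rescale to $\hat{y}^n_k := n^\beta y^n_k$, apply Theorem \ref{thm-appendix} with exponent $\beta$ (so $\beta \cdot p = 2\beta > 1$), and observe that the threshold $\epsilon \cdot n^\beta$ for the rescaled sums is exactly the threshold $\epsilon$ for the original sums. Your write-up is simply a more explicit version of the paper's argument, spelling out the verification that the rescaled array is a TMDA satisfying the uniform moment bound.
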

\begin{proof}
We apply Theorem \ref{thm-appendix} to the TMDA
\[
	\hat{y}^n_k := n^\beta y^n_k.
\]
Then
\[
	\lim_{n \to \infty} \mathbb{P}\left[ \sup_{1 \leq m \leq n} | \sum_{k=1}^m \hat{y}^n_k | \geq \epsilon \cdot n^\beta  \right] = 0.
\]
{ Hence the assertion follows from:}
\[
	\mathbb{P}\left[ \sup_{1 \leq m \leq n} | \sum_{k=1}^m \hat{y}^n_k | \geq \epsilon \cdot n^\beta  \right] =
	\mathbb{P}\left[ \sup_{1 \leq m \leq n} | \sum_{k=1}^m y^n_k | \geq \epsilon \right] = 0.
\]
\end{proof}


\section{Properties of volume density functions}

In this appendix, we prove some properties of the volume density functions. In particular, we show that the sequences $\{\eta^{(n)}_{v,k}\}$ take values in $L^2$ almost surely. We first use an induction argument to establish a useful representation of the volume density function.

\begin{lemma} \label{lemma-B1}
The buy side volume density function $\eta^{(n)}_{v_b}$ satisfies:
\begin{small}
\begin{equation} \label{eq:indFormulaEtavbk}
\begin{split}
	 {\eta}_{v_{b},k}^{(n)}=&\left(\left(T_{+}^{(n)}\right)^{\sum_{i=0}^{k-1}\mathbf{1}_{i}^{(n),A}}\circ\left(T_{-}^{(n)}\right)^{\sum_{i=0}^{k-1}\mathbf{1}_{i}^{(n),B}} \right)\left(v_{b,0}^{(n)}\right)\nonumber\\
&+ \frac{\Delta v^{(n)}}{{\Delta x^{(n)}}} \cdot \sum_{i=0}^{k-1}\left\{\left(\left(T_{+}^{(n)}\right)^{\sum_{j=i+1}^{k-1}\mathbf{1}_{j}^{(n),A}}\circ\left(T_{-}^{(n)}\right)^{\sum_{j=i+1}^{k-1}\mathbf{1}_{j}^{(n),B}} \right)\left(-M_{v,i}^{(n),C}{\eta}_{v_b,i}^{(n)}\mathbf{1}_{i}^{(n),C}+M_{v,i}^{(n),D}\mathbf{1}_{i}^{(n),D}\right)\right\}.
\end{split}
\end{equation}
\end{small}
\end{lemma}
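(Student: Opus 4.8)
The plan is to argue by induction on $k$, exploiting two structural facts about the dynamics read off from the matrix $\mathbb{M}^{(n)}_k$ in (\ref{D}): at every step exactly one of the eight events occurs, so the indicators $\mathbf{1}_i^{(n),I}$ are mutually exclusive and sum to one, and the buy-side density $\eta^{(n)}_{v_b}$ is altered only by events $\mathbf{A},\mathbf{B},\mathbf{C},\mathbf{D}$. First I would record the one-step recursion: if event $\mathbf{A}$ (resp. $\mathbf{B}$) occurs at step $k$ then $\eta_{v_b,k+1}^{(n)}=T_+^{(n)}(\eta_{v_b,k}^{(n)})$ (resp. $T_-^{(n)}(\eta_{v_b,k}^{(n)})$); if event $\mathbf{C}$ (resp. $\mathbf{D}$) occurs then $\eta_{v_b,k+1}^{(n)}=\eta_{v_b,k}^{(n)}-\frac{\Delta v^{(n)}}{\Delta x^{(n)}}M_{v,k}^{(n),C}\eta_{v_b,k}^{(n)}$ (resp. $+\frac{\Delta v^{(n)}}{\Delta x^{(n)}}M_{v,k}^{(n),D}$); and if any of $\mathbf{E},\dots,\mathbf{H}$ occurs then $\eta_{v_b,k+1}^{(n)}=\eta_{v_b,k}^{(n)}$. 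The algebraic engine driving the induction is that the shift operators are linear and form a commuting one-parameter group, with $(T_\pm^{(n)})^a\circ T_\pm^{(n)}=(T_\pm^{(n)})^{a+1}$ and $T_+^{(n)}\circ T_-^{(n)}=\mathrm{Id}$, so any shift applied at step $k$ can be pulled through the finite sum over $i$ and through each product $M_{v,i}^{(n),C}\eta_{v_b,i}^{(n)}$.

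The base case $k=0$ is immediate: all index sums are empty, the shift operators reduce to the identity, the summation is void, and the right-hand side collapses to $v_{b,0}^{(n)}=\eta_{v_b,0}^{(n)}$. For the inductive step I would assume the formula (\ref{eq:indFormulaEtavbk}) for $\eta_{v_b,k}^{(n)}$ and case-split on the event at step $k$. If event $\mathbf{A}$ occurs, applying $T_+^{(n)}$ raises every $T_+^{(n)}$-exponent by one; since $\mathbf{1}_k^{(n),A}=1$ and $\mathbf{1}_k^{(n),B}=0$, this is exactly the passage from $\sum_{i=0}^{k-1}$ to $\sum_{i=0}^{k}$ (and from $\sum_{j=i+1}^{k-1}$ to $\sum_{j=i+1}^{k}$) in the exponents, while the would-be new $i=k$ summand vanishes because $\mathbf{1}_k^{(n),C}=\mathbf{1}_k^{(n),D}=0$; event $\mathbf{B}$ is symmetric with $T_-^{(n)}$. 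If event $\mathbf{C}$ or $\mathbf{D}$ occurs, no shift is applied, yet the exponent sums still advance to $k$ automatically because the new indicators $\mathbf{1}_k^{(n),A},\mathbf{1}_k^{(n),B}$ are zero, and the additive volume change produces precisely the $i=k$ summand, whose shift operators are identities since $\sum_{j=k+1}^{k}$ is empty. Finally, if one of $\mathbf{E},\dots,\mathbf{H}$ occurs, all four relevant indicators vanish, so both the first term and every summand are left unchanged, matching $\eta_{v_b,k+1}^{(n)}=\eta_{v_b,k}^{(n)}$.

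The hard part will be the exponent bookkeeping in the $\mathbf{A}/\mathbf{B}$ cases: one must check that incrementing a single shift exponent is simultaneously consistent with advancing the upper limit of the first-term exponents and with advancing the upper limit inside every summand, and that a price move correctly transports all previously accumulated placements and cancelations through the shift. Mutual exclusivity of the event indicators is what renders all of these updates compatible, since exactly one indicator is switched on at step $k$ and therefore exactly one of the bookkeeping rules fires, matching the unique increment dictated by the one-step recursion.
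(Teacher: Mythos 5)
Your proposal is correct and follows essentially the same route as the paper: induction on $k$, using the one-step recursion from the matrix $\mathbb{M}^{(n)}_k$ together with linearity of the translation operators to pull the step-$k$ shift through the accumulated placement/cancelation terms. The paper merely packages your case split compactly by writing the one-step update as $\bigl(T_{+}^{(n)}\bigr)^{\mathbf{1}_{k}^{(n),A}}\circ\bigl(T_{-}^{(n)}\bigr)^{\mathbf{1}_{k}^{(n),B}}$ applied to $\eta_{v_b,k}^{(n)}$ plus the $\mathbf{C}/\mathbf{D}$ increments, which is exactly your exponent bookkeeping in disguise.
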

\begin{proof}
For $k=0$, the equation holds by definition. Let us therefore assume it holds for all $k \leq p$. For $k=p+1$ one then obtains
\begin{small}
\begin{alignat}{2}
{\eta}_{v_b,p+1}^{(n)}&={\eta}_{v_b,p}^{(n)}+\left(T_{+}^{(n)}({\eta}_{v_b,p}^{(n)})-{\eta}_{v_b,p}^{(n)}\right)\mathbf{1}_{p}^{(n),A}+\left(T_{-}^{(n)}({\eta}_{v_b,p}^{(n)})-{\eta}_{v_b,p}^{(n)}\right)\mathbf{1}_{p}^{(n),B}\nonumber\\
&\qquad\quad -\frac{\Delta v^{(n)}}{{\Delta x^{(n)}}} \cdot M_{v,p}^{(n),C}{\eta}_{v_b,p}^{(n)}\mathbf{1}_{p}^{(n),C}+ \frac{\Delta v^{(n)}}{{\Delta x^{(n)}}} \cdot M_{v,p}^{(n),D}\mathbf{1}_{p}^{(n),D}\nonumber\\
&=\left(\left(T_{+}^{(n)}\right)^{\mathbf{1}_{p}^{(n),A}}\circ \left(T_{-}^{(n)}\right)^{\mathbf{1}_{p}^{B}}\right)\left({\eta}_{v_b,p}^{(n)}\right)-M_{p}^{(n),C}{\eta}_{v_b,p}^{(n)}\mathbf{1}_{p}^{(n),C}+M_{p}^{(n),D}\mathbf{1}_{p}^{(n),D}\nonumber\\
&=\left(\left(T_{+}^{(n)}\right)^{\sum_{i=0}^{p}\mathbf{1}_{i}^{(n),A}}\circ\left(T_{-}^{(n)}\right)^{\sum_{i=0}^{p}\mathbf{1}_{i}^{(n),B}} \right)\left(v_{b,0}^{(n)}\right)\nonumber\\
&+\frac{\Delta v^{(n)}}{{\Delta x^{(n)}}} \cdot \sum_{i=0}^{p}\left\{\left(\left(T_{+}^{(n)}\right)^{\sum_{j=i+1}^{p}\mathbf{1}_{j}^{(n),A}}\circ\left(T_{-}^{(n)}\right)^{\sum_{j=i+1}^{p}\mathbf{1}_{j}^{(n),B}} \right)\left(-M_{v,i}^{(n),C}{\eta}_{v_b,i}^{(n)}\mathbf{1}_{i}^{(n),C}+M_{v,i}^{(n),D}\mathbf{1}_{i}^{(n),D}\right)\right\}.
\nonumber
\end{alignat}
\end{small}
\end{proof}

\subsection{Boundedness of volume densities}

Using the isometry property of the translation operator we deduce that the $L^2$ norm of the volume density function can be estimated from above by considering a model with only passive order placements. 
{ In a similar way we can estimate the expected order book hight at any given price tick. More precisely, we have the following result.} 

\vspace{2mm}

\begin{lemma} \label{lemma-bounded}
The expected $L^2$-norm of the volume density function is uniformly bounded:
\begin{equation} \label{expected-norm}
	\sup_{n \in \mathbb{N}, \, k=0, ... ,\lfloor T/\Delta t^{(n)} \rfloor} \mathbb{E} \|\eta^{(n)}_{v_b,k} \|^2_{L^2} \leq C
\end{equation}
for some constant $C<\infty$. { Likewise, the expected order book hight is uniformly bounded, i.e. if we put $\eta^{(n)}_{v_b,k} = (\eta^{(n),j}_{v_b,k})_{j \in \mathbb{Z}}$, then :
\begin{equation} \label{expected-norm2}
	\sup_{j \in \mathbb{Z}, \, n \in \mathbb{N}, \, k=0, ... ,\lfloor T/\Delta t^{(n)} \rfloor}  \mathbb{E} |\eta^{(n),j}_{v_b,k} |^2 \leq C.
\end{equation}}
%
%
%
\end{lemma}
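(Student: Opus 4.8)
The plan is to reduce the entire statement to a single, simpler object: the buy-side book run with cancelations switched off (this is the ``model with only passive order placements'' alluded to in the text). Since a cancelation only multiplies the height at one tick by a factor $1-\frac{\Delta v^{(n)}}{\Delta x^{(n)}}\omega^C_k$, which lies in $(0,1)$ for $n$ large because $\frac{\Delta v^{(n)}}{\Delta x^{(n)}}=\Delta p^{(n)}\to0$ (Assumption \ref{assumption-scaling}) and $\omega^C_k<1$, it can only decrease the standing volume pointwise, whereas the shift operators $T^{(n)}_\pm$ and the placements $\frac{\Delta v^{(n)}}{\Delta x^{(n)}}M^{(n),D}_{v,k}$ act identically on both books. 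Hence, defining $\zeta^{(n)}_k$ by the recursion of Lemma \ref{lemma-B1} with the $M^{(n),C}_{v,i}$-terms deleted, a one-line induction (using that shifts are order-preserving) gives $0\le\eta^{(n)}_{v_b,k}\le\zeta^{(n)}_k$ pointwise. It therefore suffices to bound the moments of $\zeta^{(n)}_k$, which evolves only by norm-preserving shifts and by non-negative placements. By symmetry (replace $(\mathbf C,\mathbf D)$ by $(\mathbf G,\mathbf H)$) the sell side is identical, so I treat the buy side only.

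For (\ref{expected-norm}) I would work with $N^{(n)}_k:=\mathbb{E}\|\zeta^{(n)}_k\|_{L^2}^2$. Because $T^{(n)}_\pm$ are $L^2$-isometries, only placement steps change the norm, and on such a step $\|\zeta^{(n)}_{k+1}\|_{L^2}^2=\|\zeta^{(n)}_k\|_{L^2}^2+2\Delta v^{(n)}\omega^D_k\,\zeta^{(n),j^{\ast}}_k+\frac{(\Delta v^{(n)})^2}{\Delta x^{(n)}}(\omega^D_k)^2$, where $j^{\ast}$ is the freshly drawn placement tick. Conditioning on the past, $\omega^D_k\perp\pi^D_k$ and the placement tick has law $\approx f^D$, so Cauchy--Schwarz over ticks gives $\mathbb{E}[\mathbf{1}^{(n),D}_k\zeta^{(n),j^{\ast}}_k\mid\mathcal F_{k-1}]\le\|f^D\|_{L^2}\,\|\zeta^{(n)}_k\|_{L^2}(1+o(1))$, yielding $N^{(n)}_{k+1}\le N^{(n)}_k+C\,\Delta v^{(n)}\sqrt{N^{(n)}_k}+C\,\frac{(\Delta v^{(n)})^2}{\Delta x^{(n)}}$. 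Using $\sqrt N\le\tfrac12(1+N)$ turns this into a linear recursion $N^{(n)}_{k+1}\le(1+b_n)N^{(n)}_k+c_n$, and Assumption \ref{assumption-scaling} makes both $b_n\cdot\tfrac{T}{\Delta t^{(n)}}=O\!\big(\tfrac{\Delta v^{(n)}}{\Delta t^{(n)}}\big)=O(1)$ and $\tfrac{c_n}{b_n}=1+O(\Delta p^{(n)})$, so discrete Gronwall over $k\le\lfloor T/\Delta t^{(n)}\rfloor$ gives $\sup_{n,k}N^{(n)}_k\le C$. The crucial point is that the ``quadratic-variation'' term $\frac{(\Delta v^{(n)})^2}{\Delta x^{(n)}}$ summed over $T/\Delta t^{(n)}$ steps is $O(\Delta p^{(n)})\to0$; note this argument never invokes compactness of the support.

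For (\ref{expected-norm2}) the natural device is a change of coordinates. A price change reshuffles which relative tick carries which volume, so the height at a fixed relative tick does not obey a closed recursion; I would therefore pass to \emph{absolute} price coordinates, in which a price change is a pure relabeling and leaves all standing volume in place. There the height $\widetilde\zeta^{(n),m}_k$ at absolute tick $m$ accumulates only placement increments, $\widetilde\zeta^{(n),m}_k=\widetilde\zeta^{(n),m}_0+\frac{\Delta v^{(n)}}{\Delta x^{(n)}}\sum_{i<k}\omega^D_i\,\mathbf{1}^{(n),D}_i\,\mathbf{1}_{\{\pi^D_i\ \text{hits}\ m\}}$. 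Conditioning on the whole sequence of event \emph{types} fixes the bid trajectory $\beta_k:=B^{(n)}_k/\Delta x^{(n)}$, hence the correspondence between relative and absolute ticks, and renders the placement increments independent; a direct first/second-moment computation then gives conditional mean $O\!\big(\tfrac{\Delta v^{(n)}}{\Delta t^{(n)}}\big)=O(1)$ and conditional variance $O(\Delta p^{(n)})\to0$, uniformly in $m$ and in the type sequence. Taking the outer expectation and using $\eta^{(n),j}_{v_b,k}\le\zeta^{(n),j}_k=\widetilde\zeta^{(n),\beta_k-j}_k$ delivers (\ref{expected-norm2}).

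The main obstacle, and the reason a naive approach fails, is exactly the entanglement of volumes with the random price shifts: one cannot close a recursion for the maximal height, since that overcounts (every placement bumps \emph{some} tick, so the maximum would grow at rate $O(1)$ per event rather than staying bounded), and one cannot naively replace the random absolute tick $\beta_k-j$ by a supremum over $m$. Both difficulties are resolved by choosing the right frame — $L^2$-isometry of the shifts for (\ref{expected-norm}), absolute coordinates plus conditioning on the type sequence for (\ref{expected-norm2}) — after which everything hinges on checking that the product of scaling parameters in Assumption \ref{assumption-scaling} forces the variance/quadratic-variation contributions to vanish while the accumulated drift remains $O(1)$.
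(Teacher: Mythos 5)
Your proof is correct, and for the first assertion (\ref{expected-norm}) it takes a genuinely different route from the paper's. Both arguments begin with the same reduction to the cancelation-free, placements-only book (the paper simply asserts ``it is enough to consider a model with only order placements''; your monotone coupling, valid for $n$ large because $\frac{\Delta v^{(n)}}{\Delta x^{(n)}}\omega^C_k<1$, is the justification the paper leaves implicit). From there the paper passes to \emph{absolute} coordinates for \emph{both} assertions: it writes $\mathbb{E}\|\eta^{(n)}_{v_b,k}\|^2_{L^2}$ as $\frac{(\Delta v^{(n)})^2}{\Delta x^{(n)}}\,\mathbb{E}\sum_{j}\bigl(\sum_{i\le k}a^{(n)}_{i,j}\bigr)^2$ with $a^{(n)}_{i,j}$ the placement indicators, expands into diagonal and cross terms using independence of the placement variables through time, and exploits the compact support of $\pi^D$ so that the inner $j$-sums run over at most $O(1/\Delta x^{(n)})$ non-zero terms; the height bound (\ref{expected-norm2}) is then the same computation without the $j$-sum. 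You instead stay in \emph{relative} coordinates, use the $L^2$-isometry of the shift operators so that only placement steps move the norm, bound the cross term by Cauchy--Schwarz, $\langle \zeta^{(n)}_k,f^D\rangle\le\|f^D\|_{L^2}\|\zeta^{(n)}_k\|_{L^2}$ (this is in fact exact, no $(1+o(1))$ correction is needed since $\zeta^{(n)}_k$ is constant on tick intervals), and close with a discrete Gronwall recursion. Your route buys two things: it does not use compact support of $f^D$ at all for the $L^2$ bound, and it avoids decoupling the placements from the random price path --- a point where the paper is loose, since its cross-term factorization $\mathbb{E}[a^{(n)}_{i,j}]\,\mathbb{E}[a^{(n)}_{i',j}]$, with random indicators $\mathbf{1}_{\{|j-\eta^{(n)}_{\gamma,i}|\le 1\}}$ sitting outside expectations, really requires conditioning on the event-type sequence, exactly the device you make explicit in your treatment of (\ref{expected-norm2}). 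What the paper's route buys is economy: one expansion serves both claims, whereas you need two separate arguments (your second one, in absolute coordinates with conditioning on types and a first/second-moment computation of conditionally independent increments, is essentially the paper's proof done more carefully). Both proofs rest on the same two scaling facts from Assumption \ref{assumption-scaling}: the accumulated drift $\Delta v^{(n)}\cdot T/\Delta t^{(n)}=O(1)$ stays bounded while the accumulated quadratic variation $\frac{(\Delta v^{(n)})^2}{\Delta x^{(n)}}\cdot\frac{T}{\Delta t^{(n)}}=O(\Delta p^{(n)})$ vanishes; and both implicitly use a uniform sup-norm bound on the initial densities for the height estimate, which does follow from the $L^2$-convergence and shift condition in Assumption \ref{Assumption-1}.
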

\begin{proof}
{It is enough to consider a model with only order placements where $\omega^{D}_k = 1$ a.s. W.l.o.g. we may also assume that $|\pi^{D}_k| \leq 1$ a.s. and $\eta^{(n)}_{v_b,0} \equiv 0$. Furthermore we may as well use a representation of the volume densities in {\sl absolute} rather than {\sl relative} coordinates. In such a model, $\mathbb{E} \|\eta^{(n)}_{v_b,k} \|^2_{L^2}$ is of the form
\[
	\mathbb{E} \|\eta^{(n)}_{v_b,k} \|^2_{L^2} = \left( \frac{\Delta v^{(n)}}{\Delta x^{(n)}} \right)^2 \mathbb{E} 
	\sum_{j \in \mathbb{Z}} \left( \sum_{i=1}^k a^{(n)}_{i,j} \right)^2 \cdot \Delta x^{(n)} 
\] 
where $a^{(n)}_{i,j} := {{\bf 1}}_{\{\pi^D_{i} \in [x^{(n)}_j, x^{(n)}_{j+1})\}}$ and the distribution of $\pi^D_i$ properly adjusted to account for the representation in absolute coordinates.  
Since the random variables $\pi^D_i$ have compact support, only finitely many summands are non-zero and we may rearrange terms. Using conditional independence of the placement variables though time, this yields:
\begin{eqnarray*}
	\mathbb{E} \|\eta^{(n)}_{v_b,k} \|^2_{L^2} 
	= \frac{\left( \Delta v^{(n)}\right)^2}{\Delta x^{(n)}}  
	\sum_{i=1}^k \sum_{j \in \mathbb{Z}}  \mathbb{E}  \left( a^{(n)}_{i,j} \right)^2  
	+  \frac{\left( \Delta v^{(n)}\right)^2}{\Delta x^{(n)}}  
	 \sum_{i,i'=1, i \neq i'}^k \sum_{j \in \mathbb{Z}} \mathbb{E}  a^{(n)}_{i,j} \mathbb{E}  a^{(n)}_{i',j}. 
\end{eqnarray*}
Using the fact that no placements take place at price levels with a distance of more than $1$ from the prevailing best bid/ask price: 
\begin{eqnarray*}
	\mathbb{E}  \left( a^{(n)}_{i,j} \right)^2 & \leq & \|f^D\|_{\infty} \Delta x^{(n)} {\bf 1}_{\{|j-\eta^{(n)}_{\gamma,i}| \leq 1\}}  \\
	\left| \mathbb{E}  a^{(n)}_{i,j} \mathbb{E}  a^{(n)}_{i',j}  \right| & \leq & \|f^D\|^2_{\infty} \left( \Delta x^{(n)} \right)^2 {\bf 1}_{\{ |j-\eta^{(n)}_{\gamma,i}| \leq 1\}} {\bf 1}_{\{|j-\eta^{(n)}_{\gamma,i'}| \leq 1 \}}. 
\end{eqnarray*}
In particular, the inner sums extend over at most $\frac{2}{\Delta x^{(n)}} + 1$ terms. As a result  
our scaling assumptions guarantee that
\[
	\sup_{n,k=1, ... ,\lfloor T/\Delta t^{(n)} \rfloor } \mathbb{E} \|\eta^{(n)}_{v_b,k} \|^2_{L^2} < \infty.
\]
The second assertion follows analogously as 
\[
	\mathbb{E} |\eta^{(n),j}_{v_b,k} |^2 = \frac{\left(  \Delta v^{(n)} \right)^2}{\Delta x^{(n)}}  \mathbb{E} 
	\left( \sum_{i=1}^k a^{(n)}_{i,j} \right)^2.
\] 
 }
%
\end{proof}


\subsection{Norm estimates}

 The next result will be used to prove a Lipschitz continuity property of the grid-point approximation of the limiting PDE.

\begin{lemma} \label{lemma-shift}
There exists a constant $C< 0$ such that for all $n \in \mathbb{N}$ and $k=0, ..., \lfloor T/\Delta t^{(n)} \rfloor$:
\[
	\left\|  \mathbb{E} \left[ T^{(n)}_\pm (\eta^{(n)}_{v_b,k}) - \eta^{(n)}_{v_b,k} \right] \right\|_{L^2} \leq C \cdot \Delta x^{(n)}.
\]
\end{lemma}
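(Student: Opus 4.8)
The plan is to reduce the statement to a \emph{deterministic} recursion by conditioning, and then to exploit that the price-changing events act on first differences as $L^2$-isometries. Let $\mathcal{G}$ denote the $\sigma$-field generated by the event-type sequence $\{\phi_k\}_k$ (equivalently, by the whole price path). By the independence structure recorded in the Remark following (\ref{def:OrigContState}), the placement/cancelation variables $(\pi^I_k,\omega^I_k)$ are independent of $\mathcal{G}$ and i.i.d.\ in $k$, so the conditional averages $\bar u_k := \mathbb{E}[\eta^{(n)}_{v_b,k}\mid\mathcal{G}]$ obey a recursion that is deterministic given $\mathcal{G}$. Using $\mathbb{E}[M^{(n),C}_{v,k}] = \Delta x^{(n)} f^{(n),C}$ and $\mathbb{E}[M^{(n),D}_{v,k}] = \Delta x^{(n)} f^{(n),D}$ together with the conditional independence $M^{(n),C}_{v,k}\perp\eta^{(n)}_{v_b,k}$, I would first derive
\[
	\bar u_{k+1} = \bar u_k + \mathbf{1}^{(n),A}_k(T^{(n)}_+ - I)\bar u_k + \mathbf{1}^{(n),B}_k(T^{(n)}_- - I)\bar u_k
	- \Delta v^{(n)}\mathbf{1}^{(n),C}_k f^{(n),C}\bar u_k + \Delta v^{(n)}\mathbf{1}^{(n),D}_k f^{(n),D}.
\]
Since $\mathbb{E}[T^{(n)}_\pm\eta^{(n)}_{v_b,k}-\eta^{(n)}_{v_b,k}] = \mathbb{E}[(T^{(n)}_\pm-I)\bar u_k]$ and $\|\mathbb{E}[\,\cdot\,]\|_{L^2}\le\mathbb{E}\|\cdot\|_{L^2}$, it suffices to prove the pathwise bound $\|(T^{(n)}_\pm-I)\bar u_k\|_{L^2}\le C\Delta x^{(n)}$, uniformly in $n$, in $k\le\lfloor T/\Delta t^{(n)}\rfloor$, and in the realization of $\mathcal{G}$.

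Next I would track the two first differences $h^\pm_k := (T^{(n)}_\pm - I)\bar u_k$ simultaneously and set $a_k := \max\{\|h^+_k\|_{L^2},\|h^-_k\|_{L^2}\}$. The key algebraic observation is that, because the shift operators commute and satisfy $T^{(n)}_+T^{(n)}_- = I$, one has $(T^{(n)}_+-I)T^{(n)}_- = -(T^{(n)}_--I)$ and $(T^{(n)}_--I)T^{(n)}_+ = -(T^{(n)}_+-I)$. Hence an event {\bf A} sends $(h^+,h^-)\mapsto(T^{(n)}_+h^+,-h^+)$ and an event {\bf B} sends $(h^+,h^-)\mapsto(-h^-,T^{(n)}_-h^-)$; as $T^{(n)}_\pm$ are $L^2$-isometries, \emph{both active events leave $a_k$ unchanged}. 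For the passive events I would use the splitting $(T^{(n)}_\pm-I)(f^{(n),C}\bar u_k) = (T^{(n)}_\pm f^{(n),C})\,h^\pm_k + (T^{(n)}_\pm f^{(n),C}-f^{(n),C})\bar u_k$ together with the regularity estimates (\ref{eq:RegInitPriceDensIV}) (which give $\|T^{(n)}_\pm f^{(n),I}-f^{(n),I}\|_\infty = \mathcal{O}(\Delta x^{(n)})$ and $\|f^{(n),I}\|_\infty = \mathcal{O}(1)$) and the uniform bound $\sup_{n,k}\|\bar u_k\|_{L^2}\le L'$, which follows from the isometry/contraction argument already underlying Lemma \ref{lemma-bounded}. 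This yields the one-step estimate
\[
	a_{k+1}\le a_k\big(1 + \Delta v^{(n)}\|f^{(n),C}\|_\infty\mathbf{1}^{(n),C}_k\big)
	+ \Delta v^{(n)}\,\mathcal{O}(\Delta x^{(n)})\big(\mathbf{1}^{(n),C}_k + \mathbf{1}^{(n),D}_k\big),
\]
with $a_0 = \|(T^{(n)}_\pm-I)v^{(n)}_{b,0}\|_{L^2} = \mathcal{O}(\Delta x^{(n)})$ by Assumption \ref{Assumption-1}.

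Finally I would close the argument by discrete Gronwall. Summing over $k\le\lfloor T/\Delta t^{(n)}\rfloor$, the multiplicative factor is bounded by $\exp\big(\|f^{(n),C}\|_\infty\,\Delta v^{(n)}\sum_k\mathbf{1}^{(n),C}_k\big)\le\exp\big(\|f^{(n),C}\|_\infty\,\Delta v^{(n)}T/\Delta t^{(n)}\big)$, which stays bounded because $\Delta v^{(n)}/\Delta t^{(n)}\to c_1$ by Assumption \ref{assumption-scaling}; by the same token the cumulative source term is $\mathcal{O}(\Delta x^{(n)})\cdot\Delta v^{(n)}T/\Delta t^{(n)} = \mathcal{O}(\Delta x^{(n)})$. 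Hence $a_k\le C\Delta x^{(n)}$ pathwise, and taking expectations yields the claim. The main obstacle is precisely the treatment of the active (price-changing) events: were they handled perturbatively they would enter the Gronwall estimate at rate $\Delta p^{(n)}$ per step over $\lfloor T/\Delta t^{(n)}\rfloor$ steps, and since $(\Delta p^{(n)}/\Delta t^{(n)})\,T$ grows like $T/\Delta x^{(n)}\to\infty$ the multiplier would diverge. The resolution above — that translations act on the first difference $h^\pm$ as norm-preserving maps and hence contribute nothing to the growth, leaving only the passive events, whose rate is exactly balanced by the scaling $\Delta v^{(n)}\cdot\lfloor T/\Delta t^{(n)}\rfloor = \mathcal{O}(1)$ — is what makes the estimate close.
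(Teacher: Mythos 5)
Your proof is correct, and it takes a genuinely different route from the paper's. The paper unrolls the volume recursion into the explicit representation of Lemma \ref{lemma-B1}, applies $T^{(n)}_{+}-I$, and takes \emph{unconditional} expectations: the random compositions of shifts produced by active orders drop out of the $L^2$-norms by isometry, the placement/cancelation terms are replaced by $\Delta x^{(n)}f^{(n),I}$ via independence, and a Gronwall inequality in $\left\|\mathbb{E}\left[T^{(n)}_{+}\eta^{(n)}_{v_b,k}-\eta^{(n)}_{v_b,k}\right]\right\|_{L^2}$ closes the argument, using the moment bounds of Lemma \ref{lemma-bounded} to control $\mathbb{E}\,\eta^{(n)}_{v_b,i}$ inside the cancelation terms. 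You instead condition on the event/price path at the outset --- legitimate, since by the independence structure of the driving sequences the event-type process is independent of the placement variables --- so that the conditional mean $\bar u_k$ obeys a deterministic recursion, and you then track the pair $(h^{+}_k,h^{-}_k)$ one step at a time: your exchange identities $(T^{(n)}_{+}-I)T^{(n)}_{-}=-(T^{(n)}_{-}-I)$ and $(T^{(n)}_{-}-I)T^{(n)}_{+}=-(T^{(n)}_{+}-I)$ show that active orders merely permute/negate the pair, so $a_k$ is non-increasing under them (``unchanged'' is slightly too strong, but non-increase is all you need), while passive orders feed a Gronwall estimate whose growth is exactly balanced by $\Delta v^{(n)}\lfloor T/\Delta t^{(n)}\rfloor=\mathcal{O}(1)$. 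Two small points deserve to be spelled out in a full write-up: the uniform bound $\sup_k\|\bar u_k\|_{L^2}=\mathcal{O}(1)$ is best derived directly from the conditional recursion (isometry for \textbf{A},\textbf{B}; contraction for \textbf{C} once $\Delta v^{(n)}\|f^{(n),C}\|_\infty\le 1$; an additive $\mathcal{O}(\Delta v^{(n)})$ for \textbf{D}) rather than by citing Lemma \ref{lemma-bounded}, whose proof is a second-moment computation for the unconditional law; and the sup-norm estimates (\ref{eq:RegInitPriceDensIV}) must be converted into $L^2$ estimates using that $f^{(n),D}$ inherits compact support from $f^D$. What your route buys is a \emph{pathwise} bound, uniform over realizations of the price path, which is strictly stronger than the stated bound on the unconditional expectation, and independence from both Lemma \ref{lemma-B1} and the moment machinery of Lemma \ref{lemma-bounded}; what the paper's route buys is that its auxiliary tools (the unrolled representation and the moment bounds) are reused elsewhere, e.g.\ in Lemma \ref{last-lemma}. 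The mechanism at the heart of both proofs is the same, and your closing remark identifies it precisely: active orders number roughly $T/\Delta x^{(n)}\to\infty$ on $[0,T]$, so they must be neutralized by the isometry property rather than absorbed perturbatively into the Gronwall multiplier.
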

\begin{proof}
Using Lemma \ref{lemma-B1} and the linearity of the translation operator $T_{+}^{(n)}$ it follows that a.s.
\begin{alignat}{2}
&T_{+}^{(n)}\left({\eta}_{v_b,k}^{(n)}\right)-{\eta}_{v_b,k}^{(n)}\nonumber\\
&=\left(\left(T_{+}^{(n)}\right)^{\sum_{i=0}^{k-1}\mathbf{1}_{i}^{A}}\circ\left(T_{-}^{(n)}\right)^{\sum_{i=0}^{k-1}\mathbf{1}_{i}^{B}} \right)\left(T_{+}^{(n)}\left(v_{b,0}^{(n)}\right)-v_{b,0}^{(n)}\right) \label{eq:DeltaEtaInd3} \\
&\qquad+ \frac{\Delta v^{(n)}}{{\Delta x^{(n)}}} \cdot \sum_{i=0}^{k-1}\left(\left(T_{+}^{(n)}\right)^{\sum_{j=i+1}^{k-1}\mathbf{1}_{j}^{A}}\circ\left(T_{-}^{(n)}\right)^{\sum_{j=i+1}^{k-1}\mathbf{1}_{j}^{B}} \right)\left(\left[T_{+}^{(n)}\left(M_{i}^{(n),D}\right)-M_{i}^{(n),D}\right]\mathbf{1}_{i}^{D}\right) \label{eq:DeltaEtaInd2} \\
&\qquad- \frac{\Delta v^{(n)}}{{\Delta x^{(n)}}} \cdot \sum_{i=0}^{k-1}\left(\left(T_{+}^{(n)}\right)^{\sum_{j=i+1}^{k-1}\mathbf{1}_{j}^{A}}\circ\left(T_{-}^{(n)}\right)^{\sum_{j=i+1}^{k-1}\mathbf{1}_{j}^{B}} \right)\left(\left[T_{+}^{(n)}\left(M_{i}^{(n),C}{\eta}_{v_b,i}^{(n)}\right)-M_{i}^{(n),C}{\eta}_{v_b,i}^{(n)}\right] \mathbf{1}_{i}^{C}\right)
\label{eq:DeltaEtaInd}
\end{alignat}
Taking the expected value and norms in (\ref{eq:DeltaEtaInd}) we find:
\begin{alignat}{2}
&\left\|\mathbb{E}\left[T_{+}^{(n)}\left({\eta}_{v_b,k}^{(n)}\right)-{\eta}_{v_b,k}^{(n)}\right]\right\|_{L^{2}}\nonumber\\
&\leq\left\|T_{+}^{(n)}\left(v_{b,0}^{(n)}\right)-v_{b,0}^{(n)}\right\|_{L^{2}}+ { 
\frac{\Delta v^{(n)}}{\Delta x^{(n)}}} \sum_{i=0}^{k-1} \left\|T_{+}^{(n)}\left(\mathbb{E} \left[ M^{(n),D}_i \right] \right)-\mathbb{E}\left[ M^{(n),D}_ i \right]\right\|_{L^{2}}. \nonumber \\
&\qquad\qquad\qquad\qquad\qquad\quad + { \frac{\Delta v^{(n)}}{\Delta x^{(n)}}} \sum_{i=0}^{k-1}\left\| T_{+}^{(n)}\left(\mathbb{E}\left[M_{v,i}^{(n),C}{\eta}_{v_b,i}^{(n)}\right]\right)- \mathbb{E}\left[M_{v,i}^{(n),C}{\eta}_{v_b,i}^{(n)}\right]\right\|_{L^{2}}.
\end{alignat}
By Assumptions \ref{Assumption-1} and \ref{Assumption-3} there exists a constant $K<\infty$ such that $\left\|T_{+}^{(n)}\left(v_{b,0}^{(n)}\right)-v_{b,0}^{(n)}\right\|_{L^{2}} \leq K \Delta x^{(n)} $
and
\begin{eqnarray*}
	\frac{1}{\Delta x^{(n)}} \left\|T_{+}^{(n)}\left(\mathbb{E} \left[ M^{(n),D}_i \right] \right)-
	\mathbb{E}\left[ M^{(n),D}_ i \right]		\right\|_{L^{2}} 
	  &=& \left\|T_{+}^{(n)}\left( f^{(n),D}  \right)- f^{(n),D} 	\right\|_{L^{2}}  \\
	 & \leq & K \Delta x^{(n)}.
\end{eqnarray*}
As for the cancellation terms, independence of the event dynamics from the standing volumes yield:
\begin{eqnarray*}
	\frac{1}{\Delta x^{(n)}} T_{+}^{(n)}\left( \mathbb{E} \left[ M^{(n),C}_i {\eta}_{v_b,i}^{(n)} \right] \right) & = &
	T_{+}^{(n)} \left( f^{(n),C} \mathbb{E}  {\eta}_{v_b,i}^{(n)} \right),\\ 
	\frac{1}{\Delta x^{(n)}}  \mathbb{E} \left[ M^{(n),C}_i {\eta}_{v_b,i}^{(n)} \right] & = &  
	f^{(n),C} \mathbb{E}  {\eta}_{v_b,i}^{(n)}. 
\end{eqnarray*}
In view of the second assertion of Lemma \ref{lemma-bounded} and using the fact that $f^{(n),C}$ is bounded along with Assumption  \ref{Assumption-3} we find a constant $K< \infty$ such that:
\begin{eqnarray*}
	& & \frac{1}{\Delta x^{(n)}} \left\|T_{+}^{(n)}\left(\mathbb{E} \left[ M^{(n),C}_i {\eta}_{v_b,i}^{(n)} \right] \right)-
	\mathbb{E}\left[ M^{(n),C}_ i {\eta}_{v_b,i}^{(n)} \right]	\right\|_{L^{2}} \\
	& \leq & \left \| T^{(n)}_+\left( f^{(n),C} \right) \mathbb{E} \left[ T_{+}^{(n)} \left( {\eta}_{v_b,i}^{(n)} \right) - {\eta}_{v_b,i}^{(n)} \right] \right \|_{L^2} \\
	& & \, +  \left\|  \left( T^{(n)}\left( f^{(n),C} \right) - f^{(n),C} \right) \mathbb{E} \eta^{(n)}_{v_b,i} \right\|_{L^2} \\
	& \leq & K \left( \mathbb{E} \| T_{+}^{(n)}\left({\eta}_{v_b,i}^{(n)}\right)-{\eta}_{v_b,i}^{(n)} \|_{L^2} + \Delta x^{(n)}  \right).
\end{eqnarray*}
Altogether, we arrive at the following estimate:
\begin{alignat}{2}
\left\|\mathbb{E}\left[T_{+}^{(n)}\left({\eta}_{v_b,k}^{(n)}\right)-{\eta}_{v_b,k}^{(n)}\right]\right\|_{L^{2}}
\leq K \Delta x^{(n)} +K \Delta v^{(n)} \sum_{i=0}^{\lfloor T / \Delta t^{(n)}\rfloor}\left\|\mathbb{E}\left[T_{+}^{(n)}\left({\eta}_{v_b,i}^{(n)}\right)-{\eta}_{v_b,i}^{(n)}\right]\right\|_{L^{2}}.
\label{ineq:ExpDeltaEta}
\end{alignat}
Hence, it follows from Gronwall's lemma that
\[
	\sup_{k = 0, ..., T / \Delta t^{(n)} }  \left\|\mathbb{E}\left[T_{+}^{(n)}\left({\eta}_{v_b,k}^{(n)}\right)-{\eta}_{v_b,k}^{(n)}\right]\right\|_{L^{2}} = \mathcal{O}\left( \Delta x^{(n)} \right).
\]
\end{proof}


\begin{corollary} \label{un-lip}
There exists a constant $C > 0$ such that
\[
	\| u^{(n)}(t, \cdot + \Delta x^{(n)} ) - u^{(n)}(t, \cdot ) \|_{L^2} \leq C \cdot \Delta x^{(n)}.
\]
Moreover,
\[
	\sup_{n \in \mathbb{N}, t \in [0,T]} \| u^{(n)}(t;\cdot) \|_{L^2} < \infty.
\]
\end{corollary}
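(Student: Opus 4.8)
The plan is to first recognise that the deterministic sequence $u^{(n)}_k$ is generated by the affine operator $\widehat{\mathcal{H}}^{(n)}$ from (\ref{def:operatorH1}): taking the conditional expectation of the volume row of the matrix $\mathbb{M}^{(n)}_k$ against the event indicators, with the random price replaced by $\widehat{\gamma}(t^{(n)}_k)$ and the deterministic volume $u^{(n)}_k$, turns the shift entries into $\Delta p^{(n)}A_{R/L}[T^{(n)}_\pm u^{(n)}_k - u^{(n)}_k]$ and, using the independence of the event type from the placement variables together with $\mathbb{E}[M^{(n),I}_k] = \Delta x^{(n)} f^{(n),I}$ for $I\in\{{\bf C},{\bf D},{\bf G},{\bf H}\}$, turns the placement/cancellation entries into $\Delta v^{(n)}(1-\Delta p^{(n)})[F^{(n)}(t^{(n)}_k,\cdot)u^{(n)}_k + g^{(n)}(t^{(n)}_k,\cdot)]$. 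Thus $u^{(n)}_{k+1} = \widehat{\mathcal{H}}^{(n)}_{t^{(n)}_k}(u^{(n)}_k)$, and both claims reduce to stability estimates for $\widehat{\mathcal{H}}^{(n)}$.

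For the boundedness I reuse the operator estimate from the proof of Proposition \ref{prop-PDE-scheme}. Since $f^{(n),I}$ is uniformly bounded by $\|f^{(n),I}-f^I\|_\infty = o(1)$ and boundedness of $f^I$, the matrix $F^{(n)}$ and vector $g^{(n)}$ are uniformly bounded; combined with the isometry $\|T^{(n)}_\pm v\|_{L^2}=\|v\|_{L^2}$ and $\Delta v^{(n)}=O(\Delta t^{(n)})$ (Assumption \ref{assumption-scaling}) this gives $\|\widehat{\mathcal{H}}^{(n)}_t(v)\|_{L^2} \le (1+C\Delta t^{(n)})\|v\|_{L^2} + C\Delta t^{(n)}$. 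Iterating this affine bound over $k \le \lfloor T/\Delta t^{(n)}\rfloor$ steps, using $(1+C\Delta t^{(n)})^{\lfloor T/\Delta t^{(n)}\rfloor}\le e^{CT}$ and $\|u^{(n)}_0\|_{L^2}=\|v^{(n)}_0\|_{L^2}\to\|v_0\|_{L^2}$ (Assumption \ref{Assumption-1}), yields the second assertion.

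For the shift estimate I set $\Delta^{(n)}_k := T^{(n)}_+(u^{(n)}_k)-u^{(n)}_k$ and apply $T^{(n)}_+$ to the recursion. The translation $T^{(n)}_+$ commutes with the shift operators inside $\widehat{\mathcal{H}}^{(n)}$ but not with multiplication by $F^{(n)}(t,\cdot)$ nor with the additive term $g^{(n)}(t,\cdot)$, so I obtain $\Delta^{(n)}_{k+1} = \widehat{\mathcal{H}}^{(n),0}_{t^{(n)}_k}(\Delta^{(n)}_k) + R^{(n)}_k$, where $\widehat{\mathcal{H}}^{(n),0}$ is the homogeneous part of $\widehat{\mathcal{H}}^{(n)}$ (dropping $g^{(n)}$) and
\[
	R^{(n)}_k = \Delta v^{(n)}(1-\Delta p^{(n)})\Big[(T^{(n)}_+F^{(n)}-F^{(n)})\,T^{(n)}_+u^{(n)}_k + (T^{(n)}_+g^{(n)}-g^{(n)})\Big].
\]
The regularity bound $\|T^{(n)}_\pm f^{(n),I}-f^{(n),I}\|_\infty = O(\Delta x^{(n)})$ of (\ref{eq:RegInitPriceDensIV}), the compact support of the density functions, and the uniform bound $\|T^{(n)}_+u^{(n)}_k\|_{L^2}=\|u^{(n)}_k\|_{L^2}\le C$ just proved give $\|R^{(n)}_k\|_{L^2}\le C\Delta v^{(n)}\Delta x^{(n)}$. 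Since $\widehat{\mathcal{H}}^{(n),0}_t$ again satisfies $\sup_{\|v\|_{L^2}=1}\|\widehat{\mathcal{H}}^{(n),0}_t(v)\|_{L^2}\le 1+C\Delta t^{(n)}$, iterating yields
\[
	\|\Delta^{(n)}_k\|_{L^2} \le e^{CT}\|\Delta^{(n)}_0\|_{L^2} + e^{CT}\Big\lfloor \tfrac{T}{\Delta t^{(n)}}\Big\rfloor C\Delta v^{(n)}\Delta x^{(n)},
\]
with $\|\Delta^{(n)}_0\|_{L^2}=\|T^{(n)}_+v^{(n)}_0-v^{(n)}_0\|_{L^2}=O(\Delta x^{(n)})$ by the first part of Assumption \ref{Assumption-1}.

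The main obstacle is precisely the accumulation of the remainder $\sum_k R^{(n)}_k$: each term is only $O(\Delta v^{(n)}\Delta x^{(n)})$, but there are $O(1/\Delta t^{(n)})$ of them, so the total is $O(\tfrac{\Delta v^{(n)}}{\Delta t^{(n)}}\Delta x^{(n)})$. This is $O(\Delta x^{(n)})$ exactly because $\Delta v^{(n)}/\Delta t^{(n)}\to c_1$ in Assumption \ref{assumption-scaling}; together with the $O(\Delta x^{(n)})$ initial term this gives $\sup_k\|\Delta^{(n)}_k\|_{L^2}=O(\Delta x^{(n)})$, i.e. the first assertion. The argument parallels Lemma \ref{lemma-shift}, but is cleaner since $u^{(n)}$ is deterministic and no variance bounds are needed.
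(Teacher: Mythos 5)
Your proof is correct, and it takes a genuinely different route from the paper's. The paper argues probabilistically: it represents $u^{(n)}_k = \mathbb{E}\,\zeta^{(n)}_k$ (equation (\ref{eqn2})), where $\zeta^{(n)}$ is the random volume dynamics driven by the deterministic limit price, and then gets the uniform $L^2$-bound from Jensen's inequality plus Lemma \ref{lemma-bounded} (a second-moment computation for sums of placement indicators $a^{(n)}_{i,j}$), and the shift estimate by applying Lemma \ref{lemma-shift} to $\zeta^{(n)}_k$ (itself proved via the pathwise representation of Lemma \ref{lemma-B1}, expectation estimates for the placement/cancellation terms, and Gronwall's lemma). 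You instead stay entirely deterministic: you identify $u^{(n)}_{k+1} = \widehat{\mathcal{H}}^{(n)}_{t^{(n)}_k}(u^{(n)}_k)$ with the affine operator of Proposition \ref{prop-PDE-scheme2} --- an identification the paper itself uses only implicitly there, and which is exact because event types are (conditionally) independent of the placement marks and $\mathbb{E}[M^{(n),I}_k] = \Delta x^{(n)} f^{(n),I}$ --- and then run a finite-difference stability argument: the affine bound $\|\widehat{\mathcal{H}}^{(n)}_t(v)\|_{L^2} \le (1+C\Delta t^{(n)})\|v\|_{L^2} + C\Delta t^{(n)}$ iterated over $\lfloor T/\Delta t^{(n)}\rfloor$ steps gives boundedness, and the commutator decomposition $\Delta^{(n)}_{k+1} = \widehat{\mathcal{H}}^{(n),0}_{t^{(n)}_k}(\Delta^{(n)}_k) + R^{(n)}_k$ with $\|R^{(n)}_k\|_{L^2} = \mathcal{O}(\Delta v^{(n)}\Delta x^{(n)})$ gives the shift estimate, the accumulated remainder being $\mathcal{O}(\Delta x^{(n)})$ precisely because $\Delta v^{(n)}/\Delta t^{(n)} \to c_1$. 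The supporting details are in order: $(\ref{eq:RegInitPriceDensIV})$ supplies the $\mathcal{O}(\Delta x^{(n)})$ bound on $T^{(n)}_\pm f^{(n),I} - f^{(n),I}$, and uniform boundedness together with compact support of the $f^{(n),I}$ turns sup-norm control of $g^{(n)}$ and $T^{(n)}_+ g^{(n)} - g^{(n)}$ into the needed $L^2$ control (a point the paper's own operator estimate glosses over by writing $\|g\|_\infty$). What your approach buys is economy and self-containedness for this corollary: no second-moment computations and no appendix Lemmas \ref{lemma-B1}, \ref{lemma-bounded}, \ref{lemma-shift}. What the paper's approach buys is amortization: those probabilistic lemmas are needed anyway (for the variance bounds in Lemma \ref{last-lemma} and for shift/height estimates of the true random book via Corollary \ref{lemma-shift2}), so once they are available the corollary follows in two lines; moreover Lemma \ref{lemma-shift} controls the shifted expectation of the actual random volume process $\eta^{(n)}_{v_b,k}$, a stronger statement that your deterministic argument does not produce.
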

\begin{proof}
In order to establish the first assertion we represent the functions $u^{(n)}$ as
\begin{equation} \label{eqn2}
	u^{(n)}_k = \mathbb{E} {\zeta}^{(n)}_k
\end{equation}	
where
\begin{alignat}{2}
\begin{cases}
	{\zeta}_{k+1}^{(n)}&:=\zeta_{k}^{(n)}+ \mathcal{D}_{v,k}^{(n)}\left( \gamma(t^{(n)}_k), {\zeta}^{(n)}_k \right) \\
\label{def:UTildenjkRec}\\
{\zeta}_{0}^{(n),j}&:=v_{0}^{(n),j}
\end{cases}.
\end{alignat}
For $t \in [k \cdot \Delta t^{(n)}, (k+1) \cdot \Delta^{(n)} )$ the preceding lemma then implies:
\[
	\| u^{(n)}(t, \cdot \pm \Delta x^{(n)} ) - u^{(n)}(t, \cdot ) \|_{L^2} =
	\left \| \mathbb{E} \left[ T^{(n)}_\pm \left( \zeta^{(n)}_k \right) - \zeta^{(n)}_k \right] \right \|_{L^2} \leq C \Delta x^{(n)}.
\]
The second assertion follows from (\ref{eqn2}) together with Lemma \ref{lemma-bounded}:
\begin{eqnarray*}
	\| u^{(n)}(t^{(n)}_k;\cdot) \|_{L^2} = \| \mathbb{E} {\zeta}^{(n)}_k \|_{L^2} \leq \mathbb{E} \| {\zeta}^{(n)}_k \|_{L^2}  \leq C.
\end{eqnarray*}
\end{proof}

Using Lipschitz continuity of $f^I$ along with the point wise shift estimate of the initial volume densities of Assumption \ref{Assumption-1} the following result can be established by analogy to Corollary \ref{lemma-shift}.

\begin{corollary} \label{lemma-shift2}
\begin{equation} \label{expected-norm3}
	\sup_{n \in \mathbb{N}, \, j \in \mathbb{Z}, \, k=0, ... ,\lfloor T/\Delta t^{(n)} \rfloor} \left| \mathbb{E} \left[ \eta^{(n),j+1}_{v_b,k} - \eta^{(n),j}_{v_b,k} \right]  \right | = \mathcal{O}(\Delta x^{(n)}). 
\end{equation}
\end{corollary}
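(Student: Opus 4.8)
The plan is to mimic the proof of Lemma~\ref{lemma-shift}, replacing the $L^2$-norm by the supremum over price ticks and the isometry property of the translation operators by the fact that they permute the lattice bijectively. The first observation is that the quantity to be estimated is exactly the tick-wise component of the shifted density: since $T^{(n)}_+(v)(x^{(n)}_j)=v(x^{(n)}_{j+1})$, the $j$-th component of $T^{(n)}_+(\eta^{(n)}_{v_b,k})-\eta^{(n)}_{v_b,k}$ equals $\eta^{(n),j+1}_{v_b,k}-\eta^{(n),j}_{v_b,k}$. Thus \eqref{expected-norm3} is the sup-over-ticks analogue of the $L^2$-estimate of Lemma~\ref{lemma-shift}, and I would start from the same decomposition \eqref{eq:DeltaEtaInd3}--\eqref{eq:DeltaEtaInd} obtained there from the induction formula of Lemma~\ref{lemma-B1}, now evaluating each term at a fixed tick $j$, taking expectations and absolute values, and finally the supremum over $j\in\mathbb{Z}$.

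Each of the three groups of terms would be treated separately. For the initial-data term I would use that the random composition of translation operators acts as a bijection of the lattice, so that evaluating a shifted function at tick $j$ merely reads off its value at some other tick; hence the supremum over $j$ is invariant under these shifts, and $\sup_j\bigl|\mathbb{E}[(\text{shift})(T^{(n)}_+ v^{(n)}_{b,0}-v^{(n)}_{b,0})]_j\bigr|\le \sup_j|v^{(n)}_{b,0}(x^{(n)}_{j+1})-v^{(n)}_{b,0}(x^{(n)}_j)|=\mathcal{O}(\Delta x^{(n)})$ by the pointwise shift estimate of Assumption~\ref{Assumption-1}. For the placement term I would invoke the independence of the event dynamics after time $i$ from the time-$i$ placement variables (Assumption~\ref{Assumption-3}) to replace $M^{(n),D}_{v,i}$ by its expectation, which is $\Delta x^{(n)} f^{(n),D}$; Lipschitz continuity of $f^D$ then gives $\sup_j|f^{(n),D}_{j+1}-f^{(n),D}_j|=\mathcal{O}(\Delta x^{(n)})$ via \eqref{eq:RegInitPriceDensIV}, and the prefactor $\Delta v^{(n)}$ times the $\lfloor T/\Delta t^{(n)}\rfloor$ summands is $\mathcal{O}(1)$ because $\Delta v^{(n)}/\Delta t^{(n)}\to c_1$ (Assumption~\ref{assumption-scaling}), leaving an overall $\mathcal{O}(\Delta x^{(n)})$ contribution.

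For the cancelation term the same independence argument turns $M^{(n),C}_{v,i}\eta^{(n)}_{v_b,i}$ into $\Delta x^{(n)} f^{(n),C}\,\mathbb{E}\eta^{(n)}_{v_b,i}$; splitting $T^{(n)}_+(f^{(n),C}\mathbb{E}\eta_i)-f^{(n),C}\mathbb{E}\eta_i$ as in Lemma~\ref{lemma-shift} and using the uniform pointwise bound $\sup_j\mathbb{E}|\eta^{(n),j}_{v_b,i}|\le C$ from \eqref{expected-norm2} together with boundedness and Lipschitz continuity of $f^{(n),C}$, one obtains a self-referential estimate of the form
\[
	\sup_j\bigl|\mathbb{E}[\eta^{(n),j+1}_{v_b,k}-\eta^{(n),j}_{v_b,k}]\bigr|
	\le K\Delta x^{(n)}+K\Delta v^{(n)}\sum_{i=0}^{\lfloor T/\Delta t^{(n)}\rfloor}\sup_j\bigl|\mathbb{E}[\eta^{(n),j+1}_{v_b,i}-\eta^{(n),j}_{v_b,i}]\bigr|.
\]
A discrete Gronwall argument, exactly as at the end of Lemma~\ref{lemma-shift} (the sum of $K\Delta v^{(n)}$ over $\mathcal{O}(1/\Delta t^{(n)})$ terms being bounded), then yields \eqref{expected-norm3}.

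I expect the main obstacle to be the rigorous justification of the ``pointwise isometry'': unlike the clean $L^2$-isometry used in Lemma~\ref{lemma-shift}, one must argue carefully that the supremum over ticks is invariant under the random translation compositions and interchanges correctly with the expectation, and that the independence between the post-time-$i$ shifts and the time-$i$ placement/cancelation variables lets one replace the random operators $M^{(n),I}_{v,i}$ by the deterministic densities $f^{(n),I}$ before the supremum is taken. Getting the order of $\sup_j$ and $\mathbb{E}$ right so that the cancelation term closes into the Gronwall recursion, rather than producing an uncontrolled quantity with an extra uncancelled shift, is the delicate point.
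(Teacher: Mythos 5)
Your proposal is correct and takes essentially the same route as the paper: the paper's entire ``proof'' is the one-line remark that the corollary follows by analogy to Lemma~\ref{lemma-shift} using the pointwise shift estimate of Assumption~\ref{Assumption-1} and Lipschitz continuity of $f^I$, which is exactly your plan of rerunning the decomposition from Lemma~\ref{lemma-B1} with the $L^2$-norm replaced by the supremum over ticks, the $L^2$-isometry of $T^{(n)}_\pm$ replaced by sup-norm invariance under lattice shifts, and the pointwise bound (\ref{expected-norm2}) feeding the same Gronwall recursion. The delicate point you flag (using independence to replace $M^{(n),I}_{v,i}$ by the deterministic $f^{(n),I}$ before taking shift differences, so the cancelation term closes the recursion) is handled with the same level of informality in the paper's own proof of Lemma~\ref{lemma-shift}, so your write-up matches the paper's argument and its standard of rigor.
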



We close this appendix with norm estimates which are key to the proof of Proposition \ref{prop:etaNuTildeConv}.

\begin{lemma} \label{last-lemma}
	The following norm estimates hold:
	\begin{eqnarray*}
		\sup_{n,k} \mathbb{E} \left[ \left\| {\bf 1}_{D_k,H_k} \frac{\Delta v^{(n)}}{\Delta x^{(n)}} M^{(n),D,H}_{v,k} - 
		 \mathbb{E}\left[ {\bf 1}_{D_k,H_k} \frac{\Delta v^{(n)}}{\Delta x^{(n)}} M^{(n),D,H}_{v,k} \right] \right\|^2_{L^2} 
		 \right] &=& \mathcal{O}\left( (\Delta v^{(n)})^2 \right) \\
		\sup_{n,k} \mathbb{E} \left[ \left\| {\bf 1}_{C_k,G_k} \frac{\Delta v^{(n)}}{\Delta x^{(n)}} M^{(n),C,G}_{v,k} \eta^{(n)}_{v,k} - 
		 \mathbb{E}\left[ {\bf 1}_{C_k,G_k} \frac{\Delta v^{(n)}}{\Delta x^{(n)}} M^{(n),C,G}_{v,k} \eta^{(n)}_{v,k} \right] \right\|^2_{L^2} 
		 \right] &=& \mathcal{O}\left( (\Delta v^{(n)})^2 \right) \\
		 \sup_{n,k} \mathbb{E} \left[ \left\| {\bf 1}_{A_k,B_k}  \left(T_{\pm}^{(n)}\left({\eta}_{v_b,k}^{(n)}\right)-{\eta}_{v_b,k}^{(n)} \right) - 
		 \mathbb{E}\left[ {\bf 1}_{A_k,B_k} \left(T_{\pm}^{(n)}\left({\eta}_{v_b,k}^{(n)}\right)-{\eta}_{v_b,k}^{(n)} \right) \right] \right\|^2_{L^2} 
		 \right] &=& \mathcal{O}\left( (\Delta v^{(n)})^{2 \alpha} + (\Delta v^{(n)})^{2 - \alpha}\right)  \\
		  \sup_{n,k} \mathbb{E} \left[ \left\| {\bf 1}_{E_k,F_k}  \left(T_{\pm}^{(n)}\left({\eta}_{v_s,k}^{(n)}\right)-{\eta}_{v_s,k}^{(n)} \right) - 
		 \mathbb{E}\left[ {\bf 1}_{E_k,F_k} \left(T_{\pm}^{(n)}\left({\eta}_{v_s,k}^{(n)}\right)-{\eta}_{v_s,k}^{(n)} \right) \right] \right\|^2_{L^2} 
		 \right] &=& \mathcal{O}\left( (\Delta v^{(n)})^{2 \alpha} + (\Delta v^{(n)})^{2 - \alpha}\right). 
	\end{eqnarray*}
\end{lemma}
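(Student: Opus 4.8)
The plan is to estimate each of the four centred second moments by discarding the centring, via the Hilbert-space identity $\mathbb{E}\|Y-\mathbb{E}Y\|_{L^2}^2 = \mathbb{E}\|Y\|_{L^2}^2 - \|\mathbb{E}Y\|_{L^2}^2 \le \mathbb{E}\|Y\|_{L^2}^2$, and then to compute $\mathbb{E}\|Y\|_{L^2}^2$ directly from the explicit form of the volume increment $\mathcal{D}^{(n)}_{v,k}$. The buy side (events $\mathbf{A}$--$\mathbf{D}$) and the sell side (events $\mathbf{E}$--$\mathbf{H}$) decouple and are structurally identical, so it suffices to treat the buy side and invoke symmetry for the other two lines. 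Throughout I would use Assumption \ref{assumption-scaling} in the normalised form $\Delta x^{(n)}\Delta p^{(n)} = \Delta v^{(n)} = \Delta t^{(n)}$ together with $\Delta p^{(n)} = \mathcal{O}((\Delta t^{(n)})^{\alpha})$, so that $\Delta x^{(n)} = \mathcal{O}((\Delta v^{(n)})^{1-\alpha})$.

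For the placement term (event $\mathbf{D}$) and the cancellation term (event $\mathbf{C}$) the increments are single-cell step functions. On $\{\phi_k=\mathbf{D}\}$ the density jumps by $\tfrac{\Delta v^{(n)}}{\Delta x^{(n)}}\omega^D_k$ on one cell of width $\Delta x^{(n)}$, whence $\big\|\mathbf{1}_{D_k}\tfrac{\Delta v^{(n)}}{\Delta x^{(n)}}M^{(n),D}_{v,k}\big\|_{L^2}^2 = \mathbf{1}_{D_k}\tfrac{(\Delta v^{(n)})^2}{\Delta x^{(n)}}(\omega^D_k)^2$. Taking expectations, using that $\omega^D_k$ is bounded and $\mathbb{E}\mathbf{1}_{D_k}\le 1$, gives a bound of order $\tfrac{(\Delta v^{(n)})^2}{\Delta x^{(n)}} = \mathcal{O}(\Delta v^{(n)}\Delta p^{(n)})$; under the scaling this is $\mathcal{O}((\Delta v^{(n)})^{1+\alpha})$, which is of strictly smaller order than the shift contributions below and hence not binding for the choice of $\beta$. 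The cancellation term is identical except that the jump height carries the extra factor $\eta^{(n),J}_{v_b,k}$, the standing height at the random cancellation cell $J=J(\pi^C_k)$. Since $\pi^C_k$ is drawn independently of the history $\mathcal{F}_k$, I would condition on $\mathcal{F}_k$ and average over $\pi^C_k$, replacing $(\eta^{(n),J}_{v_b,k})^2$ by $\sum_j \mathbb{P}[J=j]\,(\eta^{(n),j}_{v_b,k})^2$; the uniform height bound \eqref{expected-norm2} of Lemma \ref{lemma-bounded} then keeps this term at the same order $\tfrac{(\Delta v^{(n)})^2}{\Delta x^{(n)}}$.

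The shift terms (events $\mathbf{A}$ and $\mathbf{B}$) are the main obstacle. Here the increment is $\mathbf{1}_{A_k}\big(T^{(n)}_+\eta^{(n)}_{v_b,k}-\eta^{(n)}_{v_b,k}\big)$, and I would first factor out the active-order probability: the event type $\phi_k$ is, given the prevailing prices, conditionally independent of the volume density (Assumptions \ref{Assumption-4} and \ref{Assumption-5}), so conditioning on $\mathcal{F}_k$ yields $\mathbb{E}\big[\mathbf{1}_{A_k}\|T^{(n)}_+\eta^{(n)}_{v_b,k}-\eta^{(n)}_{v_b,k}\|_{L^2}^2\mid\mathcal{F}_k\big] = p^{(n),A}(\gamma_k)\,\|T^{(n)}_+\eta^{(n)}_{v_b,k}-\eta^{(n)}_{v_b,k}\|_{L^2}^2$ with $p^{(n),A}=\mathcal{O}(\Delta p^{(n)})$. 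It remains to control $\mathbb{E}\|T^{(n)}_+\eta^{(n)}_{v_b,k}-\eta^{(n)}_{v_b,k}\|_{L^2}^2 = \Delta x^{(n)}\sum_j \mathbb{E}\big[(\eta^{(n),j+1}_{v_b,k}-\eta^{(n),j}_{v_b,k})^2\big]$, which I would split into squared mean and variance. The squared mean is $\mathcal{O}((\Delta x^{(n)})^2)$ by Corollary \ref{lemma-shift2}. For the variance I would reuse the pure-placement reduction from the proof of Lemma \ref{lemma-bounded}: each of the $\mathcal{O}(1/\Delta t^{(n)})$ past events feeds the difference $\eta^{(n),j+1}_{v_b,k}-\eta^{(n),j}_{v_b,k}$ through at most the two adjacent cells with magnitude $\tfrac{\Delta v^{(n)}}{\Delta x^{(n)}}\omega$ and hitting probability $\mathcal{O}(\Delta x^{(n)})$, and since the per-event contributions are conditionally independent their variances sum to $\mathcal{O}\big(\tfrac{(\Delta v^{(n)})^2}{(\Delta x^{(n)})^2}\cdot\Delta x^{(n)}\cdot\tfrac{1}{\Delta t^{(n)}}\big)=\mathcal{O}(\Delta p^{(n)})$. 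Summing over the $\mathcal{O}(1/\Delta x^{(n)})$ cells in the compact support of the density and multiplying by $\Delta x^{(n)}$ gives $\mathbb{E}\|T^{(n)}_+\eta^{(n)}_{v_b,k}-\eta^{(n)}_{v_b,k}\|_{L^2}^2=\mathcal{O}(\Delta p^{(n)}+(\Delta x^{(n)})^2)$; multiplying by $p^{(n),A}=\mathcal{O}(\Delta p^{(n)})$ produces $\mathcal{O}((\Delta p^{(n)})^2+\Delta p^{(n)}(\Delta x^{(n)})^2)=\mathcal{O}((\Delta v^{(n)})^{2\alpha}+(\Delta v^{(n)})^{2-\alpha})$, as claimed. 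The same computation with $T^{(n)}_-$ and with the sell-side operators covers events $\mathbf{B}$, $\mathbf{E}$ and $\mathbf{F}$.

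The hard step is precisely the variance estimate for the discrete spatial increment $\eta^{(n),j+1}_{v_b,k}-\eta^{(n),j}_{v_b,k}$, because cancellations act \emph{multiplicatively} on the standing volume and so do not fit the clean sum-of-independent-contributions picture valid for a pure-placement book. To make the reduction rigorous I would, as in Lemma \ref{lemma-bounded}, first dominate the book by its pure-placement version to fix the $\mathcal{O}(\Delta p^{(n)})$ scale of the fluctuations, and then bound the extra fluctuation injected by each cancellation, which is of size $\omega^C\Delta p^{(n)}$ times the standing height, using the uniform second-moment bound \eqref{expected-norm2}; this shows the cancellations enter at the same or lower order and do not spoil the $\mathcal{O}(\Delta p^{(n)})$ variance bound. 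Everything else is routine bookkeeping with the scaling relations of Assumption \ref{assumption-scaling}, and combining the four estimates yields \eqref{bound-expected-value} with $\beta=\min\{\alpha,1-\alpha/2\}$.
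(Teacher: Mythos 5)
Your overall architecture matches the paper's: bound the centred moments by uncentred ones, pull the factor $\Delta p^{(n)}$ out of the shift terms by conditional independence of event types from volumes, and reduce everything to the second moment of the discrete spatial increment, with a mean contribution of order $(\Delta x^{(n)})^2$ (via Corollary \ref{lemma-shift2}) and a fluctuation contribution of order $\Delta p^{(n)}$, so that multiplying by $\Delta p^{(n)}$ gives $\mathcal{O}\left((\Delta p^{(n)})^2 + \Delta p^{(n)}(\Delta x^{(n)})^2\right) = \mathcal{O}\left((\Delta v^{(n)})^{2\alpha}+(\Delta v^{(n)})^{2-\alpha}\right)$. This is precisely the diagonal/off-diagonal bookkeeping the paper performs after rewriting the increment in absolute coordinates (the terms $a^{(n)}_{i,j}$ and $b^{(n)}_{i,j}$). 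Your rate $\mathcal{O}\left((\Delta v^{(n)})^2/\Delta x^{(n)}\right) = \mathcal{O}\left(\Delta v^{(n)}\Delta p^{(n)}\right)$ for the placement/cancellation lines deviates from the stated $\mathcal{O}\left((\Delta v^{(n)})^2\right)$, but your computation is the honest one (the stated rate cannot hold once $\Delta x^{(n)} \to 0$, since $\Delta v^{(n)}\Delta p^{(n)} \gg (\Delta v^{(n)})^2$ then), and, as you observe, it is not the binding term for $\beta = \min\{\alpha, 1-\alpha/2\}$, so this discrepancy is harmless.

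The genuine gap sits exactly where you flag ``the hard step'': cancellations in the variance bound. The assertion that ``the per-event contributions are conditionally independent, so their variances sum'' is valid (after conditioning on the price path and event types) for placements, but false for cancellations: the $i$-th cancellation contribution carries the factor $\eta^{(n),j}_{v_b,i}$, a functional of the entire past, so these contributions are neither independent nor centred and their covariances do not vanish. Your proposed repair --- dominate the book by its pure-placement version and bound each cancellation's fluctuation via (\ref{expected-norm2}) --- does not close this: pointwise domination of the book gives no control on variances of spatial \emph{differences}, and once orthogonality is lost the only generic way to combine the $k \leq T/\Delta t^{(n)}$ per-event bounds (each of second moment $\mathcal{O}\left((\Delta p^{(n)})^2\Delta x^{(n)}\right)$) is the triangle inequality in $L^2(\mathbb{P})$, which yields $k^2(\Delta p^{(n)})^2\Delta x^{(n)} \asymp \Delta p^{(n)}/\Delta t^{(n)} \asymp 1/\Delta x^{(n)} \to \infty$ instead of the required $\mathcal{O}(\Delta p^{(n)})$. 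What is needed --- and what the paper's proof supplies --- is a smallness estimate for the cross terms: in the representation of Lemma \ref{lemma-B1} written in absolute coordinates, each cancellation term $b^{(n)}_{i,j}$ has (conditional) mean of order $(\Delta x^{(n)})^2$, which follows from the Lipschitz estimate (\ref{Lip-estimate}) for $f^C$, independence of the event dynamics from volumes, and the expected-height-difference bound (\ref{expected-norm3}); the off-diagonal terms then contribute only $\mathcal{O}\left(\Delta p^{(n)}(\Delta x^{(n)})^2\right)$, while the diagonal ones, controlled by (\ref{expected-norm2}), contribute $\mathcal{O}\left((\Delta p^{(n)})^2\right)$. (Equivalently, one can centre each cancellation contribution at its conditional mean --- martingale differences, whose variances do sum --- and absorb the accumulated conditional means by a discrete Gronwall iteration; either way, the mechanism is the smallness of the means, not domination.) Without such an estimate, your claim that cancellations ``do not spoil the $\mathcal{O}(\Delta p^{(n)})$ variance bound'' is unsubstantiated, and the third and fourth lines of the lemma remain unproven.
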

\begin{proof}
The first two estimates follow from boundedness of the density functions $f^I$ along with independence of the event dynamics from volumes and (\ref{expected-norm}). In order to establish the third and fourth estimate we have to prove that 
\[
	\sup_{n,k} \left\{  \Delta p^{(n)} \cdot \mathbb{E} \left[ \| T_{+}^{(n)}\left({\eta}_{v_b,k}^{(n)}\right)-{\eta}_{v_b,k}^{(n)} \|_{L^2}^2  \right] \right\}
		 = \mathcal{O}\left( (\Delta v^{(n)})^{2 \alpha} + (\Delta v^{(n)})^{2 - \alpha}\right).
\]
To this end we use a representation of  $T^{(n)}_+ \eta^{(n)}_{v_b,k} - \eta^{(n)}_{v_b,k}$ as in Lemma \ref{lemma-shift} but in {\sl absolute} rather than {\sl relative} coordinates. This means that the shift terms drop out of the representation but the probabilities of placements and cancellations need to be properly adjusted. 

Assumption \ref{Assumption-1} allows us to bound the impact of the initial condition (\ref{eq:DeltaEtaInd3}) by a term of the order $(\Delta x^{(n)})^2.$
To compute the norm of the sum in (\ref{eq:DeltaEtaInd2}) we need to compute a term of the form
\[
	\left( \frac{\Delta v^{(n)}}{\Delta x^{(n)}} \right)^2 \mathbb{E}  
	\sum_{j \in \mathbb{Z}} \left( \sum_{i=1}^k a^{(n)}_{i,j} \right)^2 \cdot \Delta x^{(n)} 
\] 
where 
\[
	a^{(n)}_{i,j} =
	\left\{ 
	\begin{array}{ll} 1 & \mbox{if } \pi^D_i \in [(j-1)\cdot \Delta x^{(n)}, j \cdot \Delta x^{(n)}) \\
	-1 & \mbox{if } \pi^D_i \in [j \cdot \Delta x^{(n)}, (j+1) \cdot \Delta x^{(n)}) \\
	0 & \mbox{else}
	\end{array}
	\right. .
\]
In particular there exists a constant $K< \infty$ such that
\begin{eqnarray*}
	\mathbb{E}  \left( a^{(n)}_{i,j} \right)^2 & \leq & K \Delta x^{(n)} {\bf 1}_{\{|j-\eta^{(n)}_{\gamma,i}| \leq 1\}}  \\
	\left| \mathbb{E}  a^{(n)}_{i,j}  \right| 
	& \leq & K \left( \Delta x^{(n)} \right)^2 {\bf 1}_{\{|j-\eta^{(n)}_{\gamma,i}| \leq 1 \}}
%
\end{eqnarray*}
where the second inequality follows from (\ref{Lip-estimate}); the indicator functions account for the representation of volumes in {\sl absolute} coordinates. 
Using the fact that the random variables $\pi^D_k$ have compact support and that events are conditionally independent through time, we can now argue as in the proof of Lemma \ref{lemma-bounded} to deduce that:
\begin{eqnarray*}
	\Delta p^{(n)} \left( \frac{\Delta v^{(n)}}{\Delta x^{(n)}} \right)^2 \mathbb{E}  
	\sum_{j \in \mathbb{Z}} \left( \sum_{i=1}^k a^{(n)}_{i,j} \right)^2 \cdot \Delta x^{(n)} 
	& \leq & K \Delta p^{(n)} \frac{(\Delta v^{(n)})^2}{\Delta x^{(n)}} \left( \frac{1}{\Delta v^{(n)}} + \frac{1}{(\Delta v^{(n)})^2} \frac{1}{\Delta x^{(n)} } (\Delta x^{(n)})^4 \right) \\
	& = & K  \left((\Delta v^{(n)})^{2\alpha} + (\Delta v^{(n)})^{2- \alpha} \right).
\end{eqnarray*}

To compute the norm of the sum in (\ref{eq:DeltaEtaInd}) we need to compute a similar term, but with $a^{(n)}_{i,j}$ replaced by 
\[
	b^{(n)}_{i,j} =
	\left\{ 
	\begin{array}{ll} \eta^{(n),j}_{v_b,k} & \mbox{if } \pi^D_i \in [(j-1)\cdot \Delta x^{(n)}, j \cdot \Delta x^{(n)}) \\
	-\eta^{(n),j-1}_{v_b,k} & \mbox{if } \pi^D_i \in [ j \cdot \Delta x^{(n)}, (j+1) \cdot \Delta x^{(n)}) \\
	0 & \mbox{else}
	\end{array}
	\right. .
\]
Using (\ref{expected-norm2}) we have again that $\mathbb{E}  \left( a^{(n)}_{i,j} \right)^2 \leq K \Delta x^{(n)} {\bf 1}_{\{|j-\eta^{(n)}_{\gamma,i}| \leq 1\}}$. Using independence of the event dynamics from volumes along with (\ref{expected-norm3}) and Lipschitz continuity of $f^C$ we also obtain a constant $K < \infty$ such that:
\begin{eqnarray*}
	\left| \mathbb{E}  a^{(n)}_{i,j}  \right| & \leq & 
	\left| 
	\mathbb{P}[\pi^D_i \in [j\cdot \Delta x^{(n)}, (j+1) \cdot \Delta x^{(n)})] \cdot \mathbb{E} \eta^{(n),j}_{v_b,i} -
	\mathbb{P}[\pi^D_i \in [(j-1) \cdot \Delta x^{(n)}, j \cdot \Delta x^{(n)})] \cdot \mathbb{E} \eta^{(n),j-1}_{v_b,i}
	\right| \\ 
	& \leq & K \left\{ \left( \Delta x^{(n)} \right)^2 + |\mathbb{E} \eta^{(n),j}_{v_b,i} - \mathbb{E} \eta^{(n),j-1}_{v_b,i}| 
	\cdot \mathbb{P}[\pi^D_i \in [(j-1) \cdot \Delta x^{(n)}, j \cdot \Delta x^{(n)})]  \right\}.
\end{eqnarray*}
Hence it follows from Corollary \ref{lemma-shift2} that
\[
	\left| \mathbb{E}  a^{(n)}_{i,j}  \right|  \leq K \left( \Delta x^{(n)} \right)^2 
\]
and the assertion follows as in the case of placements. 
\end{proof}
\end{appendix}

\section*{Acknowledgments.}
Most of this work was done while the first author was visiting CEREMADE; grateful acknowledgment is made for hospitality and financial support. Both authors acknowledge financial support through the SFB 649 ``Economic Risk'' and Deutsche Bank AG. We thank D\"orte Kreher for pointing out a gap in an earlier version and seminar participants at various institutions for valuable comments and suggestions.



\bibliographystyle{plain}
\bibliography{OBHPDE}

\end{document}